\documentclass{article}

\usepackage{amsfonts,amsmath,amsthm,amssymb}
\usepackage{mathtools}
\usepackage{geometry}
\usepackage{float}
\usepackage{xcolor}
\usepackage{url}
\usepackage{circuitikz}
\usepackage{subfigure}
\usepackage{subcaption}
\usepackage{authblk}
\usepackage{natbib}

\usetikzlibrary{arrows.meta, decorations.pathreplacing}

\usepackage{hyperref}
\usepackage{enumitem}
\renewcommand{\P}{\mathbb{P}}
\renewcommand{\d}{\mathrm{d}}

\makeatletter
\renewenvironment{proof}[1][\proofname] {\par\pushQED{\qed}\normalfont\topsep6\p@\@plus6\p@\relax\trivlist\item[\hskip\labelsep\bfseries#1\@addpunct{.}]\ignorespaces}{\popQED\endtrivlist\@endpefalse}
\makeatother

\title{
When Hedging Changes the Payoff: Option Replication with Price Impact and Execution Costs
}

\author[$\mathsection$]{\normalsize David Itkin}
\author[$\dag$,$\ddag$]{\normalsize Leandro S\'anchez-Betancourt}

\affil[$\mathsection$]{Department of Statistics, London School of Economics and Political Science\newline
\href{mailto:d.itkin@lse.ac.uk}{d.itkin@lse.ac.uk}}
\affil[$\dag$]{Mathematical Institute, University of Oxford}
\affil[$\ddag$]{Oxford-Man Institute of Quantitative Finance, University of Oxford\newline
\href{mailto:leandro.sanchezbetancourt@maths.ox.ac.uk}{leandro.sanchezbetancourt@maths.ox.ac.uk}}

\newcommand{\R}{\mathbb{R}}
\newcommand{\N}{\mathbb{N}}
\newcommand{\E}{\mathbb{E}}
\newcommand{\Q}{\mathbb{Q}}

\newcommand{\cash}{C}
\newcommand{\val}{\mathfrak{X}}
\newcommand{\valm}{\mathcal{Z}}
\newcommand{\pos}{\Delta}

\newcommand{\manip}{\mathcal{C}}

\newtheorem{theorem}{Theorem}
\newtheorem{lemma}[theorem]{Lemma}
\newtheorem{proposition}[theorem]{Proposition}
\newtheorem{corollary}[theorem]{Corollary}

\newtheorem{assumption}[theorem]{Assumption}

\newtheorem{defn}[theorem]{Definition}

\numberwithin{equation}{section}
\numberwithin{theorem}{section}

\begin{document}
\maketitle

\begin{abstract}
   Hedging a derivative by trading the underlying asset changes the payoff that the hedging intended to replicate. We study this phenomenon when trading generates  price impact and execution costs. 
   In a binomial model, we characterize replication through a fixed-point equation. In continuous time, we derive a nonlinear pricing PDE whose implicit terminal condition captures the nature of the moving target problem of the hedger. 
   For monotone convex Lipschitz payoffs (such as   calls and puts) we establish exact replication under midpoint execution costs.  Numerical experiments illustrate: (i) how price impact shifts the effective strike, (ii) the non-linear dependence of the option price on the number of contracts, (iii) how execution costs smooth terminal holdings, (iv) the extent to which the hedger's own trading can bring an otherwise worthless option into the money, and (v) we explain the spread and the shape of the limit order book in the options market based on the price impact and the shape of the limit order book of the underlying.
\end{abstract}

\paragraph{Keywords: option replication; price impact; execution costs; moving-target problem; continuous time; discrete time.}

\section{Introduction}

Since the seminal papers of \cite{black1973pricing} and \cite{merton1973theory}, derivative pricing has been built around the powerful principle of payoff replication. The classical framework treats the hedger as a price taker where the act of replication does not alter the payoff. 
For a sufficiently large trader, or a sufficiently illiquid market, this approximation of reality breaks down. Hedging trades move market prices, and when a derivative's payoff is determined at those prices, the hedge itself changes the payoff. Replication then becomes a moving-target problem in which the trading strategy, the settlement value, and the derivative price must be determined jointly.

The rapid growth of short-dated options, such as zero-days-to-expiry (0DTE) options,\footnote{0DTEs accounted for 59\% of S\&P 500 index option volume in 2025 and averaged 2.3 million contracts per day; see \cite{cboe2026volume}.} makes the moving-target problem particularly relevant. Moreover, to handle settlement it is important to distinguish the maturity date,  when the payoff is fixed, from the settlement date, when the payoff is paid. For example, Cboe's end-of-month S\&P 500 index options determine their payoff from closing prices at maturity and deliver cash on the following business day (see \citeauthor{cboeSPXsettlement}). 
The distinction between maturity ($T$ in our model) and settlement date ($T+$ in our model representing a time shortly after $T$) is a key modelling choice. This allows us to capture the unwinding of positions after the option's expiry, which the hedger undertakes to meet the cash settlement obligation associated with the impacted payoff.

We develop a framework for exact replication that incorporates price impact and execution costs while distinguishing between maturity and settlement dates. The observed asset price consists of a fundamental price and the price impact generated by the hedger's trading activity. We account separately for  price impact and execution costs, including the initial acquisition and final liquidation of the hedge. The payoff is fixed at maturity, while the outstanding position is liquidated for subsequent cash settlement. 
Consequently, replication requires a strategy whose liquidation value matches the payoff induced by that same strategy.

We first study a one-period binomial model with general impact and execution costs. This  serves as an important building block for the subsequent multiperiod and continuous-time sections. Here, replication reduces to a scalar fixed-point equation, for which we establish existence for Lipschitz payoffs and a sharp sufficient condition for uniqueness.  
Under linear impact and execution costs, we then construct a multiperiod replication scheme by backward induction. Despite the path dependence hedging introduces into observed prices, the strategy remains numerically tractable as it can be computed using the recombining fundamental-price tree (even though the impacted-price tree is not necessarily recombining, see Figure~\ref{fig:tree-impact-2}). 
Our model also explains the shape of the limit order book in the options market (including the spread). We show that the nonlinear shape of the book is inherited from the underlying asset (see Figure~\ref{fig:LOB-lambdas}).

In continuous time, we derive the wealth dynamics as a limit of discrete rebalancing and obtain a new nonlinear pricing partial differential equation (PDE) \eqref{eq:PDE} that is consistent with the formal limit of the binomial model. Its terminal condition \eqref{eq:terminal_condition} is itself an implicit ordinary differential equation (ODE), expressing the dependence of the payoff on the hedge. The verification result of Theorem~\ref{thm:verification} justifies the formal derivation and shows that sufficiently regular solutions to these equations lead to strategies that perfectly replicate the payoff. A central contribution is Theorem~\ref{thm:terminal}, which solves this terminal problem for monotone convex Lipschitz payoffs (such as calls and puts).
For the case of midpoint execution costs, Theorem~\ref{thm:PDE} proves classical well-posedness of this novel pricing PDE and obtains a stochastic representation. Together with the verification theorem, these results yield an admissible strategy that perfectly replicates the claim accounting for price impact and execution costs. This analysis thus provides a procedure for solving the replication problem: one first resolves the terminal feedback condition by solving the implicit ODE, and then determines the dynamic hedge by solving the PDE.

These results lead to financial insights on the pricing and hedging of options. The cost of replicating depends nonlinearly on contract size and for monotone payoffs always exceeds the frictionless cost (see Corollary~\ref{cor:BS_bound} for the discrete-time case and Theorem~\ref{thm:PDE}\ref{item:bbounds} for continuous-time).
For calls, the hedge shifts the effective strike, while execution costs above the midpoint smooth the discontinuity in the terminal holdings that is present in the frictionless case (see Figure~\ref{fig:terminal-condition-call}). Additionally, with explicit solutions for quadratic payoffs and numerical experiments for call options, we illustrate the effects of impact and execution costs on option prices and hedging strategies.

Our work builds on the theory of replication under price impact previously studied by \cite{frey1998perfect,bank2004hedging,loeper2018option,bouchard2016almost} and by \cite{becherer2024hedging}. Closest to our work is \cite{bouchard2016almost}, who study a continuous-time superhedging problem with permanent price impact and characterize the value function through a different nonlinear pricing PDE. Our formulation differs by separating fundamental price dynamics from price impact and fixing the payoff before the terminal hedge is liquidated. In particular, this leads to a novel terminal problem in our setting, as the value and terminal hedge must jointly satisfy an implicit ODE. We develop its construction alongside discrete- and continuous-time replication results. We show how the feedback between hedging and payoff changes replication costs, creates a nonlinear dependence on contract size, and can cause the hedger’s own trading to bring an option that would have otherwise expired worthless into the money (see Figure~\ref{fig:S<K<P}.

The effect that hedging has on the payoff also appears in \cite{gueant2017option}, where price impact affects the cash payoff within a utility-based pricing and partial-hedging problem. Related feedback mechanisms arise in client-dealer transactions. \cite{baldauf2022principal} study the design of contracts tied to market benchmarks that the dealer can influence through trading, while \cite{muhlekarbe2026prehedging} examine how pre-hedging an anticipated trade changes the client's execution outcome. These studies highlight the importance of accounting for the effect of hedging on the transaction price.

A complementary literature studies the trade-off between hedging risk and trading costs. \cite{BertsimasKoganLo2001} develop a recursive approach to minimizing replication error in incomplete markets, while \cite{KennedyForsythVetzal2009} balance jump risk and transaction costs in a dynamic hedging strategy. Execution costs and market impact also motivate the hedging approaches of \cite{rogers2010cost,almgren2016option,bank2017hedging,cartea2020hedging}, and the recent signature-based method of \cite{abijaber2025signature}.\footnote{Related attempts include the duality approach for superlinear frictions in \cite{GuasoniRasonyi2015},  the robust superreplication in \cite{OblojWiesel2021}, and the learning strategy in \cite{Buehler03082019,shi2023deep}.} Our results complement these approaches by identifying conditions under which hedging risk can be eliminated entirely (even when trading changes the payoff) and characterizing the resulting replication costs and strategies.

The organization of the paper is as follows. Sections~\ref{sec:one_period} and \ref{sec:multi_period} develop binomial models with price impact and execution costs. Section~\ref{sec:cont_time} presents the continuous-time framework and main replication results. Section~\ref{sec:numerics} provides explicit solutions and numerical experiments, and Section~\ref{sec:conclusion} concludes. All proofs are deferred to the appendix for better readability.

\section{One-period model} \label{sec:one_period}
\subsection{Setup}
We begin with the simplest setting of a one-period binomial model. As described in the introduction, a key feature of our model is that it separates the maturity and settlement times of contingent claims. As such, we consider three times: the initial time $t=0$, the \emph{maturity time} $t=T$ and the \emph{settlement time} $t=T+$, which occurs shortly after time $T$. We assume the sample space is binary: $\Omega = \{u,d\}$, and that there is one risk-free and one risky asset available for investment. The risk-free asset has rate of return $r >-1$, while the risky asset has initial price $S_0>0$ and \emph{fundamental} terminal value $S_T(\omega)$ for $\omega \in \Omega$. This is the value of the security that will prevail in the absence of the investor's trading, and we assume it satisfies the standard no-arbitrage condition, $S_T(d) < S_0(1+r) < S_T(u)$. In the frictionless case,  the unique risk-neutral probability $\Q$ is given by \[\Q(\{u\}) = q = 1- \Q(\{d\}) \quad  \text{with} \quad  q = \tfrac{S_0(1+r)-S_T(d)}{S_T(u)-S_T(d)}.\] Between the maturity time $T$ and the settlement time $T+$, there is no additional randomness and, consequently, we have $S_{T+} = S_T$.

We assume the investor's trades generate price impact. Concretely, if the investor enters into a position $\pos \in \R$  at time $t = 0$,  then for each $\omega\in\Omega$ the \emph{observed price} at the maturity time  $t=T$ is given by
\[P^\pos_T(\omega) = S_T(\omega) + I(\pos)\] for some nondecreasing function $I:\R \to \R$ with $I(0) = 0$. For example, we may consider the case of linear impact, $I(x)= \lambda\,x$, or square-root price impact, $I(x) = \lambda\,\mathrm{sign}(x)\,\sqrt{|x|}$, for $\lambda > 0$.  Figure \ref{fig:tree-impact-1} shows a diagram with these variables in the classical one-period tree for the case $\pos>0$.

\begin{figure}
\centering
\resizebox{0.4\textwidth}{!}{%
\begin{circuitikz}
\tikzstyle{every node}=[font=\normalsize]

\draw [->, >=Stealth] (5,9) -- (9,11);
\draw [->, >=Stealth] (5,9) -- (9,7);

\draw [->, >=Stealth, dotted] (5,9) -- (9,12.3);
\draw [->, >=Stealth, dotted] (5,9) -- (9,8.);

\node at (4.5,9) {$S_0$};
\node at (9.25,11.25) {$S_T(u)$};
\node at (9.25,6.75) {$S_T(d)$};

\node at (10.9,12.3) {$P^\pos_T(u) = S_T(u) + I(\pos)$};
\node at (10.9,8.) {$P^\pos_T(d) = S_T(d) + I(\pos)$};

\end{circuitikz}
}%
\caption{Single-period model price evolution with and without impact.}
\label{fig:tree-impact-1}
\end{figure}

\subsection{Wealth equation and absence of price manipulation}
We now discuss investment and the cost of trading.
To this end, we denote by $\cash_t$ the cash position at time $t \in \{0,T,T+\}$, which between time $0$ and time $T$ grows at rate $r$ with no interest accumulating between time $T$ and time $T+$. We assume that the investor starts with  a cash position of $\cash_0 = \val_0 \in \R$ and no holdings in the risky asset.
The \emph{signed execution cost} when trading $x\in\mathbb{R}$ shares will be denoted by $\phi(x)$, so that the cash needed to finance the transaction is $x(S_0 + \phi(x))$. The function $I$ governs how the observed price differs from the fundamental price, whereas $\phi$ governs the effective execution price of the investor's trade. 
 We require $\phi(0) = 0$ and for $\phi$ to be nondecreasing, with the midpoint specification $\phi(x) = \frac{1}{2}I(x)$ being particularly tractable. 

To facilitate the analysis to come, we need to specify how we compare the value of different portfolios. In this paper we will work with the \emph{liquidation value} process $\val$,
which aggregates the investor's cash holdings with the value of their stock position upon immediate liquidation. This gives
\begin{align} 
\val_T & = \cash_T + \pos\,\big(P_T^\pos + \phi(-\pos)\big)  = \Big(\val_0 - \pos\big(S_0 + \phi(\pos)\big)\Big)(1+r) + \pos\,\big(P_T^\pos + \phi(-\pos)\big). \label{eq:value1+}
\end{align} 
In this paper we study the replication problem, so we require the investor to liquidate their holdings for the cash settlement that occurs at time $T+$. Under this convention, at time $T+$ their wealth is held entirely in cash so that 
\begin{equation} \label{eq:terminal_cash_relationship}
\val_{T+} = \cash_{T+} = \val_T.
\end{equation}
This means all trades are round-trips, since the investor starts with no holdings, has $\pos$ shares at time $T$, and then again liquidates their position at time $T+$.

We now derive necessary and sufficient conditions on $\phi$ and $I$ so that the model does not allow for \emph{price manipulation}. The following definition for price manipulation is stated in a general form that encompasses the multiperiod and continuous-time frameworks of Sections~\ref{sec:multi_period} and \ref{sec:cont_time}. In the present one-period model an admissible strategy is simply any $\pos\in\mathbb{R}$, as the information available at time zero is $\mathcal{F}_0=\{\emptyset,\Omega\}$. However, in subsequent sections the definition of admissible strategy will require adaptivity to the information available. 
\begin{defn}
    The market is said to admit price manipulation if starting from initial value $\val_0 = 0$, there exists an admissible trading strategy $\pos$ such that $\E^\Q[\val_T] > 0$. 
\end{defn}

Under $\Q$, the discounted asset-price is a martingale so in the absence of price impact one would have $\E^\Q[\val_T] = 0$ for all admissible strategies $\pos$, whenever $\val_0 = 0$. As such, positive expected terminal wealth can only be attributed to the trader's own price impact,  which is price manipulation. A direct calculation shows that $\E^\Q[\val_T] = - \pos(\phi(\pos)(1+r) - \phi(-\pos) - I(\pos))$ for any $\pos \in \R$. 
Hence, in this one-period model, there is no price manipulation if and only if
\begin{equation} \label{eq:no_price_manipulation}
    x\,\manip(x) \geq 0, \quad  \text{for all } x \in \R, \qquad \text{where} \quad  \manip(x) =  \phi(x)(1+r) - \phi(-x) - I(x).
\end{equation}

\subsection{Replication of contingent claims} \label{sec:1period_replication}

The goal of this section is to investigate the replication of a European option with \emph{payoff function} $V:\R \to \R$ (for example $V(x) = N(x-K)^+$ for $N >0$ contracts and strike $K > 0$). 
Here the realized payoff depends on the observed price at the maturity time; that is, the realized payoff is $V(P^\pos_T)$. We see that, unlike the frictionless case, the payoff the holder of this security receives is influenced by the position $\pos$ taken. In other words, the replication problem is a fixed point problem, since the target replication value changes with the trading strategy. 
 In line with the wealth process definition of the previous section, we will focus on cash delivery, where the payoff value is determined at the maturity time $T$ and the seller delivers the payoff in cash at the settlement date $T+$.

To value the option $V$, we seek to find an initial value $\val_0$ and replicating strategy $\pos$ such that on the settlement date the payoff obligation is met. In our notation, and recalling that $\val_T = \val_{T+}$, this corresponds to requiring
\begin{equation}\label{eq: fixed point equation}
\val_T = V(P_T^\pos), \qquad \Q\text{-a.s.}
\end{equation}
Since the sample space is binary this leads to two equations for $\val_T$; one in the $u$ state and one in the $d$ state.
Substituting \eqref{eq:value1+} and subtracting the two equations leads to
\begin{equation} \label{eq:pos_fixed_point}
    \pos = \frac{V(P_T^\pos(u)) - V(P_T^\pos(d))}{P_T^\pos(u) - P_T^\pos(d)} = \frac{V(S_T(u) + I(\pos)) - V(S_T(d) + I(\pos))}{S_T(u)-S_T(d)},
\end{equation}
where in the final equality we used that $P_T^\pos (u) - P_T^\pos(d) = S_T(u) - S_T(d)$. Unlike the classic frictionless case, the right-hand side also depends on $\pos$ so that this is a fixed point equation. This captures the central feedback mechanism of the model, where the hedge changes the observed terminal price, which in turn changes the payoff that needs to be hedged.

Equation \eqref{eq:pos_fixed_point} exhibits different behaviour depending on the particular impact function $I$ and payoff function $V$, but is independent of the execution cost function $\phi$. 
The following result shows that if the gross payoff function $V$ is Lipschitz continuous then there exists at least one solution to the fixed point equation above. We also provide sufficient conditions for uniqueness. 
\begin{proposition}
    \label{thm: Lipschitz implies existence}
    If $I$ is continuous and $V$ is Lipschitz continuous with constant $L_V$
    then there exists a  solution to the fixed point equation \eqref{eq:pos_fixed_point}, and every solution $\pos^*$ satisfies $|\pos^*| \le L_V$. If, in addition, (i) $V$ is monotone, (ii) $I$ is Lipschitz continuous with constant $L_I$, and (iii) the relationship
    \begin{equation} \label{eq:uniqueness_condition}
        L_VL_I < S_T(u) - S_T(d)
    \end{equation} holds, then the solution to \eqref{eq:pos_fixed_point} is unique. 
\end{proposition}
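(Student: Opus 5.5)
Write $h := S_T(u) - S_T(d) > 0$ and define the map $F:\R\to\R$ by
\[
F(x) = \frac{V(S_T(u) + I(x)) - V(S_T(d) + I(x))}{h},
\]
so that solutions of \eqref{eq:pos_fixed_point} are exactly the fixed points of $F$.

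\textbf{Existence and the a priori bound.} Since $V$ is $L_V$-Lipschitz,
\[
|F(x)| \le \frac{L_V\,|S_T(u)-S_T(d)|}{h} = L_V \quad \text{for every } x\in\R .
\]
This gives $|\pos^*| = |F(\pos^*)| \le L_V$ for any fixed point $\pos^*$. Because $I$ and $V$ are continuous, $F$ is continuous. Consider $g(x) = x - F(x)$ on $[-L_V, L_V]$. We have $g(L_V) \ge 0$ and $g(-L_V) \le 0$, so the intermediate value theorem yields a zero of $g$, which is a solution. (Equivalently, one can apply Brouwer's theorem to $F$ mapping $[-L_V,L_V]$ into itself.) This part is routine.

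\textbf{Uniqueness.} The naive contraction estimate would be
\[
|F(x)-F(y)| \le \frac{2L_VL_I}{h}\,|x-y|,
\]
because each of the two terms of $F$ moves by at most $L_V L_I|x-y|$. That bound carries a factor $2$ and does not use the monotonicity of $V$. Condition \eqref{eq:uniqueness_condition} has no factor $2$, so the key step is to use monotonicity to remove it.

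Suppose $V$ is nondecreasing; the nonincreasing case is symmetric, replacing $V$ by $-V$. Fix $x,y$ and set $a = I(x) - I(y)$, so that $|a| \le L_I|x-y|$. Then
\[
h\bigl(F(x)-F(y)\bigr) = \bigl[V(S_T(u)+I(x)) - V(S_T(u)+I(y))\bigr] - \bigl[V(S_T(d)+I(x)) - V(S_T(d)+I(y))\bigr].
\]
Both brackets have the same sign as $a$, since $V$ is monotone and each argument is shifted by $a$. Each bracket has absolute value at most $L_V|a|$. The difference of two numbers of the same sign, each lying in $[0, L_V|a|]$ (or each in $[-L_V|a|, 0]$), has absolute value at most $L_V|a|$. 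Hence
\[
|F(x)-F(y)| \le \frac{L_VL_I}{h}\,|x-y|.
\]
Under \eqref{eq:uniqueness_condition} this constant is strictly less than $1$, so $F$ is a contraction on $\R$. Two fixed points $\pos_1,\pos_2$ would then satisfy $|\pos_1-\pos_2| < |\pos_1-\pos_2|$ unless they coincide, which gives uniqueness; the Banach fixed-point theorem also gives existence again.

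The only nontrivial idea is this sign argument that removes the factor $2$, and it is where monotonicity enters.
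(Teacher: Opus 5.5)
Your proof is correct. It differs from the paper's only in how the uniqueness estimate is packaged.

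For existence, the paper uses essentially your argument. It sets $f(x)=F(x)-x$, uses $|F|\le L_V$ to confine all zeros to $[-L_V,L_V]$ and to get $f(x)\to\mp\infty$ as $x\to\pm\infty$, and then applies continuity.

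For uniqueness, the paper shows that $f$ is strictly decreasing using a one-sided bound. For $x>y$ it bounds the $u$-state bracket by $L_V\bigl(I(x)-I(y)\bigr)\le L_VL_I(x-y)$. It then simply drops the $d$-state bracket, which is nonnegative because $V$ and $I$ are both nondecreasing (the latter is a standing assumption on $I$). This gives $f(x)-f(y)\le (x-y)\bigl(L_VL_I/(S_T(u)-S_T(d))-1\bigr)<0$.

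Your same-sign argument is the two-sided form of the same observation: monotonicity of $V$ is what removes the factor $2$. It gives somewhat more. You obtain a genuine contraction, so Picard iteration converges at rate $L_VL_I/(S_T(u)-S_T(d))$. You also never use that $I$ is nondecreasing, only that it is Lipschitz. The paper's one-sided estimate is exactly what uniqueness requires, and as written it relies on monotonicity of $I$.
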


    The condition \eqref{eq:uniqueness_condition} is tight in the sense that if $L_VL_I = S_T(u) - S_T(d)$ then the solution may not be unique. As a concrete example, consider linear impact $I(x) = \lambda x$ and $N > 0$ call option contracts with strike $K = S_T(u)$; that is, $V(x) = N(x-S_T(u))^+$. Then $L_V = N$, $L_I = \lambda$ and if $N\lambda = S_T(u) - S_T(d)$, then any $\pos^* \in [0,N]$ replicates the payoff $V$. 

We conclude this subsection by noting that there is a unique initial value for the replicating portfolio, $\val_0^*$, corresponding to any solution $\pos^*$.
It turns out that $\val_0^*$ has an interpretable representation, from which we can deduce that the cost of replicating any option with a monotone payoff increases in this setting relative to the frictionless case. 

\begin{proposition} \label{prop:replicating_value} For any solution $\pos^*$ to \eqref{eq:pos_fixed_point}, the associated cost of the replicating portfolio is uniquely determined and given by
    \begin{equation}\label{eq: general val_0 1 period case}
        \val_0^* = \tfrac{1}{1+r} \big(\mathbb{E}^{\mathbb{Q}}[V(P_T^{\pos^*})] + \pos^*\manip(\pos^*)\big),
    \end{equation}
    where $\manip$ is as in  \eqref{eq:no_price_manipulation}.
    If \eqref{eq:no_price_manipulation} holds, then $\val_0^* \ge \mathbb{E}^{\mathbb{Q}}[V(P_T^{\pos^*})]/(1+r)$. If, in addition,  $V$ is monotone,  then $\val_0^* \geq  \E^\Q[V(S_T)]/(1+r)$, where the right-hand side is the unique no-arbitrage price of the claim $V$ in the frictionless setting. 
\end{proposition}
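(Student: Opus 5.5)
The plan is to take $\Q$-expectations of the two replication equations \eqref{eq: fixed point equation}, which makes all the dependence on the fundamental price collapse into the manipulation function $\manip$ from \eqref{eq:no_price_manipulation}.

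\emph{Uniqueness and the representation \eqref{eq: general val_0 1 period case}.} Fix a solution $\pos^*$ of \eqref{eq:pos_fixed_point}. For each $\omega\in\{u,d\}$, \eqref{eq:value1+} shows that $\val_T(\omega)$ is affine in $\val_0$ with slope $1+r>0$. So each of the two state equations $\val_T(\omega)=V(P_T^{\pos^*}(\omega))$ determines $\val_0$ uniquely. Since $\pos^*$ solves \eqref{eq:pos_fixed_point}, the two resulting values coincide: this is exactly how \eqref{eq:pos_fixed_point} was obtained, by subtracting the two equations. Hence $\val_0^*$ is well defined and unique.

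To compute it, I will take $\E^\Q$ of $\val_T=V(P_T^{\pos^*})$ and substitute \eqref{eq:value1+}, using $P_T^{\pos^*}=S_T+I(\pos^*)$ and the martingale identity $\E^\Q[S_T]=S_0(1+r)$. The terms $\pm\pos^* S_0(1+r)$ cancel, leaving
\[
\E^\Q[V(P_T^{\pos^*})]=\val_0^*(1+r)-\pos^*\bigl(\phi(\pos^*)(1+r)-\phi(-\pos^*)-I(\pos^*)\bigr)=\val_0^*(1+r)-\pos^*\manip(\pos^*).
\]
Solving for $\val_0^*$ gives \eqref{eq: general val_0 1 period case}. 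This is the same computation as the one behind the displayed formula for $\E^\Q[\val_T]$ preceding \eqref{eq:no_price_manipulation}, now with $\val_0\neq 0$.

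\emph{The lower bounds.} If \eqref{eq:no_price_manipulation} holds, then $\pos^*\manip(\pos^*)\ge 0$, and the first bound $\val_0^*\ge \E^\Q[V(P_T^{\pos^*})]/(1+r)$ follows immediately. For the frictionless comparison, it suffices to show that $V(S_T+I(\pos^*))\ge V(S_T)$ pointwise. This is the only step requiring an argument, and I will do it by cases on the direction of monotonicity.

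Suppose first that $V$ is nondecreasing. The right-hand side of \eqref{eq:pos_fixed_point} is a difference quotient of $V$ with positive denominator $S_T(u)-S_T(d)$, so $\pos^*\ge 0$. Since $I$ is nondecreasing with $I(0)=0$, this gives $I(\pos^*)\ge 0$, and hence $V(S_T+I(\pos^*))\ge V(S_T)$.

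If instead $V$ is nonincreasing, the same reasoning gives $\pos^*\le 0$ and $I(\pos^*)\le 0$. Shifting the argument of a nonincreasing $V$ downward again increases it, so $V(S_T+I(\pos^*))\ge V(S_T)$ once more.

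Taking $\E^\Q$ and combining with the first bound yields $\val_0^*\ge \E^\Q[V(S_T)]/(1+r)$, which is the frictionless price. The main point to get right is this sign alignment: the hedge always moves the observed price in the direction that raises a monotone payoff.
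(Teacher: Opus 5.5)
Your proof is correct and is essentially the paper's argument: both obtain \eqref{eq: general val_0 1 period case} by linearly combining the two state equations. The paper uses the tilted weight $\alpha = q + \manip(\pos^*)/(S_T(u)-S_T(d))$, which cancels the position terms, and then applies \eqref{eq:pos_fixed_point} to turn the tilt into $\pos^*\manip(\pos^*)$. You use the risk-neutral weight $q$ and $\E^\Q[S_T]=S_0(1+r)$ directly, and you also make the consistency and uniqueness of $\val_0^*$ explicit, which the paper leaves implicit; your sign argument for the lower bounds is the same as the paper's.
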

The representation \eqref{eq: general val_0 1 period case} decomposes the replication cost into the discounted expected payoff at the impacted price plus the  discounted expected round-trip cost of the hedge.
When the fixed-point equation \eqref{eq:pos_fixed_point} admits multiple solutions, their corresponding replicating costs may differ, in which case we focus on the one with lowest cost. For monotone payoffs this corresponds to the \emph{minimal-in-absolute-value} position at the maturity time,
\begin{equation}  \label{eq:smallest_solution}
    \pos^*_{\min} \in \arg\min\{|\pos^*|: \pos^* \text{ solves } \eqref{eq:pos_fixed_point}\},
\end{equation}
provided the expected round-trip cost function $x \mapsto x\mathcal{C}(x)$ is nondecreasing in $|x|$ (when $I$ and $\phi$ are linear functions, this is equivalent to the absence of price manipulation condition \eqref{eq:no_price_manipulation}).
 Indeed, when $V$ is monotone any solution $\pos^*$ to \eqref{eq:pos_fixed_point} shares the same sign $\varepsilon \in \{-1,1\}$ as the slope of $V$. Since $x \mapsto V(S_T(\omega) + I(x))$ is monotone in the same direction as $V$ when $\mathrm{sign}(x) = \varepsilon$, we see from \eqref{eq: general val_0 1 period case} that $\pos^*_{\min}$ leads to the minimal replicating cost. 
For this reason, in the examples below, we focus on identifying $\pos^*_{\min}$.

\subsection{Examples}
We now consider several examples.

\subsubsection{Call options and linear impact} \label{sec:call_linear}
First,  we study the call option  case $V(x) = N(x-K)^+$ for strike price $K  >0$ and number of contracts $N \in \R$. In this subsection we consider the linear impact specification $I(x) = \lambda\,x$ for impact parameter $\lambda > 0$. 
\begin{proposition}\label{prop:call_linear} The fixed point equation \eqref{eq:pos_fixed_point} has at least one solution, and this solution is unique if $\lambda N < S_T(u)-S_T(d)$. The minimal-in-absolute-value solution \eqref{eq:smallest_solution} is given by 
\[ \pos^*_{\min} = 
\begin{cases}
   0,
& K\ge S_T(u), \\
\frac{N(S_T(u)-K)}
      {S_T(u)-S_T(d)-\lambda N},
& S_T(d)+\lambda N\le K<S_T(u), \\
N,
& K < \min\{S_T(d)+\lambda N, S_T(u)\},
\end{cases}
\]
and $\val_0^*$ is given by \eqref{eq: general val_0 1 period case} with $\pos^*_{\min}$ in place of $\pos^*$.
\end{proposition}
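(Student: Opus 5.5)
The plan is to work directly with the scalar map
\[
g(\pos) = \frac{N\big[(S_T(u)+\lambda\pos-K)^+ - (S_T(d)+\lambda\pos-K)^+\big]}{S_T(u)-S_T(d)},
\]
so that \eqref{eq:pos_fixed_point} reads $\pos = g(\pos)$. Write $D = S_T(u)-S_T(d) > 0$, and take $N>0$ as the case formulas implicitly do; $N=0$ is trivial.

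\textbf{Existence and uniqueness.} These follow directly from Proposition~\ref{thm: Lipschitz implies existence}. The payoff $V(x)=N(x-K)^+$ is monotone and Lipschitz with $L_V=N$, and $I(x)=\lambda x$ is Lipschitz with $L_I=\lambda$. Condition \eqref{eq:uniqueness_condition} becomes $\lambda N<D$. The same proposition gives $|\pos^*|\le N$, and $g\ge 0$ gives $\pos^*\ge 0$. Hence every solution lies in $[0,N]$, and the minimal-in-absolute-value solution is simply the smallest nonnegative fixed point.

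\textbf{Structure of $g$ on $[0,N]$.} The map $g$ is nondecreasing and piecewise linear with three regimes:
\begin{itemize}
\item $g=0$ when $\lambda\pos\le K-S_T(u)$;
\item $g(\pos)=N(S_T(u)-K+\lambda\pos)/D$ when $K-S_T(u)\le\lambda\pos\le K-S_T(d)$;
\item $g=N$ when $\lambda\pos\ge K-S_T(d)$.
\end{itemize}

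\textbf{Case $K\ge S_T(u)$.} Here $g(0)=0$, so $\pos=0$ is a fixed point. It is obviously the minimal one.

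\textbf{Case $K<S_T(u)$.} Now $g(0)>0$, so $0$ is not a fixed point. For $\pos\ge 0$ we have $\lambda\pos>K-S_T(u)$, so the flat-zero regime is excluded and only the other two regimes remain.

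\emph{Middle regime.} A fixed point must satisfy $\pos(D-\lambda N)=N(S_T(u)-K)$. The candidate is admissible only if $D>\lambda N$, which makes it positive. If $D=\lambda N$ there is no solution, since $S_T(u)\neq K$. If $D<\lambda N$ the candidate is negative. The admissibility requirement $\lambda\pos\le K-S_T(d)$ is the key algebraic step. After multiplying out, it becomes
\[
\lambda N\big(S_T(u)-K\big)\le (K-S_T(d))(D-\lambda N) \iff \lambda N\big(S_T(u)-S_T(d)\big)\le (K-S_T(d))D \iff \lambda N\le K-S_T(d).
\]
Conversely, $S_T(d)+\lambda N\le K<S_T(u)$ forces $\lambda N<D$. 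So the middle regime yields a solution exactly when $S_T(d)+\lambda N\le K<S_T(u)$, and it is the stated formula.

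\emph{Top regime.} The point $\pos=N$ is a fixed point iff $\lambda N\ge K-S_T(d)$.

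\textbf{Minimality.} When $S_T(d)+\lambda N\le K<S_T(u)$, the middle solution satisfies $\lambda\pos\le K-S_T(d)$ and $\pos\le N$. Since all fixed points lie in the two regimes just analysed, it is the smallest. On the boundary $K=S_T(d)+\lambda N$ the two candidates coincide, both equal to $N$. When $K<\min\{S_T(d)+\lambda N,S_T(u)\}$, the middle regime yields nothing, so $\pos=N$ is the only fixed point. The statement about $\val_0^*$ is then immediate from Proposition~\ref{prop:replicating_value}.

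\textbf{Main obstacle.} The only delicate point is the admissibility equivalence in the middle regime, together with the sign of $D-\lambda N$ in the degenerate subcases. I expect no other difficulty.
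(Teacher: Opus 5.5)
For $N>0$ your argument is correct and is essentially the paper's proof. Your three regimes of $g$ are exactly the paper's moneyness cases (both states out of the money, only state $u$ in the money, both in the money). In the mixed case you solve the same linear equation and check consistency, with your inequality chain doing the job of the paper's explicit formula for $S_T(\omega)+\lambda\pos^*-K$.

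The gap is the restriction to $N>0$, which the case formulas do not implicitly impose. Section~\ref{sec:call_linear} takes $N\in\R$, and Figure~\ref{fig:LOB-lambdas} uses negative $N$. The term $\min\{S_T(d)+\lambda N,S_T(u)\}$ and the uniqueness condition $\lambda N<S_T(u)-S_T(d)$ (rather than $\lambda|N|<S_T(u)-S_T(d)$) are written to cover both signs, and the paper's case analysis is sign-free. For $N<0$, four parts of your argument must change.
\begin{itemize}
\item \emph{Uniqueness.} Proposition~\ref{thm: Lipschitz implies existence} has $L_V=|N|$, so it gives uniqueness only when $\lambda|N|<S_T(u)-S_T(d)$. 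The stated condition $\lambda N<S_T(u)-S_T(d)$ holds for every $N<0$, so uniqueness must come from the case analysis itself.
\item \emph{Flat-zero regime.} Now $g\le 0$ and all solutions lie in $[N,0]$, so your exclusion of this regime via $\pos\ge 0$ fails. Replace it with the sign-free remark that $g=0$ there forces $\pos=0$, which contradicts $g(0)\neq 0$ when $K<S_T(u)$.
\item \emph{Mixed regime.} Here $D-\lambda N>D>0$ automatically and the candidate is negative. The lower consistency condition is no longer free: you need $S_T(u)+\lambda\pos^*-K=(S_T(u)-K)D/(D-\lambda N)\ge 0$, which holds because $K<S_T(u)$. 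Your upper chain goes through unchanged.
\item \emph{Case $K\ge S_T(u)$.} Any solution has $\pos\le 0$, hence $P^\pos_T(u)\le S_T(u)\le K$. Both states are then out of the money, so $\pos=0$ is the only solution.
\end{itemize}
With these additions the argument covers the full statement.
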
  
The following numerical example illustrates the result. We take $r = 0.01$ and for the execution cost specification we take 
\begin{equation} \label{eq:phi_square_root} 
\phi(x) = \begin{cases}
    \varphi\, \mathrm{sign}(x) \sqrt{|x|}, & |x| \le \frac{4\varphi^2}{\lambda^2}, \\
    \frac{\lambda}{2}x, & \text{otherwise},
\end{cases}
\end{equation} for parameter $\varphi > 0$. Here, the execution cost scales with the square root of the trade size, but for very large positions is chosen to be linear so that the no price manipulation condition \eqref{eq:no_price_manipulation} is globally satisfied. Figure~\ref{fig:LOB-lambdas} illustrates the average per unit execution price of the underlying asset $S$ and the call option $V$ as we vary the total number of contracts $N \in \R$. Observe that the concavity of the execution costs in the underlying market, prescribed by  $\phi$ via \eqref{eq:phi_square_root}, is inherited by the call option price.

\begin{figure}
    \centering
    \includegraphics[width=0.9\linewidth]{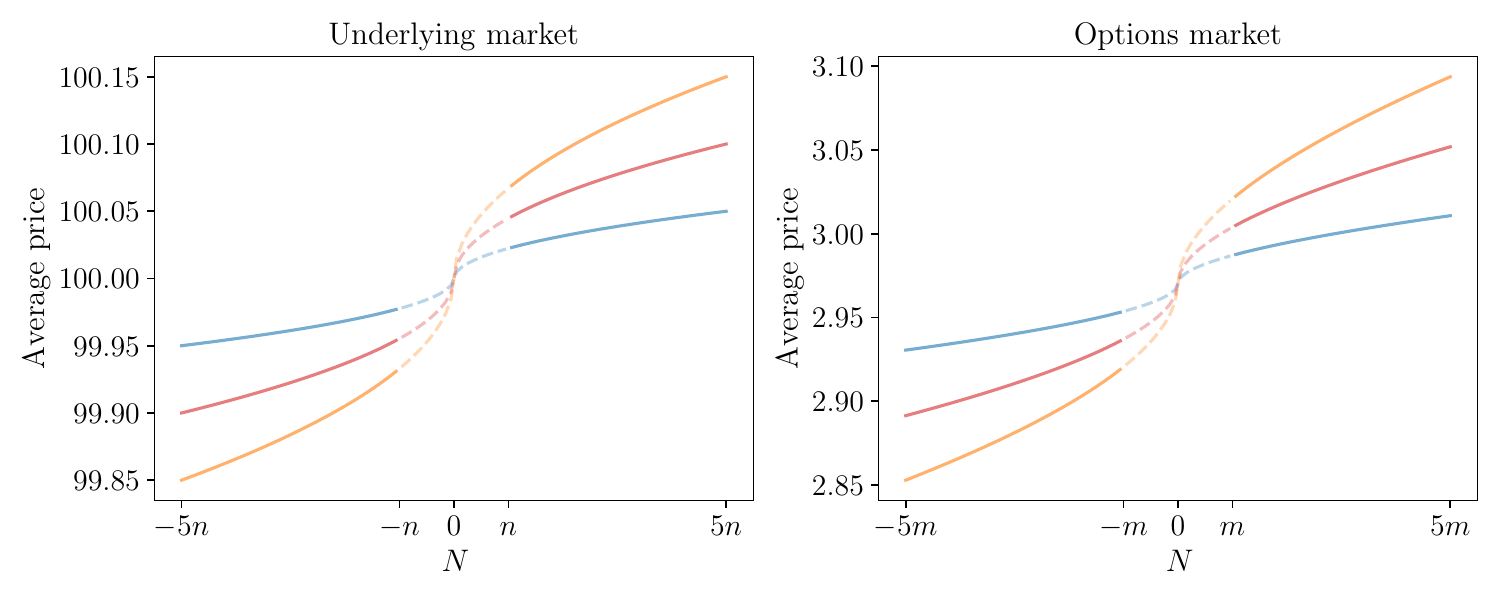}
    \includegraphics[width=0.6\linewidth]{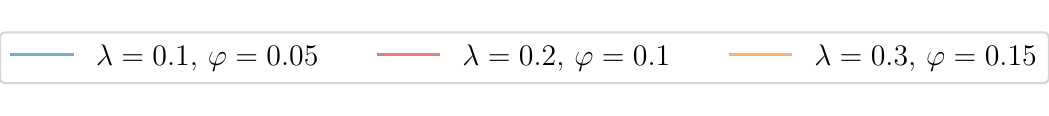}
    \caption{
    The left panel shows the amount paid (positive $N$) or received (negative $N$) when executing $N$ shares of the underlying asset. The execution price is given by $100 + \varphi\,\mathrm{sign}(N)\,\sqrt{|N|}$ and $\varphi=0.5\,\lambda$. The right panel shows the prices, using Proposition \ref{prop:call_linear}, for a call option with  $S_0=100$, $S_T(u) = 105$, $S_T(d) = 95$, $r =0.01$, $K= 100$. Here $I(x) = \lambda\,x$ and $\phi(x)$ is given by \eqref{eq:phi_square_root}. In the plots, $n$ represents a ``typical'' size in the top of book of the underlying and $m$ represents the same for the options market; thus, the dotted lines represent the region where the spread usually sits. The parameter $n=0.2$ is such that the region plotted satisfies $|x| \le 4\varphi^2\,\lambda^{-2}$. }
    \label{fig:LOB-lambdas}
\end{figure}

\subsubsection{Call option and square root impact}

In this section we derive an explicit replication strategy for the call option $V(x)= N(x-K)^+$ when the impact function is of square root type, $I(x) = \lambda\, \mathrm{sign}(x)\sqrt{|x|}$ 
for parameter $\lambda > 0$. 
\begin{proposition} \label{prop:call_sqrt} For $K < S_T(u)$ let 
\[c_* = \frac{\lambda N + \sqrt{\lambda^2 N^2 + 4|N|(S_T(u)-S_T(d))(S_T(u)-K)}}{2(S_T(u)-S_T(d))}.\] Then the trading strategy
\begin{equation} \label{eq:pos_call_square_root}
\pos^*_{\min} = \begin{cases}
   0, & K \ge S_T(u), \\
   \mathrm{sign}(N)c_*^2, &  S_T(d) + \lambda\, \mathrm{sign}(N)\sqrt{|N|} \le K < S_T(u), \\
   N, & K < \min\{S_T(d) + \lambda\, \mathrm{sign}(N)\sqrt{|N|},S_T(u)\}
\end{cases}
\end{equation}
is the minimal-in-absolute-value solution to the fixed point equation \eqref{eq:pos_fixed_point}, and $\val_0^*$ is given by \eqref{eq: general val_0 1 period case} with $\pos^*_{\min}$ in place of $\pos^*$.
\end{proposition}

 \subsubsection{Put option } 

We now record the corresponding results for the European put, $V(x) = N(K-x)^+$. These results follow from the formulas obtained for the call. Indeed, suppose $I$ is  an odd function, as is the case for both the linear and square root impact functions considered above. Then
$\pos$ solves \eqref{eq:pos_fixed_point} for the put payoff $x\mapsto N(K-x)^+$ with data
$(S_T(u),S_T(d),K)$ if and only if $-\pos$ solves \eqref{eq:pos_fixed_point} for the
call payoff $ x \mapsto N(x-\widetilde K)^+$ with data
$(\widetilde S_T(u),\widetilde S_T(d),\widetilde K)=(-S_T(d),-S_T(u),-K)$. The correspondence
maps minimal-in-absolute-value solutions to
minimal-in-absolute-value solutions, so that  Propositions~\ref{prop:call_linear} and
\ref{prop:call_sqrt} for the call option transfer directly to the put option.
\begin{proposition}
Let $I(x)=\lambda x$ for $\lambda>0$ and let $V(x)=N(K-x)^+$.  The minimal-in-absolute-value solution
to \eqref{eq:pos_fixed_point} is given by
\[
\pos^*_{\min}=
\begin{cases}
0, & K \le S_T(d), \\
-\frac{N(K-S_T(d))}{S_T(u)-S_T(d)-\lambda N},
  & S_T(d) < K \le S_T(u)-\lambda N,\\
-N, & K > \max\{S_T(u)-\lambda N,\ S_T(d)\},
\end{cases}
\]
and $\val_0^*$ is given by \eqref{eq: general val_0 1 period case} with $\pos^*_{\min}$ in place of $\pos^*$.
\end{proposition}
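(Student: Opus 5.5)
The plan is to derive the put result from Proposition~\ref{prop:call_linear} through the reflection correspondence described just before the statement, and then check the case boundaries directly. First I will verify the correspondence itself. With $I(x)=\lambda x$ odd, set $\widetilde S_T(u)=-S_T(d)$, $\widetilde S_T(d)=-S_T(u)$ and $\widetilde K=-K$; note that $\widetilde S_T(u)-\widetilde S_T(d)=S_T(u)-S_T(d)$. For the put payoff $V(x)=N(K-x)^+$ we have $V(S_T(\omega)+\lambda\pos)=N(-S_T(\omega)-\lambda\pos-\widetilde K)^+$. Substituting $\widetilde\pos=-\pos$ turns the right-hand side of \eqref{eq:pos_fixed_point} for the put into
\[
\frac{N(\widetilde S_T(d)+\lambda\widetilde\pos-\widetilde K)^+ - N(\widetilde S_T(u)+\lambda\widetilde\pos-\widetilde K)^+}{S_T(u)-S_T(d)}.
\]
This is exactly minus the call right-hand side evaluated at $\widetilde\pos$, with the tilded data. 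So $\pos$ solves the put equation if and only if $\widetilde\pos=-\pos$ solves the call equation. Since $|\pos|=|\widetilde\pos|$, the map also sends minimal-in-absolute-value solutions to minimal-in-absolute-value solutions.

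Next I will translate each case of Proposition~\ref{prop:call_linear}, using the formula with tilded data and then $\pos^*_{\min}=-\widetilde\pos^*_{\min}$:
\begin{itemize}
\item $\widetilde K\ge\widetilde S_T(u)$ becomes $-K\ge -S_T(d)$, that is $K\le S_T(d)$, and gives $\pos^*_{\min}=0$.
\item $\widetilde S_T(d)+\lambda N\le\widetilde K<\widetilde S_T(u)$ becomes $-S_T(u)+\lambda N\le -K<-S_T(d)$, that is $S_T(d)<K\le S_T(u)-\lambda N$. In this range
\[
\widetilde\pos^*_{\min}=\frac{N(\widetilde S_T(u)-\widetilde K)}{S_T(u)-S_T(d)-\lambda N}=\frac{N(K-S_T(d))}{S_T(u)-S_T(d)-\lambda N},
\]
and negating gives the stated formula.
\item $\widetilde K<\min\{\widetilde S_T(d)+\lambda N,\widetilde S_T(u)\}$ becomes $K>\max\{S_T(u)-\lambda N,S_T(d)\}$, and gives $\pos^*_{\min}=-N$.
\end{itemize}
The statement about $\val_0^*$ then follows at once from Proposition~\ref{prop:replicating_value}. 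That proposition applies to any solution of \eqref{eq:pos_fixed_point}, so in particular to $\pos^*_{\min}$.

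The main point to watch is the sign of $N$. Proposition~\ref{prop:call_linear} is stated for $N\in\R$, so the reflection argument works verbatim for any sign. Still, I will check that the middle-case denominator is handled consistently. When $N>0$ and $\lambda N\ge S_T(u)-S_T(d)$, the middle interval is empty, which matches the call case. I also need to confirm that the case regions partition $\R$ in the same way after the strict and non-strict inequalities are flipped by negation. If any doubt remains about relying on the call proposition for $N<0$, I would instead verify the three candidate values directly in \eqref{eq:pos_fixed_point}. I would then argue minimality by showing that no solution of smaller modulus exists, using the monotonicity in $\pos$ of the right-hand side on the relevant sign region.
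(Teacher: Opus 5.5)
Your proposal is correct and is the paper's own argument: the reflection $\pos\mapsto-\pos$, $(S_T(u),S_T(d),K)\mapsto(-S_T(d),-S_T(u),-K)$ (valid because $I$ is odd) turns the put equation into the call equation while preserving $|\pos|$. The three cases of Proposition~\ref{prop:call_linear} then translate into the stated ones, and $\val_0^*$ is read off from Proposition~\ref{prop:replicating_value}. One point is worth stating explicitly: the reflected strike and prices are negative. This is harmless, because the proof of Proposition~\ref{prop:call_linear} uses only $\widetilde S_T(u)>\widetilde S_T(d)$, never positivity of the strike or the prices, and it holds for every sign of $N$.
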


\begin{proposition}
Let $I(x)=\lambda\,\mathrm{sign}(x)\sqrt{|x|}$ for $\lambda>0$ and let $V(x)=N(K-x)^+$. For $K>S_T(d)$ let
\[
c_\star=\frac{\lambda N+\sqrt{\lambda^2N^2+4|N|(S_T(u)-S_T(d))(K-S_T(d))}}{2(S_T(u)-S_T(d))}.
\]
Then the trading strategy
\[
\pos^*_{\min}=
\begin{cases}
0, & K\le S_T(d),\\
-\mathrm{sign}(N)c_\star^2, & S_T(d)<K\le S_T(u)-\lambda\,\mathrm{sign}(N)\sqrt{|N|},\\
-N, & K>\max\{S_T(u)-\lambda\,\mathrm{sign}(N)\sqrt{|N|},\ S_T(d)\},
\end{cases}
\]
is the minimal in absolute value solution to the fixed point equation \eqref{eq:pos_fixed_point}, and $\val_0^*$ is given by \eqref{eq: general val_0 1 period case} with $\pos^*_{\min}$ in place of $\pos^*$.
\end{proposition}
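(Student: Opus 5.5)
The plan is to deduce the statement from Proposition~\ref{prop:call_sqrt} using the call--put correspondence described just before the linear-impact put proposition, and not to solve the put fixed point equation from scratch. First I will verify that correspondence in detail for $I(x)=\lambda\,\mathrm{sign}(x)\sqrt{|x|}$, which is odd. Set $\widetilde V(y)=N(y-\widetilde K)^+$ with $(\widetilde S_T(u),\widetilde S_T(d),\widetilde K)=(-S_T(d),-S_T(u),-K)$. Then $V(x)=N(K-x)^+=\widetilde V(-x)$.

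Take $\pos$ solving \eqref{eq:pos_fixed_point} for the put and put $\widetilde\pos=-\pos$. Oddness of $I$ gives $-(S_T(u)+I(\pos))=\widetilde S_T(d)+I(\widetilde\pos)$ and $-(S_T(d)+I(\pos))=\widetilde S_T(u)+I(\widetilde\pos)$. Also $\widetilde S_T(u)-\widetilde S_T(d)=S_T(u)-S_T(d)$. Substituting into \eqref{eq:pos_fixed_point} yields
\[
-\widetilde\pos=\frac{\widetilde V(\widetilde S_T(d)+I(\widetilde\pos))-\widetilde V(\widetilde S_T(u)+I(\widetilde\pos))}{\widetilde S_T(u)-\widetilde S_T(d)} .
\]
This is exactly \eqref{eq:pos_fixed_point} for the call with the tilde data. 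The map is an involution and $|\pos|=|\widetilde\pos|$, so it is a bijection between the two solution sets that sends minimal-in-absolute-value solutions to minimal-in-absolute-value solutions.

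Next I will check that Proposition~\ref{prop:call_sqrt} applies to the tilde data. Its proof only uses the fixed point equation together with the ordering $\widetilde S_T(d)<\widetilde S_T(u)$, and this ordering holds. The no-arbitrage condition involving $S_0(1+r)$ does not enter \eqref{eq:pos_fixed_point}, so it is irrelevant here. I will confirm this point when rereading that proof, and this is the only step I expect to need care.

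Then I will translate the three regimes. The condition $\widetilde K\ge\widetilde S_T(u)$ becomes $K\le S_T(d)$. The condition $\widetilde S_T(d)+\lambda\,\mathrm{sign}(N)\sqrt{|N|}\le\widetilde K<\widetilde S_T(u)$ becomes $S_T(d)<K\le S_T(u)-\lambda\,\mathrm{sign}(N)\sqrt{|N|}$. The remaining case becomes $K>\max\{S_T(u)-\lambda\,\mathrm{sign}(N)\sqrt{|N|},S_T(d)\}$. Since $\widetilde S_T(u)-\widetilde K=K-S_T(d)$, the constant $c_*$ computed from the tilde data equals $c_\star$. The call solutions $0$, $\mathrm{sign}(N)c_*^2$ and $N$ therefore map to the put solutions $0$, $-\mathrm{sign}(N)c_\star^2$ and $-N$.

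Finally, the formula for $\val_0^*$ follows directly from Proposition~\ref{prop:replicating_value} applied to the put with $\pos^*_{\min}$. That proposition holds for any solution of \eqref{eq:pos_fixed_point}, so no additional argument is needed.
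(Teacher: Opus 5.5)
Your proposal is correct and follows the same route as the paper. The paper obtains this put result from the same odd-$I$ correspondence $\pos\leftrightarrow-\pos$ with data $(-S_T(d),-S_T(u),-K)$, transferring Proposition~\ref{prop:call_sqrt}, and your translations of the regimes, of $c_*$ into $c_\star$, and of the solutions $0,\ \mathrm{sign}(N)c_*^2,\ N$ are all accurate. The caveat you flagged also resolves in your favour: the proof of Proposition~\ref{prop:call_sqrt} uses only the fixed point equation and the ordering $\widetilde S_T(d)<\widetilde S_T(u)$, and never positivity of the prices or the strike, nor the no-arbitrage condition.
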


\subsubsection{Digital call option} \label{sec:digital}

Here we work with linear price impact, $I(x) = \lambda x$, and study the payoff $V(x) = N1_{\{x \ge K\}}$ for strike $K > 0$ corresponding to $N$ units of a digital call option. This payoff is discontinuous, so the existence result of Proposition~\ref{thm: Lipschitz implies existence} does not apply. In fact, this claim is not always replicable.
\begin{proposition} \label{prop:digital_call}
There exists a solution to  \eqref{eq:pos_fixed_point} if and only if 
\begin{equation} \label{eq:bad_strikes} K \not \in J:= \begin{cases} 
\Big(S_T(d),\, \min\big\{S_T(d) + \frac{N\lambda}{S_T(u)-S_T(d)},\,S_T(u)\big\}\Big], & \text{if } N \ge 0, \\
\Big(\max\big\{S_T(d),\,S_T(u) + \frac{N\lambda}{S_T(u)-S_T(d)}\big\},\, S_T(u)\Big], & \text{if } N < 0.
\end{cases} 
\end{equation} Outside of this range of strikes the minimal-in-absolute-value solution is given by
\begin{equation} \label{eq:pos_digital} \pos^*_{\min} = \begin{cases}
    0, & \text{if }  K > S_T(u) \text{ or } K \leq S_T(d), \\
    \frac{N}{S_T(u)-S_T(d)}, & \text{if } K \in (S_T(d),S_T(u)] \setminus J,
\end{cases} 
\end{equation}
and $\val_0^*$ is given by \eqref{eq: general val_0 1 period case} with $\pos^*_{\min}$ in place of $\pos^*$.
 \end{proposition}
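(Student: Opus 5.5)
The plan is to solve \eqref{eq:pos_fixed_point} explicitly. The key observation is that for the digital payoff the right-hand side can take only two values. Write $D = S_T(u)-S_T(d) > 0$ and $a = N/D$. With $I(x)=\lambda x$, equation \eqref{eq:pos_fixed_point} reads
\[
\pos = \frac{N}{D}\Big(1_{\{S_T(u)+\lambda\pos \ge K\}} - 1_{\{S_T(d)+\lambda \pos \ge K\}}\Big).
\]
Since $S_T(u) > S_T(d)$, the difference of indicators lies in $\{0,1\}$. Hence any solution satisfies $\pos\in\{0,a\}$, and the problem reduces to checking these two candidates.

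\emph{Candidate $\pos = 0$.} It solves the equation if and only if the indicator difference vanishes at $\pos=0$, that is, if and only if $K \notin (S_T(d), S_T(u)]$. This gives the first line of \eqref{eq:pos_digital}, because $0$ is trivially minimal in absolute value whenever it is a solution.

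\emph{Candidate $\pos = a$.} It solves the equation if and only if $S_T(d) + \lambda a < K \le S_T(u) + \lambda a$. We only need this when $K \in (S_T(d),S_T(u)]$, and we split by the sign of $N$.
\begin{itemize}
\item If $N \ge 0$, then $a\ge 0$, so the upper inequality $K \le S_T(u) \le S_T(u)+\lambda a$ holds automatically. The candidate therefore fails exactly when $K \le S_T(d)+\lambda N/D$. Intersecting with $(S_T(d),S_T(u)]$ gives precisely the set $J$ in \eqref{eq:bad_strikes}.
\item If $N<0$, the lower inequality $K > S_T(d) \ge S_T(d)+\lambda a$ is automatic. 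The candidate fails exactly when $K > S_T(u) + \lambda N/D$, which again gives $J$.
\end{itemize}

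Combining the two cases: for $K\in(S_T(d),S_T(u)]$ there is a solution if and only if $K\notin J$, and that solution is $a = N/(S_T(u)-S_T(d))$. For $K$ outside $(S_T(d),S_T(u)]$ a solution ($\pos=0$) always exists, and $J \subset (S_T(d),S_T(u)]$. This proves both the existence characterization and the formula \eqref{eq:pos_digital} for $\pos^*_{\min}$. The claim about $\val_0^*$ then follows directly from Proposition~\ref{prop:replicating_value}, which holds for any solution $\pos^*$ of \eqref{eq:pos_fixed_point}; that proposition uses only the binary structure, not continuity of $V$.

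The only delicate point is bookkeeping at the endpoints. The indicator is $1_{\{x\ge K\}}$, so the strict and non-strict inequalities must be tracked carefully to get the half-open intervals in $J$ exactly right. The min/max in the definition of $J$ simply truncate to $(S_T(d),S_T(u)]$.
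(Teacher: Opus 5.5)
Your proposal is correct and is essentially the paper's argument. The paper splits into cases according to which states finish in the money, which is the same as your observation that the right-hand side of \eqref{eq:pos_fixed_point} takes only the values $0$ and $N/(S_T(u)-S_T(d))$; both then check consistency of each candidate and intersect with $(S_T(d),S_T(u)]$, and your sign split on $N$ spells out the step the paper calls ``readily deduce''. One phrase needs a caveat: ``$\pos=0$ solves if and only if $K\notin(S_T(d),S_T(u)]$'' holds only for $N\neq 0$, but for $N=0$ the two candidates coincide and $J=\emptyset$, so the conclusion is unaffected.
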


Proposition~\ref{prop:digital_call} shows that not all contingent claims can be replicated; that is, the model is incomplete. Indeed, for any $\lambda > 0$ and target number of contracts $N \ne 0$, there exists a range of strikes $J$ where no solution to \eqref{eq:pos_fixed_point} exists. The failure in replication is driven by the payoff discontinuity. 
For $N >0$ and strikes $K$ just above $S_T(d)$, the trade attempting to hedge the claim can itself push the observed price $P^\pos_T(d)$ above the strike causing the payoff of the option to jump from zero to $N$ in state $d$. As such, the replicating attempt itself causes the hedger to miss the mark.

\section{Multiperiod model} \label{sec:multi_period} 
\subsection{Setup} 
Here we tackle the multiperiod case. To provide a single unifying setup for multiperiod binomial models of arbitrary size, we will use the continuous time index set $ \mathcal{T} =  [0,T]\cup \{T+\}$ for a finite time horizon $T > 0$. As before, the time $T+$ is a discrete time point shortly after time $T$, which is the settlement time of contingent claims we consider. For an integer $M  > 0$ all market activity happens on the grid of times $\mathbb{T}:= \{t_m\}_{m =0,\dots,M}$, where $t_m = mh$ for $h = T/M$. We work on a probability space $(\Omega,\mathcal{F},(\mathcal{F}_t)_{t \in \mathcal{T}},\P)$, where the sample space is the discrete binomial model $\Omega = \{u,d\}^M$ with $0<d <u$.\footnote{With a slight abuse of notation, we use $u$ and $d$ both as the up and down outcomes, as well as the gross returns of the risky asset.} We will canonically write $\omega = \omega_1\dots\omega_M$, where $\omega_m \in \{u,d\}$ for each $m$. Additionally, with a superscript, we denote partially observed outcomes, $\omega^m = \omega_1\,\dots\,\omega_m$. The filtration satisfies  $\mathcal{F}_t = \mathcal{F}_{t_m}$ for $t_m  \le t < t_{m+1}$, where  \[\mathcal{F}_0 = \{\emptyset,\Omega\} \qquad \text{and} \qquad \mathcal{F}_{t_m} = \sigma(\{\omega \in \Omega: \omega^m = a\}: a \in \{u,d\}^m), \quad \text{for }m=1,\dots,M-1.\]
 All of the randomness is revealed by the maturity date leading to $\mathcal{F} = \mathcal{F}_T = \mathcal{F}_{T+} = \mathcal{P}(\Omega)$. Finally, as is standard, the physical probability measure $\P$ satisfies $\P(\{\omega\}) >0$ for each $\omega \in \Omega$, but will play a minor role in the analysis to come. 

The model has one risk-free and one risky asset available for investment. The risk-free asset pays interest rate $rh > -1$ over any time interval $[t_m,t_{m+1})$.  Consistent with the one-period model, we view the settlement date $T+$ as occurring very soon after the maturity time $T$, so that no interest accumulates between time $T$ and $T+$. The fundamental price process of the risky asset is denoted by $(S_t)_{t \in \mathcal{T}}$. It starts at some $S_0 > 0$, is piecewise constant, $(\mathcal{F}_t)_{t \in \mathcal{T}}$-adapted, right-continuous and jumps only at times $t_m \in \mathbb{T}\setminus\{0\}$ according to the relationship
\[S_{t_m}(\omega) = u\times S_{t_{m-1}}(\omega)1_{\{\omega_m = u\}} + d\times S_{t_{m-1}}(\omega)1_{\{\omega_m = d\}}, \quad \text{and} \quad S_{T+}(\omega) = S_T(\omega) \qquad \text{for all } \omega \in \Omega.
\] We denote by $\mathcal{S}_m$ the set of size $m+1$ consisting of all possible values that $S_{t_m}$ can take.
This setup encodes the frictionless binomial model with $M$ time steps, so we impose the standard no-arbitrage condition $d < 1+rh < u$. The unique risk-neutral measure is then given for all $\omega \in \Omega$ by 
\[\Q(\{\omega\}) =q^{n(\omega)}(1-q)^{M-n(\omega)}, \qquad \text{where} \quad q = \tfrac{1+rh-d}{u-d} \quad \text{and} \quad n(\omega) = {\textstyle \sum}_{m=1}^M 1_{\{\omega_m = u\}}.\]
	
Next, we introduce the price impact model, which generalizes the $M=1$ setup of Section~\ref{sec:one_period}. For simplicity, we restrict to linear permanent price impact and execution cost functions
\[I(x) = \lambda x, \qquad \phi(x) = \varphi x,\]
 for some parameters $\varphi,\lambda > 0$.\footnote{Restricting to linear permanent price impact is more than just a convenient choice, as the celebrated results of \citet{huberman2004price} and \citet{gatheral2010no} stipulate that permanent price impact must be linear for models with unlimited trading periods to be free of price manipulation.}  An admissible trading strategy $(\pos_{t})_{t\in\mathcal{T}}$ is assumed to be an adapted,  left-continuous, piecewise constant process that jumps only at trading times and additionally satisfies $\pos_0 = \pos_{T+} = 0$. Under the left-continuity convention, $\pos_{t_m}$ is the position the investor holds at time $t_m$, while the position held over the time period $(t_m,t_{m+1}]$ incorporating the new information revealed at time $t_m$ is $\pos_{t_m+} = \pos_{t_{m+1}}$, where we write $t-$ and $t+$ for the left and right limits of a time $t \in [0,T]$. We denote by $(P^\pos_t)_{t \in \mathcal{T}}$ the observed price process when utilizing the strategy $\pos$, and it is assumed to satisfy
 \begin{equation} \label{eq:observed_price}
    P^\pos_t = S_t + \lambda \pos_t, \qquad t \in \mathcal{T}.
\end{equation}
Although the binomial tree for $S$ is recombining, the same cannot be said for $P^\pos$; see Figure~\ref{fig:tree-impact-2} for a two-period illustration.
Additionally, due to our right-continuity convention on $S$, and left-continuity on $\pos$, the affected price process $P^\pos$ has left and right limits, but is not necessarily left- or right-continuous.

\begin{figure}
\centering
\resizebox{0.5\textwidth}{!}{%
\begin{circuitikz}
\tikzset{every node/.style={font=\normalsize}}

\coordinate (S0)   at (0,0);

\coordinate (S1u)  at (3.5,1.8);
\coordinate (S1d)  at (3.5,-1.8);

\coordinate (S2uu) at (7,3.6);
\coordinate (S2m)  at (7,0);
\coordinate (S2dd) at (7,-3.6);

\coordinate (P1u)  at (3.5,2.8);
\coordinate (P1d)  at (3.5,-0.8);

\coordinate (P2uu) at (7,5.1);
\coordinate (P2ud) at (7,1.5);
\coordinate (P2du) at (7,0.7);
\coordinate (P2dd) at (7,-2.9);

\draw[->, >=Stealth] (S0) -- (S1u);
\draw[->, >=Stealth] (S0) -- (S1d);

\draw[->, >=Stealth] (S1u) -- (S2uu);
\draw[->, >=Stealth] (S1u) -- (S2m);

\draw[->, >=Stealth] (S1d) -- (S2m);
\draw[->, >=Stealth] (S1d) -- (S2dd);

\draw[->, >=Stealth, dotted] (S0) -- (P1u);
\draw[->, >=Stealth, dotted] (S0) -- (P1d);

\draw[->, >=Stealth, dotted] (P1u) -- (P2uu);
\draw[->, >=Stealth, dotted] (P1u) -- (P2ud);

\draw[->, >=Stealth, dotted] (P1d) -- (P2du);
\draw[->, >=Stealth, dotted] (P1d) -- (P2dd);

\node[left] at (S0) {$S_{t_0}$};

\node[below] at (3.5,1.7) {$S_{t_1}(u)$};
\node[below] at (3.5,-1.9) {$S_{t_1}(d)$};

\node[right] at (S2uu) {$S_2(uu)$};
\node[right] at (S2m)  {$S_{t_2}(ud)=S_{t_2}(du)$};
\node[right] at (S2dd) {$S_{t_2}(dd)$};

\node[above] at (3.5,2.9)
{$P^\pos_{t_1}(u)$};

\node[above] at (3.5,-0.7)
{$P^\pos_{t_1}(d)$};

\node[right] at (P2uu) {$P^\pos_{t_2}(uu)$};
\node[right] at (P2ud) {$P^\pos_{t_2}(ud)$};
\node[right] at (P2du) {$P^\pos_{t_2}(du)$};
\node[right] at (P2dd) {$P^\pos_{t_2}(dd)$};

\node[below] at (0,-4.8)   {$t_0$};
\node[below] at (3.5,-4.8) {$t_1$};
\node[below] at (7,-4.8)   {$t_2$};

\end{circuitikz}
}%
\caption{Solid lines represent the recombining fundamental price process $S$, while dotted lines represent a possible affected price process $P^\pos$, which may not recombine because the investor's inventory $\pos$ may depend on the path taken. }
\label{fig:tree-impact-2}
\end{figure}

The cash process $(\cash_t)_{t \in \mathcal{T}}$ starts at some given value $\cash_{0}=\val_0$, is constant on the intervals $(t_m,t_{m+1})$,
and evolves according to
\begin{align*}  
\cash_{t_m} & = \cash_{t_m-}(1+rh), & m=1,\dots,M, \\
\cash_{t_m+} & = \cash_{t_m} - (\pos_{t_{m+1}} - \pos_{t_m})(P_{t_m}^\pos+\varphi(\pos_{t_{m+1}} - \pos_{t_m})), & m=0,\dots,M-1, \\
\cash_{T+} &  = \cash_T + \pos_T(P_T^\pos - \varphi \pos_T),
\end{align*}
as depicted in Figure~\ref{fig:cash}.
Under this convention, the interest on the bank balance is paid prior to the rebalancing of the position.
\begin{figure}
    \centering
    \begin{tikzpicture}[
    >=Latex,
    dot/.style={circle, fill=black, inner sep=1.7pt},
    opendot/.style={circle, draw=black, fill=white, inner sep=1.7pt},
    lab/.style={font=\small, align=center}
]
\draw[->, thick] (-0.5,0) -- (6.4,0) node[right] {time};
\draw[->, thick] (0,-0.3) -- (0,3.6);
\coordinate (leftlim)  at (3.0,1.0);
\coordinate (leftlimLabel)  at (3.1,0.7);
\coordinate (point)    at (3.0,2.0);
\coordinate (pointLabel)    at (2.9,1.86);
\coordinate (rightlim) at (3.0,3.0);
\coordinate (rightlimLabel)  at (4.0,3.3);
\draw[ultra thick] (0.4,1.0) -- (leftlim);
\draw[ultra thick] (rightlim) -- (6.0,3.0);
\draw[dashed] (3.0,0) -- (3.0,3.1);
\node[below] at (3.0,0) {$t_m$};
\node[opendot] at (leftlim) {};
\node[dot]     at (point) {};
\node[opendot] at (rightlim) {};
\node[left=2pt] at (leftlimLabel) {$\cash_{t_m-}$};
\node[right=4pt] at (pointLabel) {$\cash_{t_m}=\cash_{t_m-}\times (1+r\,h)$};
\node[left=-213pt] at (rightlimLabel){$\cash_{t_m+} = \cash_{t_m} -  (\pos_{t_{m+1}}-\pos_{t_m})(P^\pos_{t_m} + \varphi\,(\pos_{t_{m+1}}-\pos_{t_m}))$};
\draw[decorate, decoration={brace, amplitude=5pt}, thick]
    (2.75,1.25) -- (2.75,1.95)
    node[midway, left=4pt, lab]
    {interest payment};
\draw[decorate, decoration={brace, amplitude=5pt, mirror}, thick]
    (3.25,2.15) -- (3.25,2.85)
    node[midway, right=8pt, lab]
    {rebalancing};
\end{tikzpicture}
    \caption{Open circles represent limiting values and the filled circle is the value assigned at time $t_m$.}
    \label{fig:cash}
\end{figure}
\noindent The liquidation value $(\val_t)_{t \in \mathcal{T}}$ of the position is then defined to be 
\begin{equation} \label{eq:value_process_defn}
    \val_t  = \cash_t + \pos_t(P^\pos_t-\varphi \pos_t), \qquad t \in \mathcal{T},
\end{equation}
where the second term encodes the amount of cash the investor would receive if they liquidate their position $\pos_t$ at time $t$.
In particular, we again have the relationship \eqref{eq:terminal_cash_relationship} since no interest accrues and no additional randomness occurs between times $T$ and $T+$.
We close out this subsection by establishing conditions on the parameters that rule out price manipulation.

\begin{proposition}\label{prop:linear_no_manip}
Suppose that the parameters satisfy
\begin{equation}\label{eq:linear_price_manip} r \ge 0 \qquad \text{and} \qquad 
  2\varphi \ge \lambda .
\end{equation}
Then the market does not admit price manipulation.
\end{proposition}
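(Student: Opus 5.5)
The plan is to compute the discounted terminal liquidation value $\val_T/(1+rh)^M$ of an arbitrary admissible strategy with $\val_0=0$ in closed form. We then split it into a $\Q$-martingale part with zero expectation and a remainder that is pointwise nonpositive under \eqref{eq:linear_price_manip}. This gives $\E^\Q[\val_T]\le 0$, so there is no price manipulation.

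\textbf{Setup.} Write $B_m=(1+rh)^m$, $\Delta_m:=\pos_{t_m}$ for $m=0,\dots,M$, and set $\Delta_{M+1}:=\pos_{T+}=0$. Let $\delta_m=\Delta_{m+1}-\Delta_m$ for $m=0,\dots,M$, and $\widetilde S_m=S_{t_m}/B_m$. The final liquidation at $T+$ has trade $\delta_M=-\Delta_M$ and unit price $P_T^\pos-\varphi\Delta_M=P^\pos_{t_M}+\varphi\delta_M$. It therefore has exactly the same form as the interior rebalancings, discounted by $B_M$ because no interest accrues between $T$ and $T+$. Unrolling the cash recursion from $\cash_0=0$ and using \eqref{eq:terminal_cash_relationship} gives
\begin{equation*}
\frac{\val_T}{B_M}=-\sum_{m=0}^{M}\frac{\delta_m\,\big(S_{t_m}+\lambda\Delta_m+\varphi\delta_m\big)}{B_m},
\end{equation*}
where we used $P^\pos_{t_m}=S_{t_m}+\lambda\Delta_m$ from \eqref{eq:observed_price}.

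\textbf{Martingale part.} Apply summation by parts to the fundamental-price piece, using $\Delta_0=\Delta_{M+1}=0$:
\begin{equation*}
-\sum_{m=0}^{M}\delta_m\widetilde S_m=\sum_{m=0}^{M-1}\Delta_{m+1}\big(\widetilde S_{m+1}-\widetilde S_m\big).
\end{equation*}
The position $\Delta_{m+1}=\pos_{t_m+}$ is $\mathcal F_{t_m}$-measurable, and $\widetilde S$ is a $\Q$-martingale in the binomial model. Hence each summand has zero $\Q$-expectation; the sum is finite on a finite $\Omega$, so no integrability issues arise.

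\textbf{Impact and cost part.} Use the identity
\begin{equation*}
\Delta_m\delta_m=\tfrac12\big(\Delta_{m+1}^2-\Delta_m^2-\delta_m^2\big).
\end{equation*}
With it, the remaining terms become
\begin{equation*}
-\sum_{m=0}^{M}\frac{(\varphi-\tfrac{\lambda}{2})\,\delta_m^2}{B_m}\;-\;\frac{\lambda}{2}\sum_{m=0}^{M}\frac{\Delta_{m+1}^2-\Delta_m^2}{B_m}.
\end{equation*}
The first sum is $\le0$ because $2\varphi\ge\lambda$. For the second sum, a second summation by parts with $\Delta_0=\Delta_{M+1}=0$ turns it into
\begin{equation*}
-\frac{\lambda}{2}\sum_{m=1}^{M}\Delta_m^2\Big(\frac{1}{B_{m-1}}-\frac{1}{B_m}\Big).
\end{equation*}
This is $\le 0$ because $r\ge0$ makes $B_m$ nondecreasing.

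\textbf{Conclusion.} Taking $\Q$-expectations gives $\E^\Q[\val_T]\le 0$, since $B_M>0$. The only delicate step is the bookkeeping in the setup: confirming that the liquidation at $T+$ really has the same form as the interior trades, and that the indexing and discounting are consistent with the left-continuous convention for $\pos$. Once that closed form is in hand, the two summations by parts are routine.
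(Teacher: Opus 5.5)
Your proof is correct and is essentially the paper's argument. Both reduce to the identity \eqref{eq:discounted_liquidation_value}, in which the discounted terminal value is a zero-mean martingale transform minus a $(2\varphi-\lambda)$-weighted sum of squared trades minus an $rh\lambda$-weighted sum of squared positions. The difference is cosmetic: you unroll the cash recursion pathwise and treat the liquidation at $T+$ as an ordinary trade, whereas the paper telescopes the one-step conditional-expectation relation \eqref{eq:one_step_cond} for $\val$.
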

In the present discrete-time setup, the conditions \eqref{eq:linear_price_manip} are merely sufficient for the absence of price manipulation, but in the continuous-time limit they will also be necessary; see Proposition~\ref{prop:price_manip_cont_time} of Section~\ref{sec:cont_time}.

\subsection{Replication of a European claim}
We now seek to replicate the payoff of a European claim. As in the one-period model, we suppose that a payoff function $V:\R \to \R$ is given and we seek to find an admissible strategy $(\pos^*_t)_{t \in \mathcal{T}}$ so that its associated liquidation value process satisfies $\val^*_T = V(P_T^{\pos^*})$. We proceed by backward induction, noting that the base case corresponds precisely to the one-period model considered in Section~\ref{sec:one_period}. Indeed, at each penultimate node the one-period replication equation determines the terminal inventory $\pos^*_{t_M}$ held over the final interval $(t_{M-1},t_M]$, as well as the corresponding terminal wealth $\val^*_{t_M}$. To carry out the inductive step we first let $m \in \{1,\dots,M\}$ and a path $\omega^{m-1}$ observed up to the time $t_{m-1}$ be given. By evaluating \eqref{eq:value_process_defn} at the outcome $\omega_{m} = d$ and subtracting it from the same equation evaluated at $\omega_{m} = u$, we obtain 
\begin{equation} \label{eq:multiperiod_position_equation} 
\pos^*_{t_{m}}(\omega^{m-1}) = \frac{\val^*_{t_{m}}(\omega^{m-1}u) - \val^*_{t_{m}}(\omega^{m-1}d)}{S_{t_{m}}(\omega^{m-1}u) - S_{t_{m}}(\omega^{m-1}d)}.
\end{equation}
As in \eqref{eq:pos_fixed_point}, this is again a fixed point equation since $\val^*_{t_{m}}(\omega^{m})$ is typically a function of $\pos^*_{t_{m}}(\omega^{m-1})$. By the inductive hypothesis we assume that a solution $\pos_{t_{m}}^*(\omega^{m-1})$ exists, we substitute \eqref{eq:multiperiod_position_equation} back into \eqref{eq:value_process_defn} and after simplification (see the proof of Theorem~\ref{thm:binomial_replication} in Appendix~\ref{sec:multi_period_proofs} for the detailed derivation) arrive at an expression for the liquidation value at the previous time, 
\begin{equation} \label{eq:cash_value_discrete}
\val_{t_{m-1}}^* = \frac{1}{1+rh}\E^\Q_{t_{m-1}}[\val^*_{t_{m}}] + \pos^*_{t_{m}}\bigg(\frac{2\varphi-\lambda +\varphi rh}{1+rh} \pos^*_{t_{m}} -(2\varphi-\lambda) \pos^*_{t_{m-1}}\bigg).
\end{equation}
Substituting \eqref{eq:cash_value_discrete} into \eqref{eq:multiperiod_position_equation}, with $t_{m-1}$ in place of $t_m$  leads to a linear equation for $\pos^*_{t_{m-1}}(\omega^{m-2})$, which has a unique solution if and only if  
\begin{equation} \label{eq:nonzero_denom}
S_{t_{m-1}}(\omega^{m-2}u)  - S_{t_{m-1}}(\omega^{m-2}d) + (2\varphi-\lambda) \big(\pos^*_{t_m}(\omega^{m-2}u) -\pos^*_{t_m}(\omega^{m-2}d)\big) \ne 0.
\end{equation} In this case the solution is explicitly given by $\pos^*_{t_{m-1}}= f_{m-1}(S_{t_{m-2}})$, where $f_{m-1}$ is defined in Theorem~\ref{thm:binomial_replication} below. With $\pos^*_{t_{m-1}}$ determined we see that \eqref{eq:cash_value_discrete} determines the liquidation value process at the time $t_{m-1}$, completing the inductive step. As such, backwards induction leads to replication if (i) the initial fixed point equation \eqref{eq:pos_fixed_point} has a solution at each penultimate outcome $\omega^{M-1}$ and (ii) the positions obtained from the backward propagation through the binomial tree satisfy \eqref{eq:nonzero_denom}.
This analysis leads us to the following result.

\begin{theorem} \label{thm:binomial_replication} Suppose the absence of price manipulation condition \eqref{eq:linear_price_manip} holds and that a Lipschitz continuous payoff function $V:\R \to \R$ is given. For each $s \in \mathcal{S}_{M-1}$ the fixed-point equation
\begin{equation} \label{eq:fixed_point_algebraic}
x = \frac{V( su + \lambda x) - V( sd + \lambda x)}{ su -  sd}
\end{equation}
admits at least one solution. We select any such solution, denote it by $f_M(s)$ and define the function $g_M: \mathcal{S}_{M-1} \times \mathcal{S}_M \to \R$ via $g_M(s,s') = V(s' + \lambda f_M(s))$.
Next, we define $f_m:\mathcal{S}_{m-1} \to \R$ and $g_m:\mathcal{S}_{m-1} \times \mathcal{S}_{m} \to \R$ for $m=M-1,\dots,1$ recursively as
\begin{align} f_m(s) & = \frac{\E^\Q[g_{m+1}(su,S_{t_{m+1}})|S_{t_m}=su] - \E^\Q[g_{m+1}(sd,S_{t_{m+1}})|S_{t_m}=sd] }{(1+rh)(su - sd + (2\varphi-\lambda)(f_{m+1}(su) - f_{m+1}(sd)))} \nonumber \\
& \qquad  + \frac{(2\varphi - \lambda + \varphi rh)(f^2_{m+1}(su) - f^2_{m+1}(sd))}{(1+rh)(su - sd + (2\varphi-\lambda)(f_{m+1}(su) - f_{m+1}(sd)))}, \label{eq:fm}
\\
g_m(s, s') & = \frac{\E^\Q[g_{m+1}(s',S_{t_{m+1}})|S_{t_m} = s']}{1+rh} + f_{m+1}(s')\bigg(\frac{2\varphi - \lambda + \varphi rh}{1+rh}f_{m+1}(s') - (2\varphi - \lambda)f_m(s)\bigg). \label{eq:gm}
\end{align}
Finally, we set $f_0(S_0) = 0 $ and $g_0(S_0) = \frac{1}{1+rh}(\E^\Q[g_1(S_0,S_{t_1})] + (2\varphi - \lambda + \varphi rh)f_{1}(S_0)^2).$ If 
\begin{equation}\label{eq:denominator_finite}
    su-sd + (2\varphi-\lambda)\big(f_{m+1}(su) - f_{m+1}(sd)\big) \ne  0, \qquad \text{for all } m = 1,\dots,M-1, \quad s \in \mathcal{S}_{m-1}
\end{equation}
holds then \eqref{eq:fm} is well-defined and the trading strategy  $\pos^*_{t_m} = f_m(S_{t_{m-1}})$ for $m=1,\dots,M$ together with $\pos_0 = \pos_{T+} = 0$ replicates the payoff $V$ with $\val^*_0 = g_0(S_0)$ and $\val^*_{t_m} = g_m(S_{t_{m-1}},S_{t_m})$ for $m =1,\dots,M$, as its associated liquidation value process.
\end{theorem}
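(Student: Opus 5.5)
The proof runs by backward induction over the tree, and I would first fix the one-step algebra. For the strategy $\pos_{t_m}=f_m(S_{t_{m-1}})$, evaluate \eqref{eq:value_process_defn} at time $t_m$. Combine this with the cash recursion to write $\val_{t_m}$ in terms of $\val_{t_{m-1}}$, $\pos_{t_{m-1}}$, $\pos_{t_m}$ and $S_{t_m}$. Expanding $\cash_{t_{m-1}+}=\cash_{t_{m-1}}-(\pos_{t_m}-\pos_{t_{m-1}})(S_{t_{m-1}}+\lambda\pos_{t_{m-1}}+\varphi(\pos_{t_m}-\pos_{t_{m-1}}))$ and substituting $\cash_{t_{m-1}}=\val_{t_{m-1}}-\pos_{t_{m-1}}(S_{t_{m-1}}+(\lambda-\varphi)\pos_{t_{m-1}})$ gives
\[
\val_{t_m}=(1+rh)\Big[\val_{t_{m-1}}-\pos_{t_m}S_{t_{m-1}}+(\lambda-\varphi)\pos_{t_{m-1}}\pos_{t_m}\cdots\Big]+\pos_{t_m}\big(S_{t_m}+(\lambda-\varphi)\pos_{t_m}\big),
\]
where the omitted terms are all quadratic in the positions. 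The useful feature is that $\val_{t_m}$ is affine in $S_{t_m}$ with slope $\pos_{t_m}$, once the $\mathcal F_{t_{m-1}}$-measurable quantities are fixed. Two consequences follow. Differencing the $u$ and $d$ branches gives \eqref{eq:multiperiod_position_equation}. Taking $\E^\Q_{t_{m-1}}$ and using $\E^\Q_{t_{m-1}}[S_{t_m}]=(1+rh)S_{t_{m-1}}$ gives \eqref{eq:cash_value_discrete}. I would carry out this routine simplification once, in general, before the induction.

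For the base case, fix $s\in\mathcal S_{M-1}$. Existence of a solution to \eqref{eq:fixed_point_algebraic} is exactly Proposition~\ref{thm: Lipschitz implies existence}, applied with $S_T(u)=su$, $S_T(d)=sd$ and continuous $I(x)=\lambda x$, since $V$ is Lipschitz. Let $f_M(s)$ be such a solution. Define the target terminal values $\val_{t_M}:=g_M(S_{t_{M-1}},S_{t_M})=V(P_T)$. Then the affine structure together with the fixed-point property shows that a single $\mathcal F_{t_{M-1}}$-measurable pre-trade value $\val_{t_{M-1}}$ reaches both branch values. That value is given by \eqref{eq:cash_value_discrete}.

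The induction hypothesis at step $m$ is as follows. For each node at time $t_{m-1}$, with $\pos_{t_{m-1}}$ still unknown, the required value $\val_{t_{m-1}}$ equals the conditional expectation term in \eqref{eq:cash_value_discrete} minus $(2\varphi-\lambda)\pos_{t_m}\pos_{t_{m-1}}$. This is exactly $g_{m-1}(S_{t_{m-2}},S_{t_{m-1}})$ in \eqref{eq:gm} once we set $\pos_{t_{m-1}}=f_{m-1}(S_{t_{m-2}})$. Here the key point is that $\pos_{t_{m-1}}$ depends only on $S_{t_{m-2}}$ and not on the branch taken at $t_{m-1}$.

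Now impose \eqref{eq:multiperiod_position_equation} at level $m-1$: the value difference across the two children of a node at $t_{m-2}$ must equal $\pos_{t_{m-1}}$ times the price difference. The unknown $\pos_{t_{m-1}}$ appears on both sides, since it enters each child's value through $-(2\varphi-\lambda)f_m(\cdot)\pos_{t_{m-1}}$. So the condition is linear in $\pos_{t_{m-1}}$. Its coefficient is $su-sd+(2\varphi-\lambda)(f_m(su)-f_m(sd))$, possibly up to the factor $(1+rh)$ coming from how the quadratic term is grouped. Under \eqref{eq:denominator_finite} this coefficient is nonzero, and solving the linear equation yields the formula \eqref{eq:fm}. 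Matching the exact placement of the $(1+rh)$ factors between \eqref{eq:cash_value_discrete}, \eqref{eq:fm} and \eqref{eq:gm} is the main bookkeeping obstacle. I would resolve it by carefully re-deriving \eqref{eq:cash_value_discrete} and checking it against the definitions.

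At $m=0$, we have $\pos_0=0$, so the cross term vanishes and $g_0$ follows directly. Finally, I would verify the replication claim forward in time. Start from $\val_0=g_0(S_0)$ and use the admissible, adapted, left-continuous strategy $\pos_{t_m}=f_m(S_{t_{m-1}})$. Then $\val_{t_m}=g_m$ at every node, by the affine identity combined with the equations defining $f_m$ and $g_m$; hence $\val_T=V(P^{\pos^*}_T)$ and $\val_{T+}=\val_T$. This forward check avoids any circularity in the backward construction. The absence-of-manipulation hypothesis \eqref{eq:linear_price_manip} does not seem to be needed for the replication argument itself; I expect it matters only for the interpretation of the result and for the accompanying cost bounds.
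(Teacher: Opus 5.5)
Your proposal is correct and follows essentially the same route as the paper: backward induction with the base case from Proposition~\ref{thm: Lipschitz implies existence}, the recursion \eqref{eq:cash_value_discrete}, and a linear equation for $\pos^*_{t_{m-1}}$. You obtain \eqref{eq:cash_value_discrete} by taking $\E^\Q_{t_{m-1}}$ rather than by the paper's $\alpha$-weighting of the two branches, which is an equivalent and slightly cleaner step. The coefficient of $\pos^*_{t_{m-1}}$ in the linear equation is exactly $su-sd+(2\varphi-\lambda)(f_m(su)-f_m(sd))$, with the $(1+rh)$ in \eqref{eq:fm} coming from the right-hand side. Your forward check also makes explicit the sufficiency direction that the paper leaves implicit.

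One algebraic fix: after your substitution the bracket is $\val_{t_{m-1}}-\pos_{t_m}S_{t_{m-1}}+(2\varphi-\lambda)\pos_{t_{m-1}}\pos_{t_m}-\varphi\pos_{t_m}^2$. So the cross-term coefficient is $2\varphi-\lambda$, not $\lambda-\varphi$, which is the coefficient you in fact use later.
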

The nondegeneracy condition \eqref{eq:denominator_finite} is always satisfied for the midpoint cost regime $2\varphi = \lambda$. 
The upshot of this result is that despite a generic trading strategy $\pos$ introducing path dependence of the realized payoff $V(P_T^{\pos})$ (see Figure~\ref{fig:tree-impact-2}),
the replicating strategy $\pos^*$ and liquidation value process $\val^*$ do not inherit this complexity. Indeed, the replicating strategy at a given node depends only on the fundamental price at that node, while the associated liquidation value depends only on the current and previous fundamental price values. In the midpoint case $2\varphi = \lambda$, the dependence on the previous price vanishes and the situation resembles the frictionless case; albeit with the more involved expressions \eqref{eq:fm} and \eqref{eq:gm}.  This structure provides significant computational savings when it comes to numerically computing $(\pos^*,\val^*)$ relative to a truly path dependent situation. Indeed, the replication problem can be solved on the recombining fundamental binomial tree, avoiding the need to use the exponentially growing observed price process tree.

As a consequence of Theorem~\ref{thm:binomial_replication} we obtain the following corollary, which shows that replication of monotone payoffs is always more expensive in the present setting than the frictionless setting.
\begin{corollary} \label{cor:BS_bound}
Let $V$ be as in Theorem~\ref{thm:binomial_replication} and let $(\pos^*,\val^*)$ be an associated
replicating pair.  Setting $t_{M+1} = T+$, we then have
\begin{equation} \label{eq:multiperiod_value_decomposition}
  \val^*_0 = \E^\Q\bigg[\frac{V(P^{\pos^*}_T)}{(1+rh)^M}\bigg]
  + \E^\Q\bigg[\frac{2\varphi-\lambda}{2}\sum_{m=0}^{M}\frac{
      (\pos^*_{t_{m+1}}-\pos^*_{t_m})^2}{(1+rh)^m}
  + \frac{\lambda r h}{2}\sum_{m=1}^{M}\frac{(\pos^*_{t_m})^2}{(1+rh)^m}\bigg].
\end{equation}
In particular, under the no price manipulation condition \eqref{eq:linear_price_manip} we have
$\val^*_0 \ge \E^\Q[V(P^{\pos^*}_T)]/(1+rh)^M$. If, in addition, $V$ is monotone, then
$
  \val^*_0 \ge\E^\Q[V(S_T)]/(1+rh)^M$, where 
the right-hand side is the unique no-arbitrage price of $V$ in the frictionless $M$-period model.
\end{corollary}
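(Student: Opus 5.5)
The plan is to derive the decomposition \eqref{eq:multiperiod_value_decomposition} by telescoping the discounted liquidation value along the replicating strategy. Let $D_m := (1+rh)^{-m}$. The starting point is \eqref{eq:cash_value_discrete}, which Theorem~\ref{thm:binomial_replication} shows is satisfied by the replicating pair at every node. Multiply it by $D_{m-1}$ and take $\E^\Q$. By the tower property,
\[
D_{m-1}\E^\Q[\val^*_{t_{m-1}}] - D_m\E^\Q[\val^*_{t_m}] = D_{m-1}\E^\Q\Big[\pos^*_{t_m}\Big(\tfrac{2\varphi-\lambda+\varphi rh}{1+rh}\pos^*_{t_m} - (2\varphi-\lambda)\pos^*_{t_{m-1}}\Big)\Big],
\]
for $m=1,\dots,M$. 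Here the $m=1$ case uses $\pos^*_{t_0}=0$, which is consistent with the formula for $g_0$.

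Summing over $m$ telescopes the left-hand side to $\val^*_0 - D_M\E^\Q[\val^*_T]$. By replication, $\val^*_T = V(P_T^{\pos^*})$, which produces the first term of \eqref{eq:multiperiod_value_decomposition}. What remains is to rewrite the sum of the right-hand sides as the quadratic form in the statement. For this I split the coefficient as
\[
\tfrac{2\varphi-\lambda+\varphi rh}{1+rh} = (2\varphi-\lambda)\tfrac{1}{1+rh} + \tfrac{\varphi rh}{1+rh},
\]
and apply the identity $a^2 - ab = \tfrac12(a-b)^2 + \tfrac12(a^2-b^2)$ with $a=\pos^*_{t_m}$ and $b=\pos^*_{t_{m-1}}$. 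The squared-increment parts give $\tfrac{2\varphi-\lambda}{2}\sum D_{m-1}(\pos^*_{t_m}-\pos^*_{t_{m-1}})^2$. The $\tfrac12(a^2-b^2)$ parts, together with the mismatch between the discount factors $D_m$ and $D_{m-1}$, telescope into boundary terms plus terms proportional to $rh(\pos^*_{t_m})^2$. The boundary term at $t_M$ is $\tfrac{2\varphi-\lambda}{2}D_M(\pos^*_{t_M})^2$. Since $\pos^*_{T+}=0$, this is exactly the $m=M$ summand $(\pos^*_{t_{M+1}}-\pos^*_{t_M})^2$ appearing in the statement, and it accounts for the liquidation round trip.

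Collecting the $rh$ terms, I expect them to combine to $\tfrac{\lambda rh}{2}\sum_{m\ge1} D_m(\pos^*_{t_m})^2$. This should come from the $\varphi rh$ piece plus $-\tfrac{2\varphi-\lambda}{2}rh$ from the discount mismatch, giving $\varphi - (\varphi - \lambda/2) = \lambda/2$. The main obstacle is this index bookkeeping: the statement sums the increments with weight $D_m$ for the increment $\pos^*_{t_{m+1}}-\pos^*_{t_m}$, so I will need to re-index carefully and verify that every discount factor matches. If the direct route gets messy, I would first check the identity for $M=1$ against Proposition~\ref{prop:replicating_value}, where $x\manip(x)$ with linear $I$ and $\phi$ equals $(2\varphi-\lambda+\varphi r)x^2$, and then induct on $M$ using \eqref{eq:cash_value_discrete}.

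For the inequalities, condition \eqref{eq:linear_price_manip} gives $2\varphi-\lambda\ge0$, $\lambda>0$ and $rh\ge0$, so all the added terms are nonnegative. This yields $\val^*_0 \ge D_M\E^\Q[V(P_T^{\pos^*})]$. For monotone $V$, say nondecreasing, I will show by backward induction that $\pos^*_{t_M}\ge0$. This should follow from \eqref{eq:fixed_point_algebraic}, whose right-hand side is nonnegative when $V$ is nondecreasing, in the same way as in the one-period discussion, where solutions share the sign of the slope of $V$. Then $P_T^{\pos^*} = S_T + \lambda\pos^*_T \ge S_T$, so $V(P_T^{\pos^*})\ge V(S_T)$. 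The nonincreasing case is symmetric: there $\pos^*_T\le0$ and again $V(P_T^{\pos^*})\ge V(S_T)$. Only the sign of the terminal position matters, so no sign information at earlier nodes is needed.
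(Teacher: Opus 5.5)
Your proposal is correct and is essentially the paper's argument: the paper telescopes the same one-step identity, its \eqref{eq:one_step_cond}, which coincides with \eqref{eq:cash_value_discrete}, using the discrete integration-by-parts identity. It does this for an arbitrary admissible strategy in the proof of Proposition~\ref{prop:linear_no_manip} to obtain \eqref{eq:discounted_liquidation_value}, then inserts $\val^*_T = V(P^{\pos^*}_T)$ and handles the monotone case via the sign of $\pos^*_{t_M}$ from the fixed-point equation, exactly as you do. Your bookkeeping closes as you predicted: the $rh$ coefficients combine to $\varphi - \tfrac{2\varphi-\lambda}{2} = \tfrac{\lambda}{2}$, the boundary term $\tfrac{2\varphi-\lambda}{2}(1+rh)^{-M}(\pos^*_{t_M})^2$ is the $m=M$ summand, and no induction is needed for the sign of $\pos^*_{t_M}$.
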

 
\subsection{Heuristics for the limit of the replicating strategy $\pos^*$}\label{sec:heuristics}
We will now formally study the behaviour of the strategy $\pos^*$ given by Theorem~\ref{thm:binomial_replication} in the continuous-time limit as $h \to 0$. To this end, we assume the standard scaling $u =  e^{\sigma \sqrt{h}}$ and $d  = e^{-\sigma \sqrt{h}}$ for some constant $\sigma > 0$. Next, we introduce the \emph{midpoint liquidation value process} 
\begin{equation}
\label{eq:modified_cash_value}
  \valm_t^* := \val^*_{t} + \tfrac{2\varphi-\lambda}{2}\,(\pos^*_t)^2 ,\qquad  t \in \mathcal{T},
\end{equation}
which admits a clear financial interpretation. Indeed, substituting the liquidation value definition
\eqref{eq:value_process_defn} and the observed price \eqref{eq:observed_price} for the value process $\val^*$ in \eqref{eq:modified_cash_value}, the
execution cost parameter $\varphi$ cancels out and we obtain the representation
\[
  \valm_t^* = \cash_t + \pos_{t}^*(S_{t} + \tfrac{\lambda}{2}\pos_{t}^*).
\]
It follows that $\valm^*_t$ marks the inventory neither at the fundamental price $S_t$, nor at the observed
price $P^{\pos^*}_t = S_t + \lambda\pos^*_t$, but at the midpoint of the fundamental and observed prices. The
gap $\valm^*_t - \val^*_t = \frac{2\varphi-\lambda}{2}(\pos_t^*)^2$ measures how far the realizable
liquidation value falls short of this average-price mark. This quantity is nonnegative 
under the no-price-manipulation condition $2\varphi\ge\lambda$, and it vanishes in the
midpoint case $2\varphi=\lambda$.

We now connect the strategy $\pos^*$ to the midpoint liquidation value process $\valm^*$. To this end, for any $\mathcal{F}_{t_m}$-measurable random variable $X$ we write $\delta [X]$ for the $\mathcal{F}_{t_{m-1}}$-measurable random variable  $\delta [X](\omega^{m-1}) = X(\omega^{m-1} u) - X(\omega^{m-1} d)$.
From the definition of $f_m$ in \eqref{eq:fm} and the fact that $\val^*_{t_m} = g_m(S_{t_{m-1}},S_{t_m})$, we have
\begin{align*} 
\pos^*_{t_m}  = f_m(S_{t_{m-1}}) & = \frac{\delta[\E^\Q_{t_m}[\val^*_{t_{m+1}}] + (2\varphi-\lambda)(\pos^*_{t_{m+1}})^2]}{\delta [S_{t_m} + (2\varphi-\lambda)\pos^*_{t_{m+1}}]} + O(h) \\
& = \frac{\delta[\E^\Q_{t_m}[\val^*_{t_{m+1}}] + (2\varphi-\lambda)(\pos^*_{t_{m+1}})^2]}{\delta [S_{t_m}]} \times \frac{\delta [S_{t_m}]}{\delta [S_{t_m} + (2\varphi-\lambda)\pos^*_{t_{m+1}}]} + O(h).
\end{align*}
In the limit one expects, based on the Markovian structure observed in Theorem~\ref{thm:binomial_replication}, that $\val^*_t$ converges to $v(t,S_t)$ and $\pos^*_t$ converges to $w(t,S_t)$ for some functions $v$ and $w$ as $h \to 0$. This suggests the limiting relationship 
\[w(t,x) = \frac{\partial_x (v(t,x) + (2\varphi-\lambda) w(t,x)^2)}{\partial_x(x+(2\varphi-\lambda)w(t,x))} = \frac{\partial_x v(t,x) + 2(2\varphi-\lambda)w(t,x) \partial_{x}w(t,x)}{1 + (2\varphi-\lambda)\partial_{x}w(t,x)}\]
for $ (t,x) \in [0,T] \times (0,\infty)$. Clearing the denominator and collecting like terms leads to 
\[w(t,x) = \partial_x v(t,x) + (2\varphi-\lambda)w(t,x)\partial_xw(t,x) = \partial_x\big(v(t,x)+ \tfrac{2\varphi-\lambda}{2}w^2(t,x)\big).\]
From \eqref{eq:modified_cash_value}, the function $u:= v + \frac{2\varphi-\lambda}{2}w^2$ evaluated at $(t,S_t)$ is the formal limit of $\valm^*_t$ as $h \to 0$. This suggests the relationship $\pos^*_t = \partial_x u(t,S_t)$ and $\valm^*_t = u(t,S_t)$, which represents the replicating strategy in terms of the midpoint liquidation value process $\valm^*_t$. 
This heuristic motivates the approach we take in the next section, where we derive a pricing PDE for the function $u$.

\section{Continuous-time model} \label{sec:cont_time}
In this section we directly study the replication problem with price impact in continuous-time. We start by deriving the continuous-time liquidation value dynamics and the pricing PDE \eqref{eq:PDE} together with an implicit terminal condition \eqref{eq:terminal_condition}, consistent with the formal limit of Section~\ref{sec:heuristics}. We then analyze this terminal condition and PDE, providing a stochastic representation for the solution and establishing classical solutions and corresponding replicating strategies in the midpoint regime $2\varphi = \lambda$. 
\subsection{Setup and preliminaries}
 We fix a maturity time $T > 0$, define the time set $\mathcal{T} = [0,T] \cup \{T+\}$, and work on a filtered probability space $(\Omega, \mathcal{F},(\mathcal{F}_t)_{t \in \mathcal{T}}, \P)$ satisfying the usual hypotheses. We assume that the fundamental price satisfies
\begin{equation} \label{eq:local_vol}
\d S_t = rS_t \d t + \sigma(t,S_t)S_t \d W_t, \quad t \in [0,T], \qquad \text{and} \quad  S_{T+} = S_T,
\end{equation}
for some initial value $S_0 > 0$
under the risk-neutral probability $\Q$, where $W$ is a standard Brownian motion under $\Q$, $r \in \R$ is the interest rate parameter and $\sigma:[0,T] \times (0,\infty) \to (0,\infty)$ specifies a local volatility model. We make the following technical assumption on $\sigma$.
	\begin{assumption} \label{ass:sigma} Perform the change of variables $y = \log x$ and set $\widetilde \sigma(t,y) = \sigma(t,e^y)$. We assume there exist constants $ 0 < \underline \sigma \le \overline \sigma < \infty$, $C_\sigma > 0$ and $\beta \in (0,1)$ such that for all $(t,y) \in [0,T]\times \R$, 
	\[\underline \sigma \le \widetilde \sigma(t,y) \le \overline \sigma, \qquad |\partial_y \widetilde \sigma(t,y)|  \le C_\sigma, \qquad |\widetilde \sigma(t,y) - \widetilde \sigma(s,y)| \le C_\sigma|t-s|^{\beta/2}.\]
	\end{assumption}
    We now define trading strategies, keeping in mind that our goal is to study the pricing and hedging of European claims with cash settlement under linear price impact with linear execution costs.
\begin{defn}
An admissible trading strategy 
$(\pos_t)_{t \in \mathcal{T}}$ is required to be adapted, c\`agl\`ad, satisfy
 $\pos_0 = \pos_{T+} = 0$ and
    $\E^\Q[\int_0^T \pos_t^2 \sigma^2(t,S_t) S_t^2 \d t] < \infty$. If $2\varphi \ne \lambda$, we additionally require its right-continuous version $\pos^+$ to have finite quadratic variation.
\end{defn}
To be consistent with the discrete-time framework of Section~\ref{sec:multi_period}, we have chosen to make  the holdings process $\pos$ c\`agl\`ad.
 In the liquidation value process dynamics \eqref{eq:cash_value_dynamics} to come, $\pos$ appears as an integrand, but $\pos^+$ also appears in other roles when $2\varphi \ne \lambda$, which is why we require $\pos^+$ to have finite quadratic variation.
 In the frictionless case $\val$ is continuous, due to the continuity of the price process, but here it can experience jumps due to bulk trades prescribed by $\pos$. By convention, we enforce the same one-sided continuity conventions for $\val$ and $\pos$; that is, liquidation value processes $\val$ are also c\`agl\`ad. 

As in discrete-time, the observed price $P^\pos$ is given by  the right-hand side of \eqref{eq:observed_price} when the strategy $\pos$ is used.
To obtain the dynamics of the liquidation value process we take a limit of the discrete-time model as the number of periods $M$ tends to infinity. This is the content of the next proposition.
\begin{proposition} \label{prop:limit}
        Let $(\pos_t)_{t \in \mathcal{T}}$ be an admissible strategy.
        For each $M \in \mathbb{N}$ set $t_m = \frac{mT}{M}$ for $m=0,\dots,M$, and define the discretized processes 
        \begin{align*}
            S_t^M & = {\textstyle\sum}_{m=0}^{M-1} S_{t_m}1_{[t_m,t_{m+1})}(t) + S_T1_{\{T\}}(t) \qquad \text{for } t \in [0,T], \qquad  S_{T+}^M = S_T^M\\
            \pos^M_t & = {\textstyle \sum}_{m=0}^{M-1}\pos_{t_m}1_{(t_m,t_{m+1}]}(t) \qquad \qquad \qquad \quad \,\,\! \text{for } t \in [0,T],  \qquad \pos^M_{T+} = 0.
        \end{align*}
        Let $\val^M$ denote the corresponding discrete-time liquidation value process; i.e., $\val^M$ is given by the right-hand side of \eqref{eq:value_process_defn} with $\pos^M$ in place of $\pos$. Then, $\val^M_t$ converges in probability at each fixed time $t$ as $M \to \infty$. These limits admit a c\`agl\`ad version $(\val_t)_{t \in \mathcal{T}}$, which satisfies
\begin{equation} \label{eq:cash_value_dynamics}
e^{-rt}\val_t  = \val_0 
+\int_0^t e^{-ru}\pos_u\sigma(u,S_u)S_u\d W_u - \frac{r\lambda}{2}\int_0^te^{-ru}\pos_u^2 \d u  -\frac{2\varphi -\lambda}{2}\bigg(e^{-rt}\pos_t^2 +\int_{[0,t)}e^{-rs} \d[\pos^+]_s\bigg)
\end{equation}
under $\Q$ for all $t \in  \mathcal{T}$ and given initial value $\val_0 \in \R$. 
\end{proposition}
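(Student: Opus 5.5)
We derive a discrete summation-by-parts identity for $\val^M$ that holds exactly for every $M$, and then pass to the limit term by term. Write $h=T/M$. The discretization uses simple compounding on each interval, so we replace $1+rh$ by $e^{rh}$ and note that the discrepancy is $O(h^2)$ per step, hence negligible. Set $\Delta_m := \pos_{t_m}$ for the position held over $(t_m,t_{m+1}]$, with $\Delta_{-1}:=0$ and terminal liquidation at $T+$. Between rebalancings the cash recursion gives $\cash_{t_{m+1}} = e^{rh}\big(\cash_{t_m} - (\Delta_m-\Delta_{m-1})(S_{t_m}+\lambda \Delta_{m-1}+\varphi(\Delta_m-\Delta_{m-1}))\big)$. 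Here we use $P_{t_m}=S_{t_m}+\lambda \pos^M_{t_m}$, and the left-continuous value is $\Delta_{m-1}$.

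\emph{Step 1 (exact discrete identity).} Use the algebraic identity $(b-a)(\lambda a+\varphi(b-a)) = \tfrac{\lambda}{2}(b^2-a^2)+\tfrac{2\varphi-\lambda}{2}(b-a)^2$. Summing the discounted cash increments then gives
\[
e^{-rt_k}\big(\cash_{t_k}+\Delta_{k-1}S_{t_k}\big)=\val_0+\sum_{m<k}e^{-rt_m}\Delta_m\big(e^{-rh}S_{t_{m+1}}-S_{t_m}\big) -\sum_{m<k}e^{-rt_m}\Big(\tfrac{\lambda}{2}(\Delta_m^2-\Delta_{m-1}^2)+\tfrac{2\varphi-\lambda}{2}(\Delta_m-\Delta_{m-1})^2\Big).
\]
The first sum comes from Abel summation on the $S\,\Delta$ terms. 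Next, Abel-sum the telescoping $\tfrac{\lambda}{2}\Delta^2$ terms against the discount factors:
\[
\sum_{m<k} e^{-rt_m}(\Delta_m^2-\Delta_{m-1}^2)=e^{-rt_{k-1}}\Delta_{k-1}^2+\sum_{m<k-1}(e^{-rt_m}-e^{-rt_{m+1}})\Delta_m^2.
\]
The second piece is $rh\sum e^{-rt_m}\Delta_m^2+O(h^2)\sum\Delta_m^2$. Finally, add the liquidation mark $\Delta_{k-1}(\lambda\Delta_{k-1}-\varphi\Delta_{k-1})$ from \eqref{eq:value_process_defn}. The $\Delta^2$ boundary terms combine as $\lambda-\tfrac{\lambda}{2}-\varphi=-\tfrac{2\varphi-\lambda}{2}$. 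This reproduces exactly the structure of \eqref{eq:cash_value_dynamics}: a discrete stochastic integral, the term $-\tfrac{r\lambda}{2}h\sum e^{-rt_m}\Delta_m^2$, and $-\tfrac{2\varphi-\lambda}{2}\big(e^{-rt}\Delta^2+\sum e^{-rt_m}(\Delta_m-\Delta_{m-1})^2\big)$. At $T+$ the final liquidation adds one more squared jump, $\Delta_{M-1}^2$, which matches the jump of $\pos^+$ to $0$ at $T$.

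\emph{Step 2 (limits).} Since $e^{-rt}S_t$ is a $\Q$-martingale with $\d(e^{-rt}S_t)=e^{-rt}\sigma S\,\d W$, the first sum is $\int_0^t \pos^M_u\,\d(e^{-ru}S_u)$ up to $O(h)$ discount corrections. Because $\pos$ is càglàd, $\pos^M\to\pos$ pointwise. The integrability $\E^\Q\int \pos^2\sigma^2S^2\,\d t<\infty$ together with Itô isometry then gives $L^2$ convergence. The left-endpoint evaluation is exactly Itô's sampling, so the limit is the Itô integral. Convergence of the second sum is a Riemann sum argument: paths of $\pos$ are bounded on $[0,T]$, and dominated convergence applies pathwise.

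\emph{Main obstacle.} The hard step is the jump sum $\sum_m e^{-rt_m}(\pos_{t_m}-\pos_{t_{m-1}})^2$. It must converge in probability to $\int_{[0,t)}e^{-rs}\,\d[\pos^+]_s$. The increments are of the right-continuous version sampled on a grid, so this is the standard fact that along deterministic partitions with mesh $\to0$ the sums of squared increments of a process with finite quadratic variation converge to $[\pos^+]$, weighted here by the continuous integrand $e^{-rs}$. This is precisely why finite quadratic variation of $\pos^+$ is assumed when $2\varphi\ne\lambda$. Care is needed at the endpoints to place the atoms correctly, so that $[0,t)$ rather than $[0,t]$ appears and the $t$-term is absorbed into $e^{-rt}\pos_t^2$. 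In the midpoint case $2\varphi=\lambda$ these terms vanish identically. The limits at each $t$ are then convergent in probability. The right-hand side of \eqref{eq:cash_value_dynamics} is càglàd, since the stochastic integral is continuous and the remaining terms are càglàd by construction, so it provides the required version.
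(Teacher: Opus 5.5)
Your proposal is correct in outline, and it reaches \eqref{eq:cash_value_dynamics} by a somewhat different route from the paper.

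\textbf{Comparison with the paper.} The paper sums the one-step recursion \eqref{eq:value_one_step} with weights $\beta^{m-1}$. It splits $\pos_{t_m}(\pos_{t_m}-\pos_{t_{m-1}})$ into a squared increment plus $\pos_{t_{m-1}}(\pos_{t_m}-\pos_{t_{m-1}})$. It then passes to the limit to obtain $\int e^{-ru}\pos_u\,\d\pos^+_u$ together with $\int e^{-ru}\,\d[\pos^+]_u$. Only afterwards does it remove the first integral by applying It\^o's formula to $e^{-ru}(\pos^+_u)^2$.

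You perform that It\^o step before taking any limit. Your cost identity is the discrete It\^o formula $a(b-a)=\tfrac12(b^2-a^2)-\tfrac12(b-a)^2$, the same device as \eqref{eq:disc_ibp} in the paper's no-manipulation proof. Abel summation of the telescoping $\tfrac{\lambda}{2}\pos^2$ terms against the discount factors then produces both the boundary term and the Riemann sum for $-\tfrac{r\lambda}{2}\int e^{-ru}\pos_u^2\,\d u$, already at the discrete level.

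What this buys you:
\begin{itemize}
\item The only limits left are an It\^o--Riemann sum, a pathwise Riemann sum and a weighted discrete quadratic variation.
\item You never need to give meaning to $\int\pos\,\d\pos^+$, nor to apply It\^o's formula to $(\pos^+)^2$. Those steps are not standard when $\pos^+$ is only assumed to have finite quadratic variation rather than to be a semimartingale, which is all the admissibility definition provides.
\item Your discrete positions are the left-sampled values $\pos_{t_m}$. So at a fixed $t$ the boundary term involves $\pos_{t_k}$ with $t_k<t$ (or the pre-trade value when $t$ is a grid point). The limit therefore comes out directly in c\`agl\`ad form, with $\pos_t$ and atoms in $[0,t)$. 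The paper instead goes through $\pos_{t+}$ and a left-continuous modification.
\end{itemize}
The paper's route is shorter on the page once continuous-time calculus for $\pos^+$ is granted.

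\textbf{A step that needs repair.} Pointwise convergence $\pos^M\to\pos$ together with $\E^\Q[\int_0^T\pos_t^2\sigma^2(t,S_t)S_t^2\,\d t]<\infty$ does not give $L^2$ convergence through It\^o's isometry. Nothing dominates $(\pos^M)^2$: grid samples can pick up values of $\pos$ that are large only on very short intervals. The discretized integrals need not even be square integrable. For example, take an adapted spike of height $Y$ and width $Y^{-4}$ ending at a grid time, with $Y$ sufficiently heavy-tailed.

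Since only convergence in probability is claimed, you can fix this as follows.
\begin{itemize}
\item Localize with $\tau_n=\inf\{s:|\pos_s|>n\}\wedge T$. By the definition of $\tau_n$ and left-continuity, $|\pos^M|,|\pos|\le n$ on $(0,\tau_n]$.
\item Dominated convergence and the isometry then apply to the stopped integrals.
\item Finally, $\Q(\tau_n<T)\to 0$ because c\`agl\`ad paths are bounded.
\end{itemize}
Alternatively, invoke \cite[Proposition~I.4.44]{jacod2003limit}, as the paper does.

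\textbf{A minor point.} Replacing $1+rh$ by $e^{rh}$ is unnecessary, and it conflicts with your claim that the identity is exact. Keeping $\beta=(1+rh)^{-1}$ makes the identity exact, and $\beta^m\to e^{-ru}$ handles the discounting in the limit.
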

Here, and in what follows, we use the convention $\int_{\{0\}}e^{-rs}d[\pos^+]_s  = \pos_{0+}^2$ and when $t = T+$, we interpret $[0,t)$ as $[0,T]$, so that the final term in \eqref{eq:cash_value_dynamics} incorporates both initial and terminal bulk trades.
In the midpoint regime $2\varphi = \lambda$ the final term in \eqref{eq:cash_value_dynamics} vanishes, so no assumption on the quadratic variation of $\pos^+$ is required.
 As in the discrete-time case, it is easy to see from \eqref{eq:cash_value_dynamics} that $\val_{T+} = \val_T$. 
 We now establish necessary and sufficient conditions on the parameters that ensure absence of price manipulation.
\begin{proposition}[Absence of price manipulation]\label{prop:price_manip_cont_time}
    The model does not admit price manipulation opportunities if and only if \eqref{eq:linear_price_manip} holds.
\end{proposition}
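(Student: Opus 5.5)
We prove Proposition~\ref{prop:price_manip_cont_time} directly from the dynamics \eqref{eq:cash_value_dynamics} of Proposition~\ref{prop:limit}, evaluated at $t=T+$. There $\pos_{T+}=0$ and $\val_{T+}=\val_T$, and no interest accrues between $T$ and $T+$. Starting from $\val_0=0$, this gives
\[
e^{-rT}\val_{T+} = \int_0^T e^{-ru}\pos_u\sigma(u,S_u)S_u\,\d W_u - \frac{r\lambda}{2}\int_0^T e^{-ru}\pos_u^2\,\d u - \frac{2\varphi-\lambda}{2}\int_{[0,T]} e^{-rs}\,\d[\pos^+]_s .
\]
The stochastic integral is a true $\Q$-martingale because of the integrability condition in the definition of admissibility, so its expectation vanishes. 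Hence
\[
\E^\Q\big[e^{-rT}\val_{T+}\big] = -\frac{r\lambda}{2}\,\E^\Q\Big[\int_0^T e^{-ru}\pos_u^2\,\d u\Big] - \frac{2\varphi-\lambda}{2}\,\E^\Q\Big[\int_{[0,T]} e^{-rs}\,\d[\pos^+]_s\Big].
\]
Both integrands on the right are nonnegative: $\pos_u^2\ge0$ and $[\pos^+]$ is nondecreasing. Sufficiency follows at once. If $r\ge0$ and $2\varphi\ge\lambda$, the right-hand side is $\le 0$ for every admissible $\pos$, so $\E^\Q[\val_T]\le 0$ and no price manipulation exists. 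One caveat: the definition of price manipulation is phrased with $\val_T$, and I will use $\val_T=\val_{T+}$ throughout. I must also make sure the convention stated after Proposition~\ref{prop:limit} is what I use here, namely that $\int_{[0,T]}$ includes both the initial jump $\pos_{0+}^2$ and the terminal liquidation jump $\pos_T^2$.

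For necessity I construct explicit manipulating strategies in the two failure cases.

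\emph{Case $2\varphi<\lambda$.} Take the deterministic strategy $\pos_t=\epsilon$ on $(0,T]$, with $\pos_0=\pos_{T+}=0$. Then $[\pos^+]$ consists only of the two bulk jumps, so
\[
\int_{[0,T]}e^{-rs}\,\d[\pos^+]_s=\epsilon^2(1+e^{-rT}),
\]
and the expected discounted value equals $\epsilon^2\big(-\tfrac{r\lambda}{2}\cdot\tfrac{1-e^{-rT}}{r} + \tfrac{\lambda-2\varphi}{2}(1+e^{-rT})\big)$, where $\tfrac{1-e^{-rT}}{r}$ is read as $T$ when $r=0$. This has the wrong sign if $r>0$ is large relative to $\lambda-2\varphi$, so a constant holding alone may fail. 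The fix is to shorten the holding period: take $\pos=\epsilon$ only on $(0,\delta]$, which is admissible since a deterministic càglàd step is adapted. The drift term is then $O(\delta\epsilon^2)$, while the jump term stays near $\epsilon^2(\lambda-2\varphi)$. Choosing $\delta$ small makes the expectation positive.

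\emph{Case $r<0$ and $2\varphi\ge\lambda$.} The jump term is now a cost, while the drift term $-\tfrac{r\lambda}{2}\int e^{-ru}\pos_u^2\,\d u$ is a gain of size $\tfrac{|r|\lambda}{2}\epsilon^2\int_0^T e^{-ru}\,\d u$. A single constant position held over the whole horizon, with only two bulk trades, gives
\[
\epsilon^2\Big(\tfrac{|r|\lambda}{2}\int_0^T e^{-ru}\,\d u - \tfrac{2\varphi-\lambda}{2}(1+e^{-rT})\Big),
\]
and this need not be positive for short $T$. Because of this, the necessity of $r\ge0$ is the step I expect to be the main obstacle. The first thing I would try is to reread the continuous-time definition of price manipulation, and to check whether in the midpoint case the drift alone gives manipulation, so that the ``only if'' genuinely holds for every $T$. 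If the cost term remains a fixed quantity independent of $T$ and cannot be beaten, I would then look for an alternative mechanism, for instance splitting the entry and exit into many small trades to reduce the quadratic-variation cost. This fails for continuous $\pos$ in the sense that the cost vanishes only for continuous finite-variation $\pos$ starting at $0$. So I would instead use a smooth ramp-up: $\pos$ continuous, of finite variation, with $[\pos^+]=0$ and $\pos_{T}=0$ approached continuously. In that case only the drift term survives, and the expected value is strictly positive whenever $r<0$. Since $\pos^+$ continuous of finite variation has zero quadratic variation, this resolves the case and completes the proof.
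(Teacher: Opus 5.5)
Your proof is correct and follows essentially the paper's route. Sufficiency comes from taking expectations in \eqref{eq:cash_value_dynamics}: evaluating at $T+$ with $\int_{[0,T]}$ is equivalent to the paper's evaluation at $T$ with $e^{-rT}\pos_T^2+\int_{[0,T)}$. Necessity comes from a short bulk round trip $\epsilon 1_{(0,\delta]}$ when $2\varphi<\lambda$ (the paper uses $1_{(T_0,T_1]}$ with $T_0\to T_1$), and from a continuous finite-variation hump vanishing at $0$ and $T$, such as $\min\{t,T-t\}$, when $r<0$. In a write-up, drop the exploratory detours and the garbled sentence saying that splitting trades ``fails''; the continuous finite-variation construction you end up using is exactly the mechanism that works.
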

\noindent For the remainder of Section~\ref{sec:cont_time}, we impose \eqref{eq:linear_price_manip}.
\subsection{The pricing PDE}
We now consider the question of replicating the payoff of a European claim $V:\R \to \R$. As before, the task is to find an admissible strategy $\pos^*$ such that $\val_T^* = V(P^{\pos^*}_T)$ holds, $\Q$-a.s.
From the analysis in discrete-time, we formally expect that the midpoint liquidation value process $\valm^*$ given by \eqref{eq:modified_cash_value} satisfies $\valm^*_t = u(t,S_t)$ for some function $u$ to be determined. To derive a PDE for $u$ we, on the one hand, directly compute the discounted value dynamics for $t \in \mathcal{T}$,
\begin{align}
    e^{-rt}\valm_t^* & = e^{-rt}\big(\val_t^* + \tfrac{2\varphi - \lambda}{2}(\pos_t^*)^2\big) \nonumber  \\
    & = \val_0^* + \int_0^t e^{-ru}\pos^*_u \sigma(u,S_u)S_u \d W_u - \frac{r\lambda}{2}\int_0^t e^{-ru}(\pos^*_u)^2 \d u - \frac{2\varphi-\lambda}{2}\int_{[0,t)}e^{-rs}\d[(\pos^*)^+]_s\label{eq:modified_cash_value_dynamics}
\end{align}
from \eqref{eq:cash_value_dynamics} while, on the other hand, we use It\^o's formula to obtain
\begin{align} 
\d\big(e^{-rt}u(t,S_t)\big) 
& = e^{-rt}\big(-ru(t,S_t) + \partial_t u(t,S_t) + rS_t\partial_x u(t,S_t) + \tfrac{1}{2}\sigma^2(t,S_t)S_t^2 \partial_{xx}u(t,S_t)\big)\d t \nonumber \\
& \quad + e^{-rt}\sigma(t,S_t)S_t\partial_x u(t,S_t)\d W_t \label{eq:Ito_cash_value} 
\end{align}
for $t \in (0,T)$. Comparing the Brownian terms in \eqref{eq:modified_cash_value_dynamics} and \eqref{eq:Ito_cash_value} leads us for $t \in (0,T)$ to the choice $\pos^*_t = \partial_x u(t,S_t)$ as suggested by the discrete-time heuristics in Section~\ref{sec:heuristics}. This comparison applies on $(0,T)$, since there is a bulk trade that occurs at time $t = 0+$ to establish the initial hedge. As such, $\pos^*$ jumps at zero, leading to a jump in the midpoint liquidation value process $\valm^*$ via the quadratic variation term in \eqref{eq:modified_cash_value_dynamics}. Hence, we obtain \begin{align} 
e^{-rt}\valm_t^* & = \val_0^* - \tfrac{2\varphi-\lambda}{2}\big(\partial_xu(0,S_0)\big)^21_{\{t>0\}} + \int_0^t e^{-rs}\sigma(s,S_s)S_s\partial_x u(s,S_s)\d W_s \nonumber \\
& \qquad  - \int_0^te^{-rs}\Big(\tfrac{r\lambda}{2}\big(\partial_x u(s,S_s)\big)^2 + \tfrac{2\varphi-\lambda}{2}\big(\partial_{xx}u(s,S_s)\big)^2 \sigma^2(s,S_s)S_s^2\Big)\d s. \label{eq:modified_cash_value_simplified}
\end{align}
 Matching the drift terms in \eqref{eq:modified_cash_value_simplified} to those in \eqref{eq:Ito_cash_value} leads us to the nonlinear PDE
\begin{equation}
\label{eq:PDE}
\begin{aligned} 
0 & = -ru(t,x) + \partial_t u(t,x) + rx\partial_xu(t,x) + \tfrac{1}{2}\sigma^2(t,x)x^2\partial_{xx}u(t,x) \\
& \quad + \tfrac{r\lambda}{2}\big(\partial_x u(t,x)\big)^2 + \tfrac{2\varphi-\lambda}{2}\sigma^2(t,x)x^2\big(\partial_{xx}u(t,x)\big)^2,
\end{aligned} \quad \text{for }(t,x) \in [0,T) \times (0,\infty).
\end{equation}
With the drift terms matched, comparison of \eqref{eq:Ito_cash_value} and \eqref{eq:modified_cash_value_simplified} also shows that $\valm_t^*=u(t,S_t)$ for $t>0$ provided
\begin{equation}\label{eq:initial_cash}
\val_0^* = u(0,S_0)
+\tfrac{2\varphi-\lambda}{2}\big(\partial_xu(0,S_0)\big)^2.
\end{equation}
The replication condition requires $\val_T^* = V(P_T^{\pos^*}) = V(S_T + \lambda \partial_x u(T,S_T))$. Combining with the identity \eqref{eq:modified_cash_value} we obtain an implicit ODE for the terminal condition
\begin{align} \label{eq:terminal_condition}
    u(T,x) & = V\big(x+\lambda \partial_x u(T,x)\big) + \tfrac{2\varphi-\lambda}{2}\big(\partial_x u(T,x)\big)^2, \qquad x >0.
\intertext{From \eqref{eq:modified_cash_value} and the identity $\valm_t^* = u(t,S_t)$ for $t > 0$, the liquidation value process is then recovered via}
 \label{eq:cash_value_recovery}
\val_t^* & = u(t,S_t) - \tfrac{2\varphi-\lambda}{2}\big(\partial_x u(t,S_t)\big)^2, \qquad t \in (0,T].
\end{align}
In particular, $\val_0^*$ is given by \eqref{eq:initial_cash}, while $
    \val^*_{0+}  = u(0,S_0) - \frac{2\varphi-\lambda}{2}(\partial_x u(0,S_0))^2$; that is, the bulk trade establishing the hedger's initial position instantaneously reduces the liquidation value by $\val^*_0 - \val^*_{0+} = (2\varphi-\lambda)(\partial_x u(0,S_0))^2$.

We now verify that if a sufficiently regular solution can be found to the PDE \eqref{eq:PDE} and terminal implicit ODE \eqref{eq:terminal_condition} then the replication problem is solved. 
\begin{theorem}[Verification] \label{thm:verification}
    Let a function $V:\R \to \R$ be given. Suppose there exists a function $u:[0,T] \times (0,\infty) \to \R$ satisfying the following conditions: 
        \begin{enumerate}[noitemsep]
        \item \label{item:C12}the function $u$ belongs to $C^{1,2}([0,T) \times (0,\infty)) \cap C([0,T] \times (0,\infty))$ and satisfies \eqref{eq:PDE},
         \item \label{item:polynomial} there exist $C > 0$ and $p \ge 1$ such that
        \begin{equation}
            \label{eq:polynomial_bound}
        |\partial_x u(t,x)| \le C(1+x^p), \qquad \forall t \in [0,T), \,  x \in (0,\infty),
        \end{equation}
        \item \label{item:absolute_continuity}the function $u(T,\cdot)$ is absolutely continuous and satisfies \eqref{eq:terminal_condition} for a.e.\ $x \in (0,\infty)$,
        \item \label{item:T_continuity} $\lim_{t  \uparrow T} \partial_x u(t,S_t) = \partial_x u(T,S_T)$, $\Q$-a.s. 
    \end{enumerate}
    Then the strategy $\pos_t^* = \partial_x u(t,S_t)1_{(0,T]}(t)$ is admissible,  the liquidation value process is given by \eqref{eq:cash_value_recovery}, $\valm_t^* = u(t,S_t)$ for $t \in (0,T]$, and the strategy $\pos^*$ replicates $V$ when initiated with cash position $\val^*_0 = u(0,S_0) + \frac{2\varphi-\lambda}{2}(\partial_x u(0,S_0))^2$.
\end{theorem}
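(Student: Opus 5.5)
The plan is to verify in turn admissibility of $\pos^*$, the form of the liquidation value via Proposition~\ref{prop:limit}, and the terminal replication identity. \textbf{Admissibility.} The process $\pos^*_t=\partial_x u(t,S_t)1_{(0,T]}(t)$ is adapted. On $(0,T)$ it is continuous, since $u\in C^{1,2}$ and $S$ is continuous. At $T$ it is left-continuous by item~\ref{item:T_continuity}, and at $0$ it is left-continuous with value $0$, with right limit $\partial_xu(0,S_0)$. Setting $\pos^*_{T+}=0$ makes it c\`agl\`ad. The integrability $\E^\Q[\int_0^T(\pos^*_t)^2\sigma^2S_t^2\,\d t]<\infty$ follows from the polynomial bound \eqref{eq:polynomial_bound}, boundedness of $\sigma$ (Assumption~\ref{ass:sigma}), and finiteness of all moments of $S_t$ under local volatility. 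When $2\varphi\neq\lambda$, $\pos^{*,+}$ must have finite quadratic variation. On $[\epsilon,T-\epsilon]$, $\partial_x u(t,S_t)$ is a continuous semimartingale-type process, but with $u$ only $C^{1,2}$ one does not have an It\^o formula for $\partial_x u$. I would instead define $[\pos^{*,+}]$ as the limit of sums of squared increments, using that $\partial_xu$ is $C^1$ in $x$ to write $\d[\partial_xu(\cdot,S)]=(\partial_{xx}u)^2\sigma^2S^2\,\d t$ via the locally Lipschitz dependence on $S$, plus the jumps at $0+$ and $T+$. I expect this step, identifying $[\pos^{*,+}]$ with the integral of $(\partial_{xx}u)^2\sigma^2S^2$ plus the endpoint jumps under only the stated regularity, to be the main obstacle. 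If the direct argument fails, I would localize on compacts of $(0,\infty)$ with stopping times and use that $\partial_xu(t,\cdot)$ is $C^1$ uniformly on compacts of $[0,T-\epsilon]\times(0,\infty)$.

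\textbf{Value process.} Proposition~\ref{prop:limit} applied to $\pos^*$ with $\val_0=\val_0^*$ gives \eqref{eq:cash_value_dynamics}. Adding $\frac{2\varphi-\lambda}{2}e^{-rt}(\pos^*_t)^2$ yields \eqref{eq:modified_cash_value_dynamics}, and inserting the quadratic variation computed above gives \eqref{eq:modified_cash_value_simplified} for $t\in(0,T)$. It\^o's formula \eqref{eq:Ito_cash_value} on $[\epsilon,t]$ together with the PDE \eqref{eq:PDE} shows that $e^{-rt}u(t,S_t)$ has exactly the drift and martingale parts of \eqref{eq:modified_cash_value_simplified}. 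Letting $\epsilon\downarrow0$ with continuity of $u$, the two processes differ only by the constant $\val^*_0-\frac{2\varphi-\lambda}{2}(\partial_xu(0,S_0))^2-u(0,S_0)=0$, by the choice of $\val_0^*$. Hence $\valm^*_t=u(t,S_t)$ on $(0,T)$. To extend to $t=T$ I would let $t\uparrow T$. The left-hand side is left-continuous because $\val^*$ is c\`agl\`ad and item~\ref{item:T_continuity} controls $\pos^*$, while $u\in C([0,T]\times(0,\infty))$ gives convergence of the right-hand side. This yields $\valm^*_T=u(T,S_T)$, and \eqref{eq:cash_value_recovery} follows from \eqref{eq:modified_cash_value}.

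\textbf{Replication.} We have $\val^*_T=u(T,S_T)-\frac{2\varphi-\lambda}{2}(\partial_xu(T,S_T))^2$. Item~\ref{item:absolute_continuity} gives \eqref{eq:terminal_condition} for a.e.\ $x$, and $S_T$ has a density under $\Q$ by uniform ellipticity in Assumption~\ref{ass:sigma}, so the null set is $\Q$-negligible. Therefore $\val^*_T=V(S_T+\lambda\partial_xu(T,S_T))=V(P_T^{\pos^*})$ $\Q$-a.s.
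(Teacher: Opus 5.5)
Your architecture matches the paper's: the dynamics from Proposition~\ref{prop:limit}, It\^o plus \eqref{eq:PDE} to match drift and martingale parts, the constant removed by the choice of $\val^*_0$, the limit $t\uparrow T$ via item~\ref{item:T_continuity}, and the density of $S_T$. However, the quadratic-variation step you flagged is left open, and your proposed fix does not close it. Split an increment of $\partial_xu(t,S_t)$ over $[t_k,t_{k+1}]$ into a spatial part $\partial_xu(t_{k+1},S_{t_{k+1}})-\partial_xu(t_{k+1},S_{t_k})$ and a temporal part $\partial_xu(t_{k+1},S_{t_k})-\partial_xu(t_k,S_{t_k})$. Your argument, using local Lipschitz dependence on $S$ via $\partial_{xx}u$ (also in the localized fallback), only controls the spatial part. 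Since $u$ is merely $C^{1,2}$, $\partial_xu$ has no time derivative, and a function that is only continuous in $t$ can carry nonzero, even infinite, quadratic variation. The missing estimate is $\partial_xu(t+h,x)-\partial_xu(t,x)=o(\sqrt h)$ uniformly on compact rectangles of $[0,T)\times(0,\infty)$. It follows by interpolating between $\partial_tu$ and $\partial_{xx}u$: write $\partial_xu(s,x)$ as the difference quotient $(u(s,x+k)-u(s,x))/k$ minus a remainder expressed through $\partial_{xx}u$. One finds
\[
|\partial_xu(t+h,x)-\partial_xu(t,x)|\le \tfrac{h}{k}\,\varrho_1(k)+\tfrac{k}{2}\,\varrho_2(h),
\]
where $\varrho_1,\varrho_2$ are moduli of continuity of $\partial_tu$ and $\partial_{xx}u$ on a slightly larger compact. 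Taking $k=\sqrt h$ gives $o(\sqrt h)$. The squared temporal increments then sum to zero and the cross terms vanish by Cauchy--Schwarz. Hence \eqref{eq:QV_pos_star} holds on $[0,T)$, even though $(\pos^*)^+$ need not be a semimartingale.

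Second, even granting \eqref{eq:QV_pos_star}, admissibility for $2\varphi\neq\lambda$ requires $\int_0^T(\partial_{xx}u)^2\sigma^2S^2\,\d t<\infty$ a.s. You need admissibility both to apply Proposition~\ref{prop:limit} and for the left-continuity of $\val^*$ at $T$ that you use. Working on $[\epsilon,T-\epsilon]$ cannot give this, because $\partial_{xx}u$ may blow up as $t\uparrow T$: nothing beyond absolute continuity is assumed of $u(T,\cdot)$. The paper extracts the bound from the PDE itself. By \eqref{eq:PDE}, the It\^o expansion of $e^{-rt}u(t,S_t)$ contains $-\frac{2\varphi-\lambda}{2}\int_0^te^{-rs}(\partial_{xx}u)^2\sigma^2S^2\,\d s$ with a definite sign. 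Every other term converges as $t\uparrow T$: $u$ is continuous up to $T$, and the stochastic integral and the $(\partial_xu)^2$ integral are controlled by \eqref{eq:polynomial_bound}. Hence this nonnegative nondecreasing integral has a finite limit. With these two ingredients supplied, the rest of your argument goes through as written.
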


\subsection{Terminal condition} \label{sec:terminal_condition}
As in the discrete-time case, we study the replication problem backwards in time starting with the implicit terminal condition \eqref{eq:terminal_condition}. 
To facilitate the analysis we will restrict to monotone, Lipschitz and convex payoffs $V:\R \to \R$, such as calls and puts. For increasing payoffs, the replicating strategy will take positive holdings and, as such, the value of $V$ on $(-\infty,0)$ will not affect the pricing or hedging problem. However, for decreasing payoffs, we expect to have $\pos^*_T = \partial_x u(T,S_T) < 0$ and from \eqref{eq:terminal_condition} we see that the value of $V$ on $(-\infty,0)$ may enter and affect $u(T,x)$ for some $x > 0$. This reflects a modelling asymmetry --- although the fundamental price $S$ is strictly positive, there is nothing stopping the observed price $P^\pos$ of \eqref{eq:observed_price} from becoming negative, with small but positive probability, for sufficiently negative positions $\pos$. For the well-posedness study of the terminal condition \eqref{eq:terminal_condition} and the PDE \eqref{eq:PDE}, it will be convenient to take the following convention on the payoff:
\begin{equation}\label{eq:payoff_negative} V(x) = V(0),  \qquad x \le 0.  
\end{equation}
For the put option this simply specifies that the payoff the holder receives can never exceed the strike price $K$. 

We now turn to the implicit ODE \eqref{eq:terminal_condition}. For  many important payoffs, like calls and puts, $V$ is only Lipschitz continuous but not continuously differentiable. Nevertheless, the quadratic term on the right-hand side of \eqref{eq:terminal_condition} has a regularizing effect and leads to $C^1$ solutions. In the case $2\varphi = \lambda$ however, that term vanishes and we cannot expect everywhere differentiable solutions to this implicit ODE; indeed, for the call option the solution in that case inherits the characteristic hockey-stick shape of the call (see Figure~\ref{fig:terminal-condition-call} below). For this reason we work with the weaker notion of Carath\'eodory solutions to ODEs, which require the solution to be absolutely continuous and satisfy the ODE almost everywhere (see \cite[Section~I.5]{hale2009ordinary}). 

\begin{theorem} \label{thm:terminal}
    Let $V:[0,\infty) \to \R$ be a Lipschitz continuous, monotone and convex function. We extend $V$ to $\R$ by \eqref{eq:payoff_negative}. If $2\varphi \ge \lambda$ then there exists a Carath\'eodory solution $\widetilde V:\R \to \R$ to the implicit ODE
    \begin{equation}     \label{eq:implicit_ODE}
 \widetilde V(x) = V\big(x+\lambda \widetilde V'(x)\big) + \tfrac{2\varphi - \lambda}{2}\big(\widetilde V'(x)\big)^2, \qquad x \in \R, 
 \end{equation} which satisfies the following minimality property: suppose $f$ is another  Carath\'eodory solution to \eqref{eq:implicit_ODE}, and if $V$ is nonincreasing additionally suppose that $f(x) = V(0)$ for $x \le 0$, and $f'(x) \ge -x/\lambda$ for a.e.\ $x > 0$; then $\widetilde V \le f$ and $|\widetilde V'(x)| \le |f'(x)|$ at all common points $x$ of differentiability.
    
    Additionally, $\widetilde V \ge V$ and it is monotone in the same direction as $V$. If $2\varphi > \lambda$ then $\widetilde V$ is additionally $C^1$.
\end{theorem}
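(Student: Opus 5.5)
We plan to reduce \eqref{eq:implicit_ODE} to an explicit first-order ODE by solving the algebraic relation pointwise for the slope. We treat the nondecreasing case; the nonincreasing case should follow the same pattern with signs reversed, and the extra hypotheses $f=V(0)$ on $(-\infty,0]$ and $f'\ge -x/\lambda$ are designed to pin down the analogous branch there. Write $c=2\varphi-\lambda\ge 0$ and, for fixed $x$, set
\[
G_x(p)=V(x+\lambda p)+\tfrac c2 p^2 .
\]
For $p\ge 0$ the map $G_x$ is continuous and nondecreasing, with $G_x(0)=V(x)$. For $y\ge V(x)$ we define the minimal root $p(x,y)=\min\{p\ge 0: G_x(p)=y\}$; it exists because $G_x\to\infty$ unless $V$ is constant, and the constant case is trivial. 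A Carath\'eodory solution is then any absolutely continuous $\widetilde V$ with $\widetilde V\ge V$ and $\widetilde V'=p(x,\widetilde V)$ a.e.

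The inequality $\widetilde V\ge V$ is automatic. Since $p\ge 0$ and $V$ is nondecreasing, $\widetilde V(x)=G_x(\widetilde V'(x))\ge V(x)$. The same argument, applied at any point where $\widetilde V'(x)=p\ge 0$, gives monotonicity in the same direction as $V$.

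To get existence we first regularize: mollify $V$ to a smooth convex $V_\varepsilon$ with the same Lipschitz constant $L_V$, and add $\varepsilon p^2$ to $G_x$ if $c=0$. Then $\partial_pG_x(p)=\lambda V_\varepsilon'(x+\lambda p)+(c+2\varepsilon)p>0$ for $p>0$, so $p_\varepsilon(x,y)$ is continuous in $(x,y)$, and locally Lipschitz where $y>V_\varepsilon(x)$ by the implicit function theorem. We would solve on $(-\infty,R]$. On $x\le 0$ the payoff is flat, so $u\equiv V(0)$ with slope $0$ is a solution. From there we continue to the right using Peano's theorem, choosing the \emph{minimal} solution of $u'=p_\varepsilon(x,u)$ through the starting data.

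Next we need a priori bounds. Mimicking Proposition~\ref{thm: Lipschitz implies existence}, we aim to show $0\le \widetilde V_\varepsilon'\le L_V$, using the Lipschitz bound on $V$ and the structure of the minimal root. We would also show that $\widetilde V_\varepsilon$ is convex. For this we plan to differentiate the relation, which gives $\widetilde V_\varepsilon''\,(1-\lambda V_\varepsilon'(\cdot)-(c+2\varepsilon)\widetilde V_\varepsilon')=-V_\varepsilon'$ plus correction terms, and check that the sign is preserved along the minimal branch. This step might fail; if it does, we would instead get convexity from the discrete approximation underlying Proposition~\ref{prop:call_linear}. With these bounds, Arzel\`a--Ascoli plus monotone convergence of the derivatives of convex functions lets us pass $\varepsilon\to 0$ and $R\to\infty$. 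The limit satisfies \eqref{eq:implicit_ODE} a.e.\ by continuity of $V$. When $c>0$ the root $p(x,y)$ is continuous, so the ODE right-hand side is continuous and $\widetilde V\in C^1$.

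For minimality, let $f$ be any other Carath\'eodory solution. We compare $f$ with $\widetilde V$ starting from the common flat region $x\le 0$. Suppose $\widetilde V(x_0)=f(x_0)$ and $\widetilde V>f$ just to the right of $x_0$. Since $p(x,y)$ is the smallest admissible slope at level $y$, and $G_x$ is nondecreasing in $p$, a larger value forces a larger minimal slope. We expect this to contradict the construction of $\widetilde V$ as the minimal Peano solution, via a standard first-crossing argument. At common points of differentiability, $\widetilde V\le f$ then gives $|\widetilde V'|=p(x,\widetilde V)\le p(x,f)\le |f'|$, using that $y\mapsto p(x,y)$ is nondecreasing.

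The main obstacle is the degeneracy at $y=V(x)$ when $V'=0$, for instance below the strike of a call. There $p(x,\cdot)$ is not Lipschitz, and for $c=0$ it can even jump. So both the selection of the correct branch (the moving strike) and the uniform derivative bounds in the $\varepsilon\to0$ limit need care. We would handle this first through the explicit call case, where the root can be computed by hand, to guide the general estimate.
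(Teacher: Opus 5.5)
There is a genuine gap, and it sits exactly at the branch selection you flag at the end: the minimal forward Peano solution anchored in the flat region is the wrong object.

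Take the call $V(x)=N(x-K)^+$ with $c>0$. For $x<K$ you have $G_x(p)=\frac c2p^2$ for small $p$, so $p(x,0)=0$ and $p(x,y)=\sqrt{2y/c}$ for small $y>0$. The minimal solution of $u'=p(x,u)$ is therefore $u\equiv 0$ up to $x=K$, and it cannot be continued past $K$. Indeed, for $x>K$ any solution with $u'\ge 0$ satisfies $u\ge N(x-K)+N\lambda u'$. So $g=u-N(x-K)$ has $g(K)=0$, $g\ge N\lambda u'\ge 0$ and $g'\le g/(N\lambda)-N$, which forces $g<0$ just to the right of $K$.

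The true solution \eqref{eq:tildeVcall} leaves zero at $K-2\varphi N$, along the branch $(x-K+2\varphi N)^2/(2c)$ of $u'=\sqrt{2u/c}$. Leaving at any later point likewise cannot be continued to all of $\R$. Leaving earlier gives a solution whose slope overshoots $N$ and grows without bound. So the departure point is fixed not by local or left-hand data but by a condition at $+\infty$: the slope must stay bounded and tend to $\lim_{\xi\to\infty}V'(\xi)$.

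Three further consequences follow.
\begin{itemize}
\item The departure point can be negative (when $K<2\varphi N$), so your anchor $u\equiv V(0)$ on $(-\infty,0]$ already fails in the nondecreasing case.
\item The nonincreasing case is not a mirror image, because \eqref{eq:payoff_negative} breaks the symmetry. There the solution really is pinned to $V(0)$ on $(-\infty,0]$ and integrated forward.
\item The bound $0\le\widetilde V'\le L_V$ cannot come from the minimal-root structure, since $p(x,y)$ is unbounded in $y$. It is a property of the correct global branch.
\end{itemize}

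The missing idea is the paper's substitution $\xi=x+\lambda\widetilde V'(x)$, $P(\xi)=\widetilde V'(x)$. This turns \eqref{eq:implicit_ODE} into the explicit ODE \eqref{eq:P_ODE}, $2\varphi PP'=P-V'$. That ODE involves $2\varphi>0$ rather than $c$, so the midpoint case needs no regularization. The paper solves it backward from $P(n)=\lim_{\xi\to\infty}V'(\xi)$ and lets $n\to\infty$. It proves uniqueness among bounded solutions and recovers $\widetilde V$ through $\Xi(\xi)=\xi-\lambda P(\xi)$.

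Your minimality argument fails for the same reason. The competitors that would violate $\widetilde V\le f$ are exactly the local solutions that leave the flat region later. Where they separate from $\widetilde V$ you have $f<\widetilde V$ and $0=f'<\widetilde V'$. This is fully compatible with monotonicity of $y\mapsto p(x,y)$, so no first-crossing argument can exclude them; they are excluded only because they do not exist globally.

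The paper's comparison is therefore global. Convexity of $q\mapsto V(x+\lambda q)+\frac c2q^2$ gives $w\le s\,w'$ for $w=\widetilde V-f$, where $s$ has the sign of $V'$ and $|s|\le 2\varphi L_V$. Hence $w(x_0)>0$ forces $w\ge w(x_0)e^{(x-x_0)/(2\varphi L_V)}$ to the right, so $f\to-\infty$. This contradicts the coercive bound $f\ge -L_V|x|-C$, or $f\ge V(0)$ when $c=0$.

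Your last step, $|\widetilde V'|=p(x,\widetilde V)\le p(x,f)\le|f'|$, is fine once $\widetilde V\le f$ and the minimal-root property of $\widetilde V'$ are in hand. Both, however, rest on the missing global construction.
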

In Appendix~\ref{sec:terminal_ODE}, as part of the proof of this theorem we additionally derive a more explicit representation for $\widetilde V$ in terms of a unique bounded solution to the standard first-order ODE \eqref{eq:P_ODE}. Since the implicit ODE \eqref{eq:implicit_ODE} is not equipped with a natural initial condition, it may have multiple solutions. The minimal solution $\widetilde V$, as described in Theorem~\ref{thm:terminal} is then the natural financial choice. Indeed, since the replicating strategy satisfies $\pos^*_T = \partial_x u(T,S_T) = \widetilde V'(S_T)$, the property $|\widetilde V'| \le |f'|$ is the analogue of the minimal-in-absolute-value solution of \eqref{eq:smallest_solution} discussed in the one-period model.
The upshot of Theorem~\ref{thm:terminal} is that the nonlinear PDE \eqref{eq:PDE} can now be appended with the terminal condition 
\begin{equation} \label{eq:terminal_condition_explicit}
    u(T,x) = \widetilde V(x), \qquad x  > 0,
\end{equation}
instead of the implicit version in \eqref{eq:terminal_condition}.
\subsection{Well-posedness and stochastic representation for the PDE \eqref{eq:PDE}} \label{sec:PDE}

We now focus on the pricing PDE \eqref{eq:PDE} with terminal condition \eqref{eq:terminal_condition_explicit}. As in Theorem~\ref{thm:terminal}, we focus on monotone, convex and Lipschitz continuous payoff functions $V$ satisfying \eqref{eq:payoff_negative}.  Then, formally the equations \eqref{eq:PDE} and \eqref{eq:terminal_condition_explicit} correspond to the Hamilton--Jacobi--Bellman (HJB) equation of the stochastic control problem
\begin{equation} \label{eq:u_stoch_rep}
u(t,x) = \sup_{\nu \in \mathcal{A}}  \E\bigg[e^{-r(T-t)}\widetilde V(S_T^\nu) - \int_t^T e^{-r(s-t)}\bigg(\frac{\alpha_s^2}{2r\lambda} + \frac{\gamma_s^2}{2(2\varphi-\lambda)\sigma^2(s,S^\nu_s)(S_s^\nu)^2}\bigg)\d s\, \bigg|\, S_t^\nu = x\bigg], 
\end{equation}
where the controlled state dynamics are
\[\d S^\nu_s = (rS_s^\nu + \alpha_s)\d s + 1_{\{S_s^\nu > 0\}} \sqrt{\sigma^2(s,S_s^\nu)(S_s^\nu)^2 + 2\gamma_s}\d W_s, \quad S^\nu_t = x, \qquad s \in [t,T].\] Here, $\mathcal{A}$ consists of all adapted square-integrable controls $\nu = (\alpha,\gamma)$ with $2\gamma_s \ge -\sigma^2(s,S_s^\nu)(S_s^\nu)^2$,  whenever $S_s^
\nu > 0$, and $\gamma_s = 0$, whenever $S_s^\nu \le 0$, with the convention that both the $\gamma$-dependent running cost in \eqref{eq:u_stoch_rep} and the diffusion coefficient of $S^\nu$ are zero in this case.
If $r = 0$, then \eqref{eq:u_stoch_rep} is understood with $\alpha \equiv 0$, while if $2\varphi = \lambda$ then \eqref{eq:u_stoch_rep} is understood with $\gamma \equiv 0$ so that the first or second term in the integrand vanishes respectively. Indeed, the HJB associated with  \eqref{eq:u_stoch_rep} is
\begin{equation}
\label{eq:HJB}
\begin{aligned} 
0 & = -ru(t,x) + \partial_t u(t,x) + rx\partial_xu(t,x) + \tfrac{1}{2}\sigma^2(t,x)x^2\partial_{xx}u(t,x) \\
& \quad +\sup_{\alpha \in \R} \{\alpha \partial_x u- \tfrac{\alpha^2}{2r\lambda}\} + \sup_{\gamma \ge -\frac{1}{2}\sigma^2(t,x)x^2}\{\gamma \partial_{xx}u - \tfrac{\gamma^2}{2(2\varphi-\lambda)\sigma^2(t,x)x^2}\}
\end{aligned} \quad \text{for }(t,x) \in [0,T) \times (0,\infty).
\end{equation}
The maximum in $\alpha$ is always achieved by $\alpha^* = r\lambda\partial_x u$, while the maximum for $\gamma$ is achieved by the unconstrained optimizer $\gamma^* = (2\varphi - \lambda)\sigma^2(t,x)x^2\partial_{xx}u$ if and only if 
\begin{equation} \label{eq:gamma_constraint}
    (2\varphi - \lambda)\partial_{xx} u(t,x) \ge -\tfrac{1}{2}.
\end{equation}
In this case, the HJB \eqref{eq:HJB} becomes \eqref{eq:PDE}.

The condition \eqref{eq:gamma_constraint} is always satisfied in the midpoint regime $2\varphi = \lambda$, which is the case we focus on in the following theorem establishing smooth classical solutions to the pricing PDE.

	 \begin{theorem} \label{thm:PDE}
        Let the payoff function $V$ be as in Theorem~\ref{thm:terminal} and suppose $2\varphi = \lambda$. We set $\varepsilon = 1$ if $V$ is nondecreasing or constant and $\varepsilon = -1$ if $V$ is nonincreasing.
		\begin{enumerate}[label = (\roman*),noitemsep]
			\item  \label{item:u} Then there exists a unique function $u \in C^{1,2}([0,T) \times (0,\infty)) \cap C([0,T] \times (0,\infty))$, which satisfies \eqref{eq:PDE} and \eqref{eq:terminal_condition_explicit}, and the bounds 
			\begin{equation} \label{eq:dxu_bounds}
				\varepsilon \partial_x u(t,x) \in [0,L_V] \qquad \text{and, if }\varepsilon = -1,  \quad \partial_x u(t,x) \ge -\Lambda x, \qquad (t,x) \in [0,T)\times (0,\infty) 
			\end{equation}
			 for some $\Lambda > 0$, and where $L_V$ is the Lipschitz constant of $V$.
		\item \label{item:stochastic_rep}The function $u$  of \ref{item:u} admits the stochastic representation \eqref{eq:u_stoch_rep} with $\gamma \equiv 0$ (and $\alpha \equiv 0$ if $r =0$), and we have $\lim_{t \uparrow T} \partial_x u(t,S_t) = \widetilde V'(S_T), \mathbb{Q}$-a.s.	 
		\item We have the estimates \label{item:bbounds}
		\begin{equation} \label{eq:u_bounds}
			\widetilde u^{BS}(t,x) + \tfrac{\lambda L_V^2}{2}(1-e^{-r(T-t)}) \ge u(t,x) \ge \widetilde u^{BS}(t,x) \ge u^{BS}(t,x),
			\end{equation} 
		where $\widetilde u^{BS}(t,x) = \E^\Q[e^{-r(T-t)}\widetilde V(S_T) \mid S_t=x]$ and $u^{BS}(t,x) = \E^\Q[e^{-r(T-t)} V(S_T)\mid S_t=x]$.	\end{enumerate}
	\end{theorem}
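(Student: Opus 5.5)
With $2\varphi=\lambda$ the PDE \eqref{eq:PDE} reduces to the semilinear equation
\[
0=-ru+\partial_t u+rx\partial_x u+\tfrac12\sigma^2x^2\partial_{xx}u+\tfrac{r\lambda}{2}(\partial_x u)^2 ,
\]
with terminal data $\widetilde V$ from Theorem~\ref{thm:terminal}. Here $\widetilde V$ is Lipschitz and monotone. It is also convex, which we check from the construction in Appendix~\ref{sec:terminal_ODE}, or else avoid needing. The plan is to pass to log variables $y=\log x$ and $\tau=T-t$. This gives a uniformly parabolic equation on $\R$ with coefficients satisfying Assumption~\ref{ass:sigma}.

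Since the nonlinearity is quadratic in the gradient, the natural first step is a Hopf--Cole-type substitution. It only linearizes exactly when the quadratic term is $\tfrac{\lambda r}{2}(\partial_x u)^2$ matched against $\tfrac12\sigma^2x^2\partial_{xx}u$, which fails for general $\sigma$. So instead I would proceed as follows.

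\textbf{Step 1: a priori gradient bound.} Differentiate the equation in $x$: $w=\partial_x u$ solves a linear-in-$w$ parabolic equation with transport drift $rx+r\lambda w+\ldots$ and zero-order term $0$ (the $-ru$ and the $r\partial_x u$ coming from $rx\partial_x u$ cancel). The terminal data is $\widetilde V'$, which lies in $\varepsilon[0,L_V]$ because $|\widetilde V'|\le |V'|$-type bounds follow from \eqref{eq:implicit_ODE} and monotonicity. The maximum principle then gives $\varepsilon\partial_x u\in[0,L_V]$. For $\varepsilon=-1$ we additionally get $\partial_x u\ge -\Lambda x$: $\widetilde V'(x)\ge -x/\lambda$ near $0$, and the function $-\Lambda x$ is a subsolution of the $w$-equation for $\Lambda$ large.

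\textbf{Step 2: existence and uniqueness.} Truncate the Hamiltonian to $\tfrac{r\lambda}{2}\min(p^2,(L_V+1)^2)$, which makes it globally Lipschitz. Classical Schauder/Friedman theory for semilinear parabolic equations with Lipschitz, growth-bounded initial data gives a $C^{1,2}$ solution in the log variables, continuous up to $\tau=0$. We obtain it by approximating $\widetilde V$ with smooth functions, using interior Schauder estimates, and passing to the limit. The bound from Step 1 shows the truncation is inactive, so this solves \eqref{eq:PDE}. Uniqueness in the class satisfying \eqref{eq:dxu_bounds} follows by comparison: the difference of two solutions solves a linear equation with bounded drift $\tfrac{r\lambda}{2}(\partial_x u_1+\partial_x u_2)$, so the maximum principle applies to functions of linear growth.

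\textbf{Step 3: stochastic representation and the limit of $\partial_x u$.} The representation \eqref{eq:u_stoch_rep} with $\gamma\equiv0$ comes from a standard verification for the HJB \eqref{eq:HJB}. Itô's formula applied to $e^{-rs}u(s,S^\alpha_s)$ with $\sup_\alpha\{\alpha p-\alpha^2/(2r\lambda)\}$ gives $u\ge$ the value of any control. The optimal feedback $\alpha^*=r\lambda\partial_x u$ is bounded, so the controlled SDE is well posed and attains equality. Localization is handled by linear growth. For $\lim_{t\uparrow T}\partial_x u(t,S_t)=\widetilde V'(S_T)$, represent $w=\partial_x u$ probabilistically via the differentiated equation, or as a Feynman--Kac functional of $\widetilde V'$ along the optimally controlled diffusion. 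Then use that $\widetilde V'$ is continuous except possibly on a countable, Lebesgue-null set when $2\varphi=\lambda$ (it is monotone by convexity). Since $S_T$ has a density, $S_T$ a.s. avoids the discontinuities, and continuity of the diffusion yields the limit. I expect this to be the main obstacle: it needs gradient continuity up to $T$ at continuity points of nonsmooth terminal data. I would try first to derive it from the probabilistic representation of $w$ together with dominated convergence.

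\textbf{Step 4: the estimates.} Taking $\alpha\equiv0$ in \eqref{eq:u_stoch_rep} gives $u\ge\widetilde u^{BS}$. The inequality $\widetilde V\ge V$ from Theorem~\ref{thm:terminal} gives $\widetilde u^{BS}\ge u^{BS}$. For the upper bound, set $\bar u=\widetilde u^{BS}+\tfrac{\lambda L_V^2}{2}(1-e^{-r(T-t)})$. It solves the linear Black--Scholes equation plus the constant source $\tfrac{r\lambda L_V^2}{2}$, which dominates $\tfrac{r\lambda}{2}(\partial_x u)^2$ by Step 1. Comparison then gives $u\le\bar u$. For $r<0$ the constant term flips sign, but we assume $r\ge0$ from \eqref{eq:linear_price_manip}.
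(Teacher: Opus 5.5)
\textbf{There is a genuine gap at the degenerate boundary $x=0$, which your Steps~1--2 never address.}

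\emph{The cited existence theory does not apply.} Write $\hat H(p)=\frac{r\lambda}{2}\min\{p^2,(L_V+1)^2\}$. In log variables your truncated term is $\frac{r\lambda}{2}\min\{e^{-2y}(\partial_y\hat u)^2,(L_V+1)^2\}$. Its Lipschitz constant in $\partial_y\hat u$ is of order $e^{-y}$. So it is not globally Lipschitz where the operator is uniformly parabolic. In the $x$ variable it is Lipschitz, but there the operator degenerates at $0$. The global semilinear theory you cite therefore does not apply as stated.

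\emph{The maximum principle for $w=\partial_x u$ breaks down at $x=0$.} This is the more serious problem. Up to $O(x)$ terms, the drift of the differentiated equation near $0$ is $\partial_p\hat H(w)$, which equals $r\lambda w$ for $|w|<L_V+1$. A barrier blowing up at $0$ (such as $x^{-1}$ in the paper's comparison step) is a supersolution only where this drift is $\ge -Cx$.
\begin{itemize}
\item For the upper bounds ($w\le L_V$ if $\varepsilon=1$, $w\le0$ if $\varepsilon=-1$) this is harmless, because $w>0$ at the offending extremum.
\item For the lower bounds ($w\ge0$ if $\varepsilon=1$; $w\ge-L_V$ and $w\ge-\Lambda x$ if $\varepsilon=-1$) the offending extremum has $w<0$. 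The drift can then point into $x=0$, which becomes an exit point for the associated diffusion, and you have no boundary data there.
\end{itemize}
In particular, your subsolution argument for $w\ge-\Lambda x$ is circular: closing the comparison at $x=0$ requires $w^-\le Cx$, which is the conclusion.

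This is not a technicality. Once $r\lambda\,\partial_xu(t,0+)<0$, the controlled state reaches $0$ with positive probability in finite time. Without information at $0$ one should not expect uniqueness. Nothing in your truncate-and-pass-to-the-limit construction selects a solution in the class \eqref{eq:dxu_bounds}, so you cannot conclude that the truncation is inactive. For the same reason, it is not ``bounded drift'' that makes your uniqueness comparison work on $(0,\infty)$. What makes it work is that the drift's negative part is $O(x)$, which is exactly the role of $\partial_xu\ge-\Lambda x$ in that class.

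\textbf{The missing idea} is a one-sided, positivity-preserving constraint on the control, with gradient bounds obtained without differentiating the PDE. The paper proceeds as follows.
\begin{itemize}
\item It replaces the Hamiltonian by $\sup_{a\in A(x)}\{ap-a^2/(2r\lambda)\}$ with $A(x)=\{a:\varepsilon a\in[0,M],\ a\ge-\rho x\}$. The feedback drift $\Pi_{A(x)}(r\lambda p)$ is then $\ge0$ (resp.\ $\ge-\rho x$), so the state stays positive.
\item It proves $0\le\varepsilon\partial_xu_\rho\le L_V$ and $\partial_xu_\rho\ge-\Lambda_u x$ by coupling two controlled diffusions started from $x<x'$.
\item It builds the classical solution by policy iteration and checks a posteriori that the constraint is inactive.
\item It then removes the constraint. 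For $\varepsilon=-1$ this needs $u_\rho(t,0+)\ge e^{-r(T-t)}\widetilde V(0)$ and the constancy of $\widetilde V$ on $(-\infty,0]$ to handle paths that hit $0$. Your ``localization by linear growth'' in Step~3 also skips this point.
\end{itemize}

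\textbf{Two smaller errors.}
\begin{itemize}
\item $\widetilde V$ is not convex when $V$ is nonincreasing. Lemma~\ref{lem:ODE} gives $-x/\lambda\le\widetilde V'(x)\le0$ for $x>0$, so convexity would force $\widetilde V$ to be constant. The countability of the discontinuities of $\widetilde V'$ must instead come from $\widetilde V'=P\circ\Xi^{-1}$ with $\Xi$ continuous and nondecreasing.
\item $|\widetilde V'|\le|V'|$ fails pointwise: for the call, $\widetilde V'=N$ on $[K-\lambda N,K)$ while $V'=0$ there. The correct input is $0\le\varepsilon\widetilde V'\le L_V$ from Lemma~\ref{lem:ODE}.
\end{itemize}

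By contrast, your proof of the upper bound in \eqref{eq:u_bounds} is correct once the gradient bound is available. Comparing $u$ with $\widetilde u^{BS}+\frac{\lambda L_V^2}{2}(1-e^{-r(T-t)})$ is slightly more direct than the paper's control estimate.
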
 
This result together with Theorem~\ref{thm:verification} ensures that any such payoff $V$ can be perfectly replicated in the midpoint cost case; we state this finding as a corollary.
\begin{corollary} \label{cor:replication}
    Let the payoff function $V$ be as in Theorem~\ref{thm:terminal} and suppose $2\varphi = \lambda$. This payoff can be replicated with initial capital $\val^*_0 = u(0,S_0)$ and admissible trading strategy $\pos^*_t = \partial_x u(t,S_t)1_{\{t \in (0,T]\}}$, where $u$ is the function obtained in Theorem~\ref{thm:PDE}.
\end{corollary}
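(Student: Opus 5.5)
The corollary follows by feeding the function $u$ of Theorem~\ref{thm:PDE} into the verification result, Theorem~\ref{thm:verification}, with $2\varphi=\lambda$. We check the four hypotheses of Theorem~\ref{thm:verification} in turn. Then we read off the initial capital and the strategy.

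\textbf{Regularity, item~\ref{item:C12}.} Theorem~\ref{thm:PDE}\ref{item:u} gives $u \in C^{1,2}([0,T)\times(0,\infty))\cap C([0,T]\times(0,\infty))$ solving \eqref{eq:PDE} with terminal condition \eqref{eq:terminal_condition_explicit}.

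\textbf{Growth bound, item~\ref{item:polynomial}.} The bounds \eqref{eq:dxu_bounds} give $|\partial_x u(t,x)|\le L_V$ when $\varepsilon=1$. When $\varepsilon=-1$ they give $-\Lambda x\le \partial_x u\le 0$, hence $|\partial_x u|\le \max(L_V,\Lambda)(1+x)$. So \eqref{eq:polynomial_bound} holds with $p=1$. This bound is also what secures the integrability requirement in admissibility: $\sigma$ is bounded and $S$ has finite moments of all orders under Assumption~\ref{ass:sigma}.

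\textbf{Terminal condition, item~\ref{item:absolute_continuity}.} By \eqref{eq:terminal_condition_explicit}, $u(T,\cdot)=\widetilde V$ on $(0,\infty)$. By Theorem~\ref{thm:terminal}, $\widetilde V$ is a Carathéodory solution of \eqref{eq:implicit_ODE}, so it is absolutely continuous and satisfies \eqref{eq:implicit_ODE} almost everywhere, with $\widetilde V'$ existing a.e. With $2\varphi=\lambda$ the quadratic term drops. Since $\partial_x u(T,x)=\widetilde V'(x)$ a.e., \eqref{eq:implicit_ODE} reads exactly as \eqref{eq:terminal_condition} for a.e.\ $x>0$.

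\textbf{Convergence at maturity, item~\ref{item:T_continuity}.} This is the most delicate point, since $\widetilde V'$ may be discontinuous (the hockey-stick case). It is nevertheless supplied directly by Theorem~\ref{thm:PDE}\ref{item:stochastic_rep}, which states $\lim_{t\uparrow T}\partial_x u(t,S_t)=\widetilde V'(S_T)$ $\Q$-a.s. This identifies with $\partial_x u(T,S_T)$ because $S_T$ has a density under Assumption~\ref{ass:sigma}. Hence $S_T$ avoids, almost surely, the null set where $\widetilde V$ fails to be differentiable, and on the complement we take $\partial_x u(T,\cdot)=\widetilde V'$.

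\textbf{Conclusion.} Theorem~\ref{thm:verification} now yields that $\pos^*_t=\partial_x u(t,S_t)1_{(0,T]}(t)$ is admissible and replicates $V$ from initial capital
\[
\val^*_0=u(0,S_0)+\tfrac{2\varphi-\lambda}{2}\big(\partial_xu(0,S_0)\big)^2 .
\]
Since $2\varphi=\lambda$, the second term vanishes and $\val^*_0=u(0,S_0)$. In the midpoint regime the quadratic-variation requirement on $\pos^+$ in the admissibility definition is void, so nothing further needs checking.
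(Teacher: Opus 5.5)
Your proposal is correct and follows the paper's proof: both check the four hypotheses of Theorem~\ref{thm:verification} using Theorems~\ref{thm:PDE} and~\ref{thm:terminal}, and both note that the quadratic term in \eqref{eq:initial_cash} vanishes when $2\varphi=\lambda$. One simplification: \eqref{eq:dxu_bounds} already gives $\varepsilon\partial_x u\in[0,L_V]$ for both signs of $\varepsilon$, so $|\partial_x u|\le L_V$ holds directly and you do not need the detour through $\Lambda$.
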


As in the frictionless case, the first bound in \eqref{eq:dxu_bounds} guarantees that the replicating strategy never holds more shares of the underlying than the Lipschitz constant of the payoff; that is, for calls and puts the investor continues to never hold more than a single share per contract. Additionally, the bounds \eqref{eq:u_bounds} again illustrate that it is always more expensive to replicate the contract in the present setting with frictions than in the frictionless setting. In fact, the representation \eqref{eq:u_stoch_rep} suggests that this ordering should persist in the general $2\varphi \ge \lambda$ regime, by taking $\alpha \equiv \gamma \equiv 0$ and recalling from Theorem~\ref{thm:terminal} that $\widetilde V \ge V$. The bounds \eqref{eq:u_bounds} also provide an upper bound on how much more expensive the replicating cost can be. Indeed, evaluating at $t = 0$ stipulates that it can never exceed the frictionless cost of replicating the \emph{modified} payoff $\widetilde V$ plus a premium of $\frac{1}{2}\lambda L_V^2(1-e^{-rT})$. In the case of call and put options this premium is proportional to the impact parameter $\lambda$ and the square of the number of contracts $|N| = L_V$.

We finish off this section with a scaling property relating the
price impact coefficient $\lambda$ to the number of contracts $N$.
Suppose the \emph{gross payoff} is $V(x)=Nv(x)$ for a fixed payoff function
$v$ and $N>0$. In the midpoint regime $2\varphi=\lambda$, the
liquidation value and hedging position per contract are determined
by the function $u_N = u/N$ and its spatial derivative. Dividing
\eqref{eq:PDE} and \eqref{eq:terminal_condition} by $N$ shows that
$u_N$ satisfies the corresponding single contract equations with
$\lambda$ replaced by the effective impact parameter $N\lambda$.
Thus, the
liquidation value and hedging position per contract depend on
$N$ and $\lambda$ only through their product. That is, doubling the number
of contracts has the same effect on these normalized
quantities as doubling $\lambda$, while maintaining the midpoint
regime $2\varphi=\lambda$.

More generally, when $2\varphi>\lambda$, the normalized equations
depend on the two effective parameters $N\lambda$ and
$N(2\varphi-\lambda)$. Holding $N\lambda$ fixed alone therefore
does not generally preserve the replication cost or hedge per
contract. The scaling equivalence extends to this regime if both
$\lambda$ and $\varphi$ are varied proportionally. In other words, scaling
the number of contracts by a positive factor has the same effect on the normalized
replication problem as scaling both cost coefficients by the same factor.
 
\section{Numerical results} \label{sec:numerics}
In the numerical examples below we take constant volatility $\sigma > 0$, as in the Black--Scholes model.

\subsection{Quadratic payoffs: a closed-form example} \label{sec:quadratic_payoffs}

As a first exercise, here we explicitly solve the replication problem for a quadratic payoff $V(x) = N(\alpha x^2 +\beta x +\gamma)$ for $N > 0$ contracts and parameters $\alpha \in (0,(16\varphi N)^{-1})$ and $\beta,\gamma \in \R$, in the general parameter regime $2\varphi \ge \lambda$. The key observation is that quadratic ansatzes for $\widetilde V$ and $u(t,\cdot)$ satisfy the terminal condition \eqref{eq:terminal_condition} and PDE \eqref{eq:PDE}, respectively. Indeed, conjecturing $\widetilde V(x) = a(T)x^2 + b(T)x + c(T)$, substituting into \eqref{eq:implicit_ODE} and collecting like terms leads to a quadratic equation for $a(T)$, and linear equations for $b(T)$ and $c(T)$. For $a(T)$ we take the root that among the two quadratic solutions makes the magnitude of the implied terminal inventory the smallest, akin to the minimality approach in Theorem~\ref{thm:terminal}, leading to the nested solutions
\begin{align*}
    a(T)&= \frac{1-4\,N\alpha\lambda - \sqrt{1-16\,N\alpha\,\varphi}}{4\,(2N\,\alpha\,\lambda^2 +2\,\varphi-\lambda)}
    \\
    b(T) &= 
    \frac{-N\beta -2N\,\beta\,\lambda\,a(T)}{-1+2\,N\alpha\,\lambda-2\,\lambda\,a(T)+ 4N\,\alpha\,\lambda^2\,a(T)+ 4\,\varphi\,a(T)},
   \\
   c(T)&= N\gamma + N\beta\,\lambda\,b(T) + N\alpha\,\lambda^2 \,b(T)^2+ \tfrac{1}{2}( 2\,\varphi-\lambda)\,b(T)^2\,.
\end{align*}
For the PDE, using the quadratic ansatz $u(t,x) = a(t)x^2 + b(t)x + c(t)$ and collecting like terms leads to 
\begin{align*}
&x^2\Big\{(r+\sigma^2)\,a(t) + 2\,r\,\lambda\,a(t)^2 + 2\,\sigma^2(-\lambda+2\,\varphi)a(t)^2 +a'(t) \Big\}
\\
&\quad + x\Big\{ 2\,r\,\lambda\,a(t)\,b(t) + b'(t) \Big\} 
+ \Big\{ \frac{1}{2}r\,\lambda \,b(t)^2 - r\,c(t) +  c'(t) \Big\} =0\,,
\end{align*}
The solution to the Riccati equation is
\begin{equation} \label{eq:at}
    a(t)=\frac{a(T)(r+\sigma^2)}{(r+\sigma^2) - 2a(T)(r\lambda + \sigma^2(2\varphi-\lambda))(e^{(r+\sigma^2)(T-t)}-1)}e^{(r+\sigma^2)(T-t)}.
\end{equation}
We impose the condition
\[
    2a(T)\big(r\lambda + \sigma^2(2\varphi-\lambda)\big)\big(e^{(r+\sigma^2) T}-1\big)<r + \sigma^2, 
\]
on the maturity time $T$, which ensures that the denominator in \eqref{eq:at} is positive throughout $[0,T]$. 
 With $a(t)$ given we obtain the unique solution to the linear ODE for $b(t)$ with terminal condition $b(T)$, which in turn we use to solve the ODE for $c(t)$ with terminal condition $c(T)$. As the solution is smooth on $[0,T] \times \R$ and $\partial_x u(t,x)$ is affine in $x$, by Theorem~\ref{thm:verification} we have that $\pos^*_t = \partial_x u(t,S_t)1_{(0,T]}(t)$ and $\val^*$ given by \eqref{eq:initial_cash} and \eqref{eq:cash_value_recovery} are a replicating strategy and liquidation value process for this quadratic payoff.

Figure~\ref{fig:LQ price and delta} illustrates the effect of price impact on the 
replicating value and initial position for different numbers of derivative contracts.
The top panels show the derivative price per contract, while the bottom panels show 
the initial position per contract, as functions of the initial fundamental price $S_0$. 
For all the parameter combinations and initial prices shown the values in the presence of price impact lie above their corresponding frictionless benchmarks, although for $\lambda = 0.01$ and $N=1$ the difference is barely visible.      
The difference becomes more pronounced as the number  of contracts increases. In particular, even after normalizing by the number of contracts $N$, both the replication cost and the size of the hedge increase with $N$. This illustrates  the nonlinear dependence of the replication problem on the number of contracts, which  arises because a larger hedging position generates a larger displacement of the observed price. Note, as per the discussion at the end of Section~\ref{sec:cont_time}, that the $N=1$ and $\lambda = 0.1$ curves are exactly equal to the $N =10$ and $\lambda = 0.01$ curves in Figure~\ref{fig:LQ price and delta} as we are in the midpoint regime $2\varphi = \lambda$ and they share the same effective parameter $\lambda N$.

\begin{figure}
    \centering
    \includegraphics[width=0.9\linewidth]{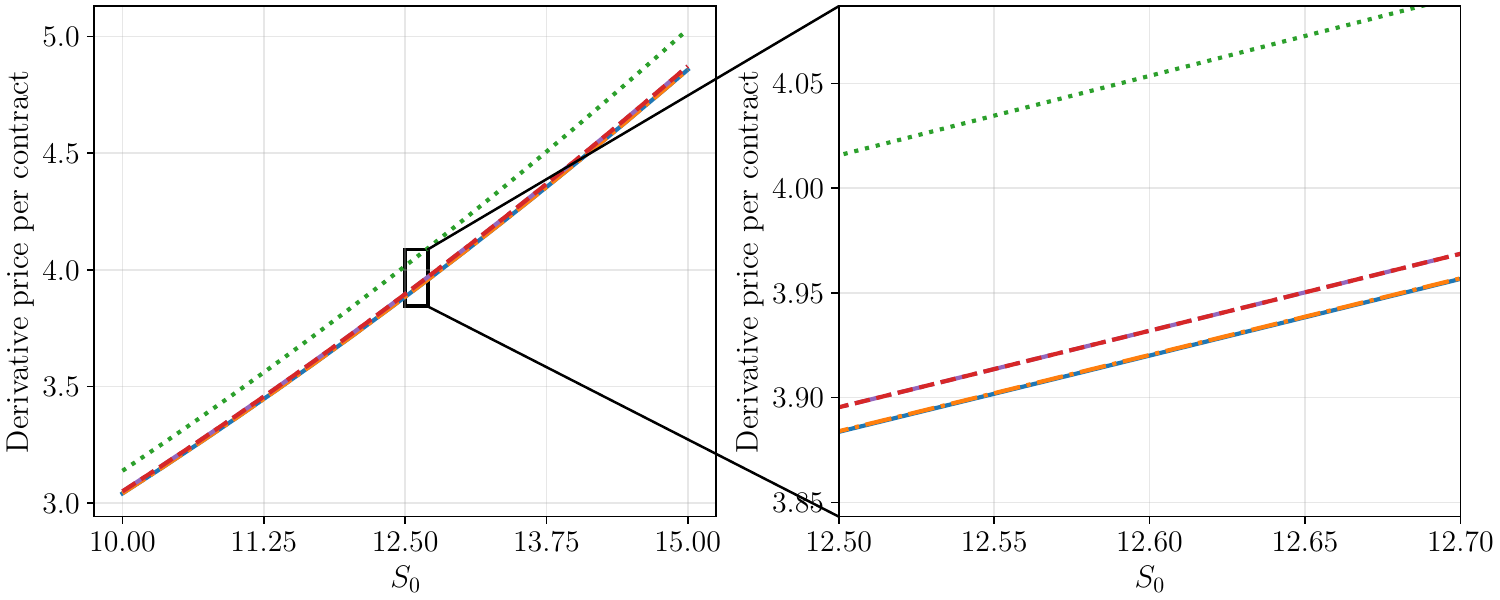}
    \includegraphics[width=0.9\linewidth]{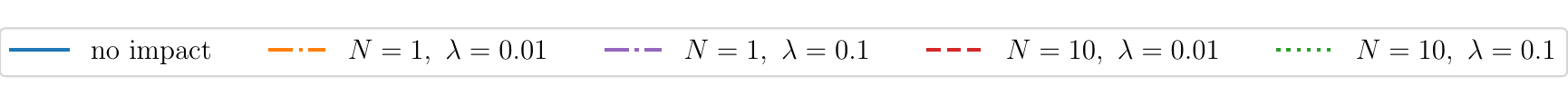}
    \includegraphics[width=0.9\linewidth]{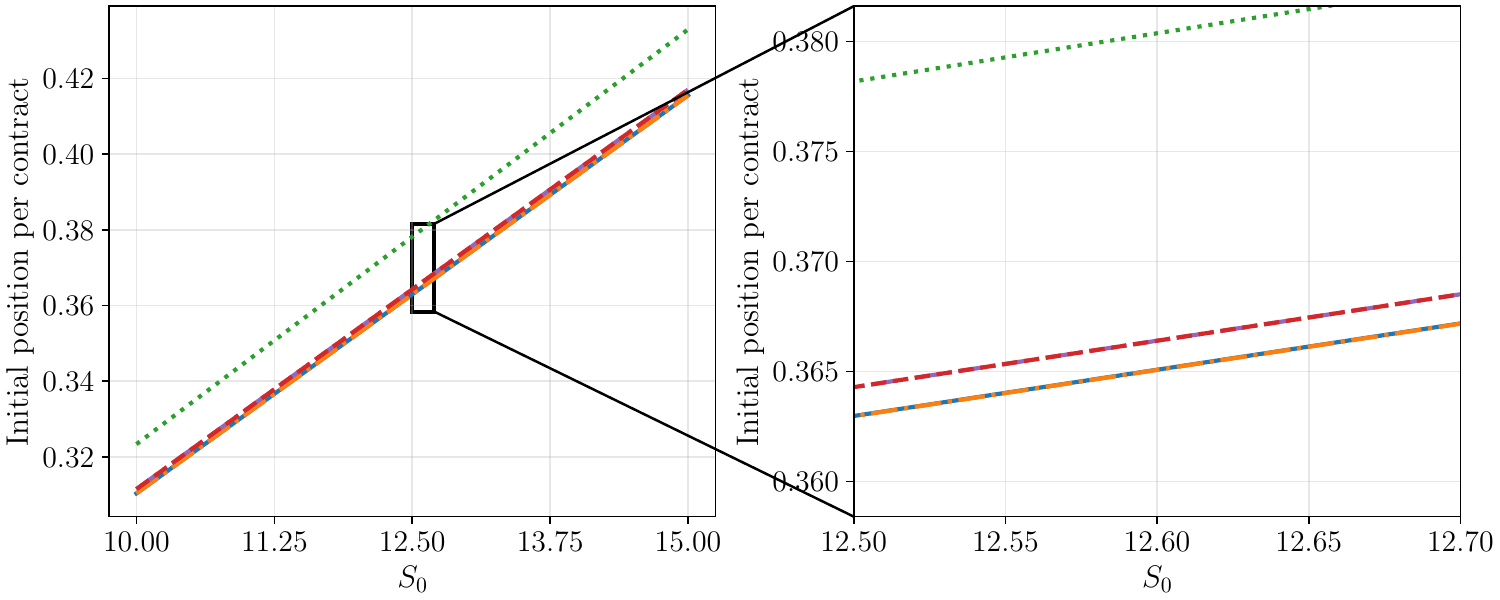}
    \caption{Quadratic option price per contract (top) and initial replicating position per 
    contract (bottom) in the midpoint regime as functions of the initial fundamental price $S_0$. The solid line 
    corresponds to the no-impact benchmark, while the dash-dotted and dashed lines 
    correspond respectively to $N=1$ and $N=10$ contracts with price-impact parameters 
    $\lambda=0.01$ and $\lambda=0.1$.}
    \label{fig:LQ price and delta}
\end{figure}

\subsection{Call options} \label{sec:call_numerics}
Next, we tackle the  challenging problem of replicating a call option with price impact and execution costs. 
For $N >0$ call option contracts, having payoff $V(x) = N(x-K)^+$, the modified terminal payoff of Theorem~\ref{thm:terminal} is explicitly given by
\begin{equation}\label{eq:tildeVcall}
\widetilde V(x) = \begin{cases}
    0, & x < K - 2\varphi N, \\
    \frac{(x-K+2\varphi N)^2}{2(2\varphi-\lambda)}, & K - 2\varphi N \le x < K - \lambda N, \\
    N(x-K) + \frac{2\varphi+\lambda}{2} N^2, & x \ge K-\lambda N.
\end{cases}
\end{equation}
We refer the reader to Appendix~\ref{sec:terminal_ODE} for the verification that \eqref{eq:tildeVcall} is the minimal solution of \eqref{eq:terminal_condition} in the sense of Theorem~\ref{thm:terminal}. 
From \eqref{eq:tildeVcall} we see that the terminal replicating position $\pos_T^*$ is determined by
\begin{equation} \label{eq:final_Delta_call}
    \widetilde V'(x) = \begin{cases}
        0, & x < K - 2\varphi N,  \\
        \frac{x-K+2\varphi N}{2\varphi-\lambda}, & K - 2\varphi N \le x < K - \lambda N, \\
        N, & x \ge K-\lambda N.
    \end{cases}
\end{equation}

 Figure~\ref{fig:terminal-condition-call} plots  $\widetilde{V}$ (top panels) and $\widetilde V'$ (bottom panels) in a $2\varphi > \lambda$ regime, as well as the modified call payoff $N(x-\widetilde K)^+$ and its derivative, where $\widetilde K := K - \lambda N$. 
\begin{figure}
    \centering
    \includegraphics[width=0.7\linewidth]{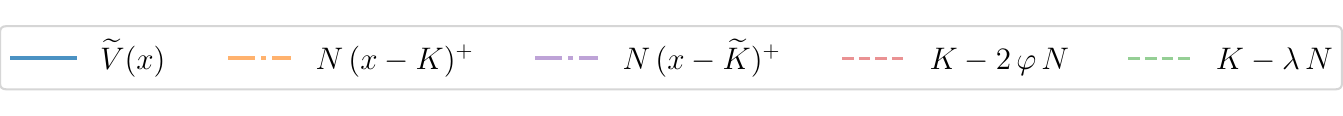}
    \includegraphics[width=0.45\linewidth]{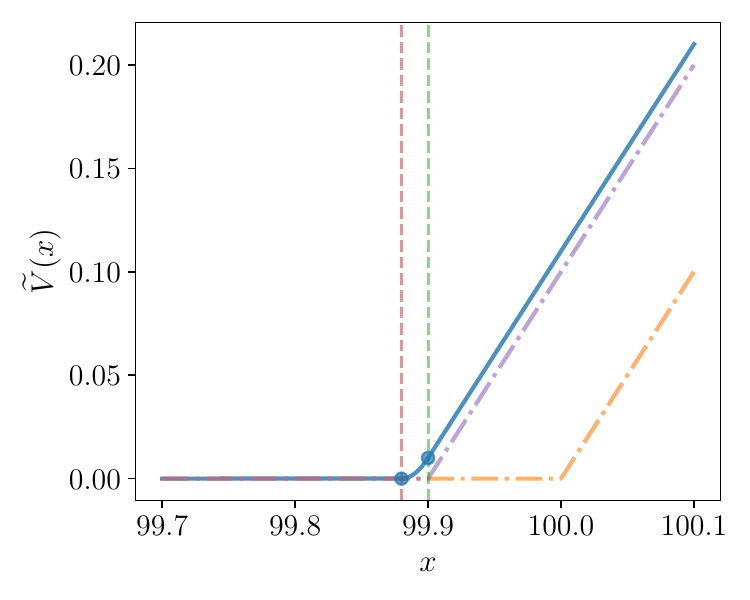}
    \includegraphics[width=0.45\linewidth]{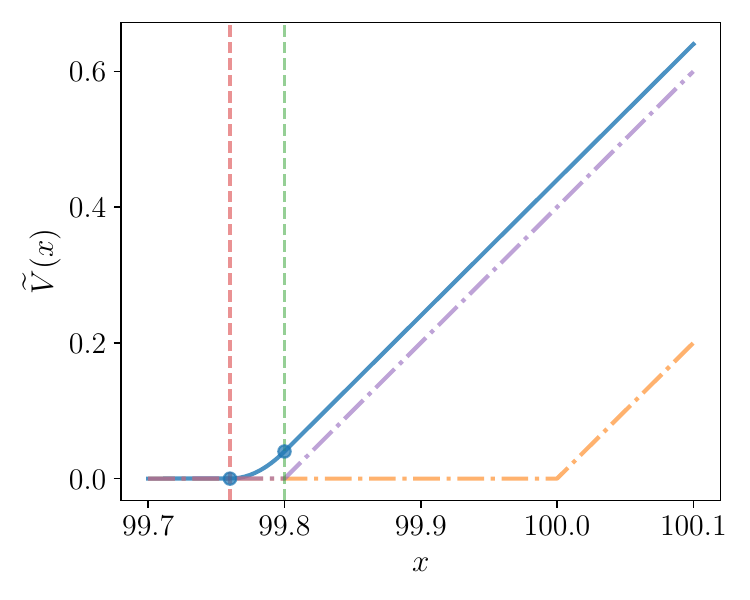}\\
    \includegraphics[width=0.5\linewidth]{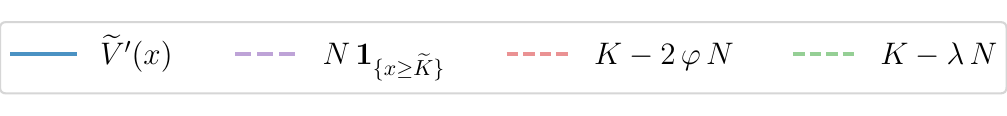}\\
    \includegraphics[width=0.45\linewidth]{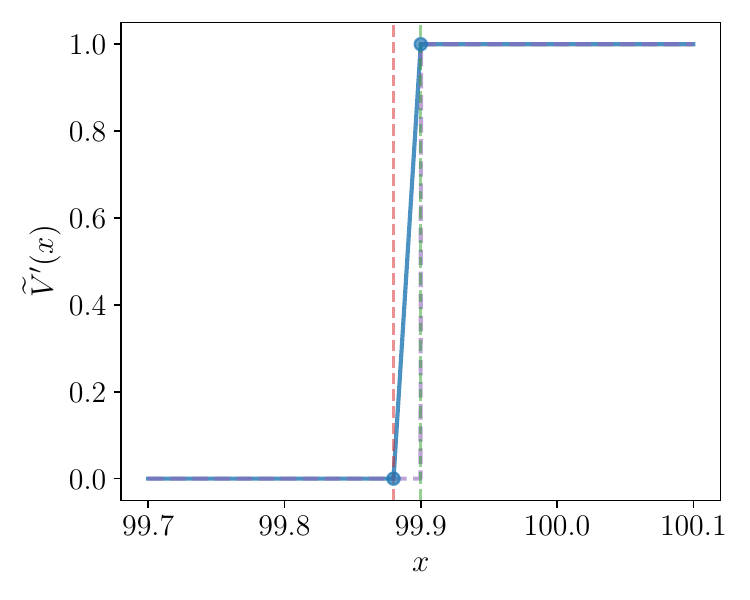}
    \includegraphics[width=0.45\linewidth]{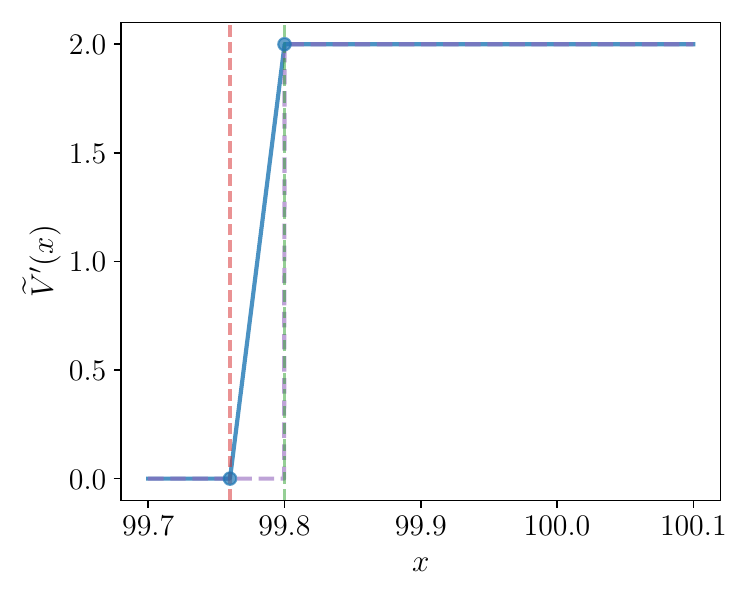}
    \caption{Top panels: Terminal condition $\widetilde{V}(x)$ for the case of a call option with  $K=100$, $\lambda= 0.1$, $\varphi = 0.06$, $r=0.05$, and $\sigma=0.10$.
    Bottom panels: Derivative $\widetilde{V}'(x)$ representing the hedger's position.
    The left panels are for $N=1$ and the right panels are for $N=2$. The dotted vertical lines represent the boundaries $K-2\,\phi\,N$ and $K-\lambda\,N$ as in  \eqref{eq:tildeVcall}. }
    \label{fig:terminal-condition-call}
\end{figure}
 We see two effects that differ from the frictionless case; a shift in the effective strike and, because $2\varphi - \lambda > 0$, a linear interpolation of the terminal holdings as the option approaches the in-the-money region. For the former effect, if the strike is $K$, any unaffected price above $\widetilde K$ would lead to the call option ending up in the money due to the price impact of the replicating position. Naively, one may expect the hedger to instantaneously transition their position from $0$ to $N$ at the strike $K$, as in the case of the frictionless model.  With price impact, however, this position
would make the observed terminal price jump from $K$ to
$K+\lambda N$, producing a discontinuity in the resulting payoff
as a function of the fundamental price. Such discontinuities
can obstruct replication, as illustrated by the digital call
in Section~\ref{sec:digital}. The terminal condition
\eqref{eq:terminal_condition} instead accounts for the feedback
between the hedging position and the payoff, and provides a continuous
modified terminal payoff $\widetilde V$. 
 
 In the case of midpoint execution costs $2\varphi = \lambda$, the hedger's delta still instantaneously jumps from $0$ to $N$ as in the frictionless case, albeit at the threshold $\widetilde K$. For the $2\varphi > \lambda$ case, the hedger linearly interpolates between these two extremes on the interval $(K-2\varphi N, K-\lambda N)$, which is proportional to the price impact discrepancy parameter $2\varphi - \lambda$ (see the bottom panels of Figure~\ref{fig:terminal-condition-call}). When execution is more expensive than the midpoint price, the replicator prepositions their hedge and accumulates inventory before the effective strike $\widetilde K$. This leads to the effective terminal payoff $\widetilde V$ being a continuously differentiable version of the familiar call option payoff. 

Figure \ref{fig:discretisation_error_hedge_call} shows a scatter plot of the impacted terminal price $P^{\pos^*}_T$ against $\val_T^*$. The strategy $\pos^*$ is the one obtained from $u(t,x)$ when solving the PDE  \eqref{eq:PDE} numerically. The replication identity $V(P^{\pos^*}_T) = \val_T^*$ a.s.~is illustrated as the number of time steps grows large.
\begin{figure}
    \centering
    \includegraphics[width=0.95\linewidth]{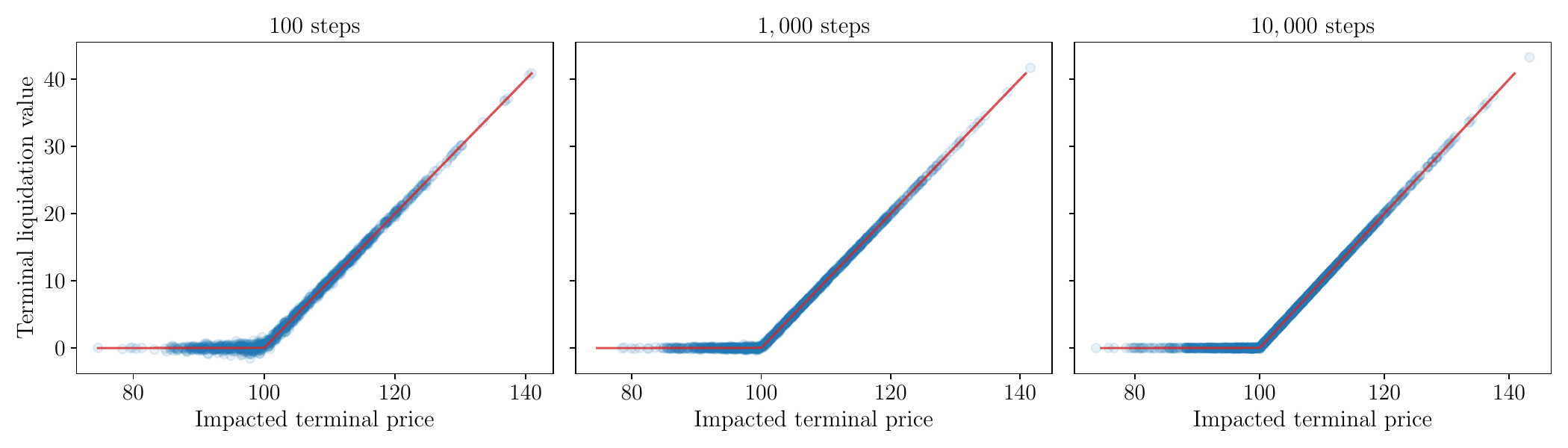}
    \caption{Comparison between $P^{\pos^*}_T$ and  $\val_T^*$. The payoff $V(x) = \,(x-100)^+$ is in a red solid line. The panels are for the cases when $[0,T]$ is discretised in: 100 steps (left), 1,000 steps (middle), and 10,000 steps (right). This illustrates convergence to zero 
    of the well-known discretisation error. }
    \label{fig:discretisation_error_hedge_call}
\end{figure}

Next, Figure \ref{fig:N-call-stress} shows the change in the option price and the initial position $\pos^*_{0+}$ when $N\in\{1,10\}$ and $\lambda \in\{0.01,0.1\}$ for fixed value of $\varphi = 0.06$. We see that the per-contract impact of increasing the notional by a factor of ten dominates that of the analogous tenfold increase in the price-impact parameter $\lambda$. Moreover, for the parameter values shown, the hedging strategy in our setting is always greater than the Black--Scholes delta. For a call option, the purchases used to hedge the payoff raise
the observed price and thereby increase the liability being hedged.
It follows that the hedger holds more shares
at a given fundamental price than in the frictionless model.\footnote{In the midpoint regime $2\varphi = \lambda$, this relationship can be readily deduced from the PDE \eqref{eq:PDE}. Differentiating the PDE leads to $w = \partial_x u$ solving $w_t + \frac{\sigma^2}{2}x^2\partial_{xx}w + (r + \sigma^2)x\partial_xw + r\lambda w\partial_x w = 0$. The Black--Scholes delta satisfies the same equation without the last term. Since both the delta and gamma for the call are nonnegative the final term is positive, and since the terminal conditions are also ordered, $N1_{\{x > \widetilde K\}} \ge N1_{\{x > K\}}$, comparison yields the pointwise relationship $\partial_x u \ge \partial_x u^{\mathrm{BS}}$.  }

\begin{figure}
    \centering
    \includegraphics[width=0.95\linewidth]{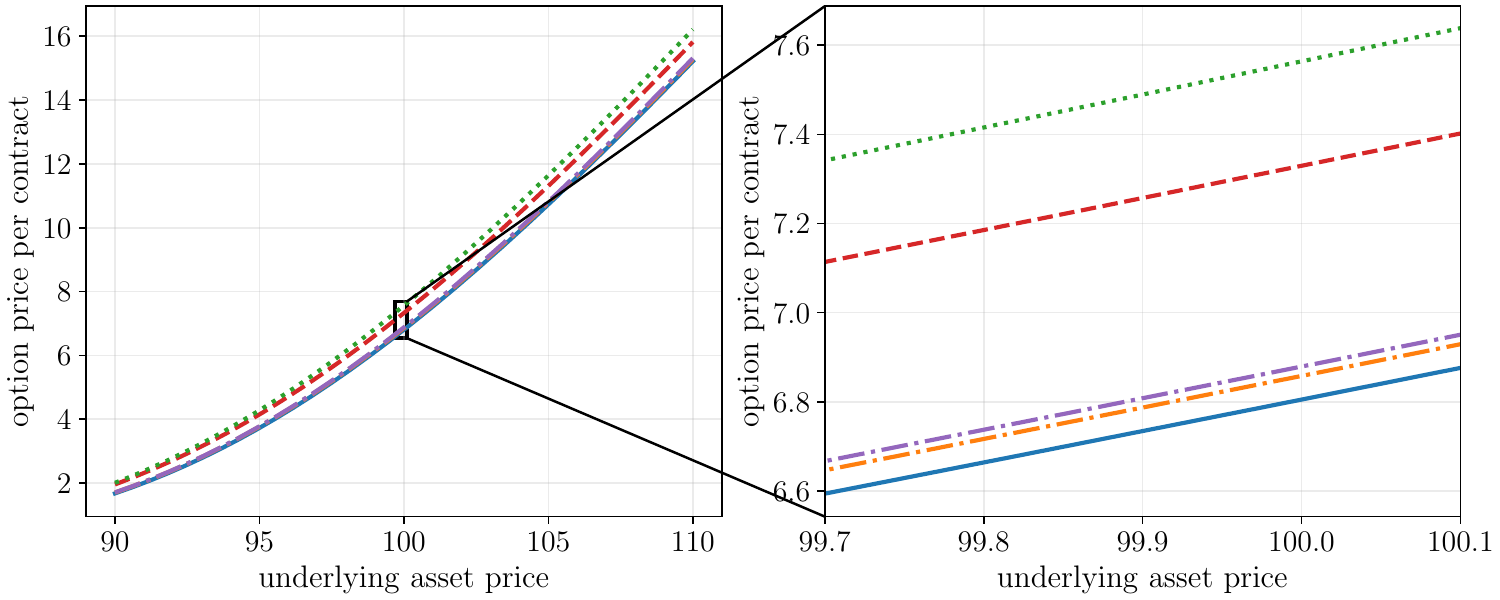}
    \includegraphics[width=0.9\linewidth]{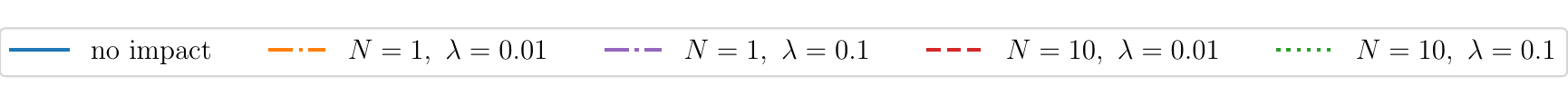}
    \includegraphics[width=0.95\linewidth]{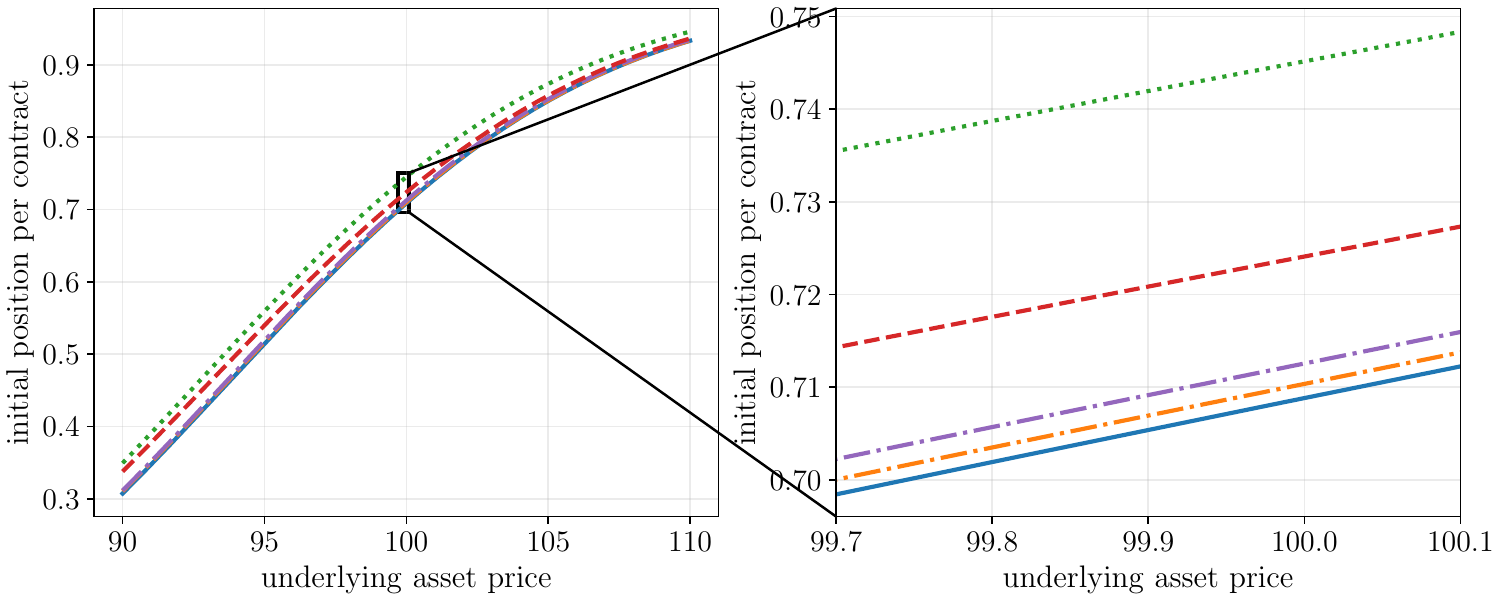}
    \caption{Call option price per contract (top panels) and initial replicating position per 
    contract (bottom panels) as functions of the initial fundamental price. The solid line 
    corresponds to the no-impact benchmark, while the dash-dotted and dashed lines 
    correspond respectively to $N=1$ and $N=10$ contracts with cost parameter $\varphi = 0.06$ and price impact parameter 
    $\lambda=0.01$ or  $\lambda=0.1$.}
    \label{fig:N-call-stress}
\end{figure}

Lastly, among 1,000 simulated paths, we found 9 for which the trading implied by the hedging strategy $\pos^*$ had the property that $S_T<K<P_T^{\pos^*}$; i.e., that the option expired in the money, but would have expired out of the money if the hedger did not trade in the underlying market. Figure~\ref{fig:S<K<P} depicts one of these paths, with model parameters as above and $N=1$. 
This illustrates how the hedger's actions changed the derivative's payoff.

\begin{figure}
    \centering
        \includegraphics[width=0.7\linewidth]{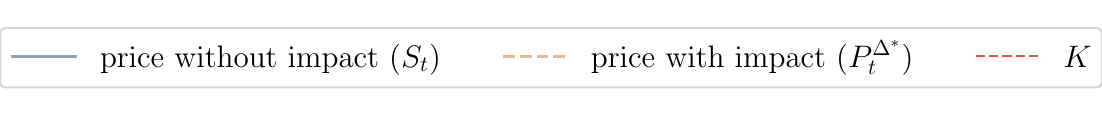}    
\includegraphics[width=0.95\linewidth]{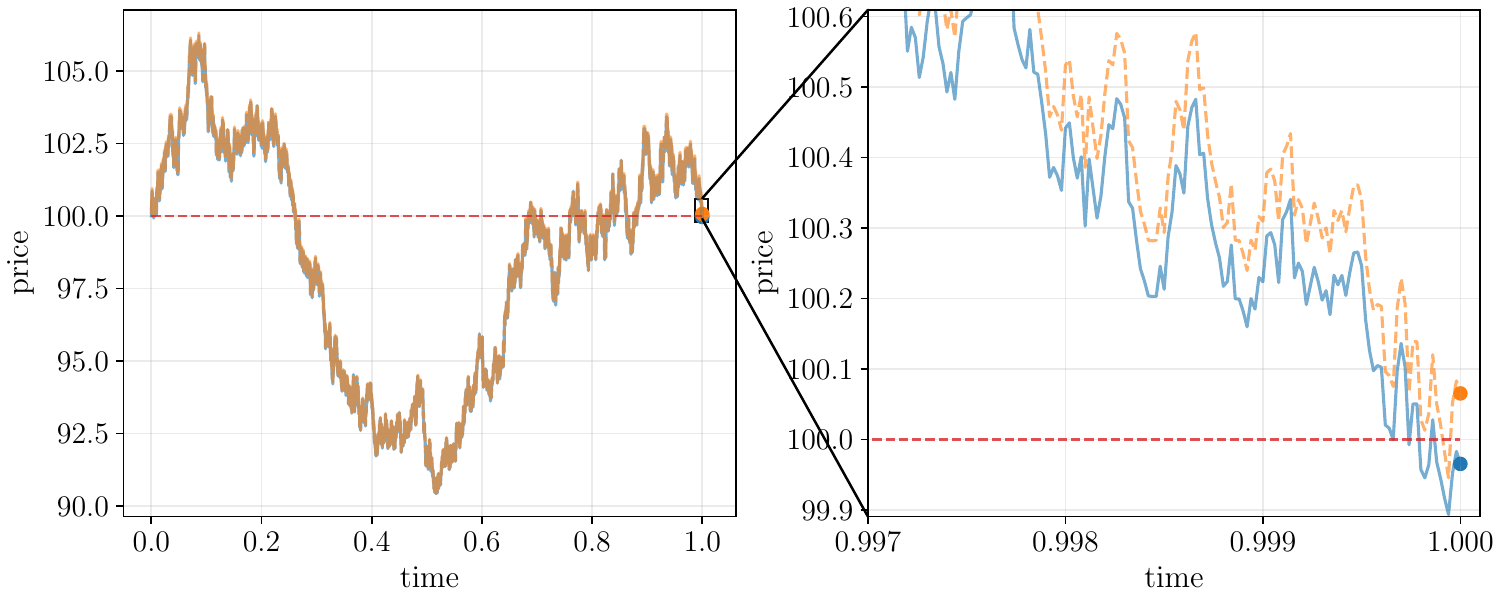}
    \includegraphics[width=0.9\linewidth]{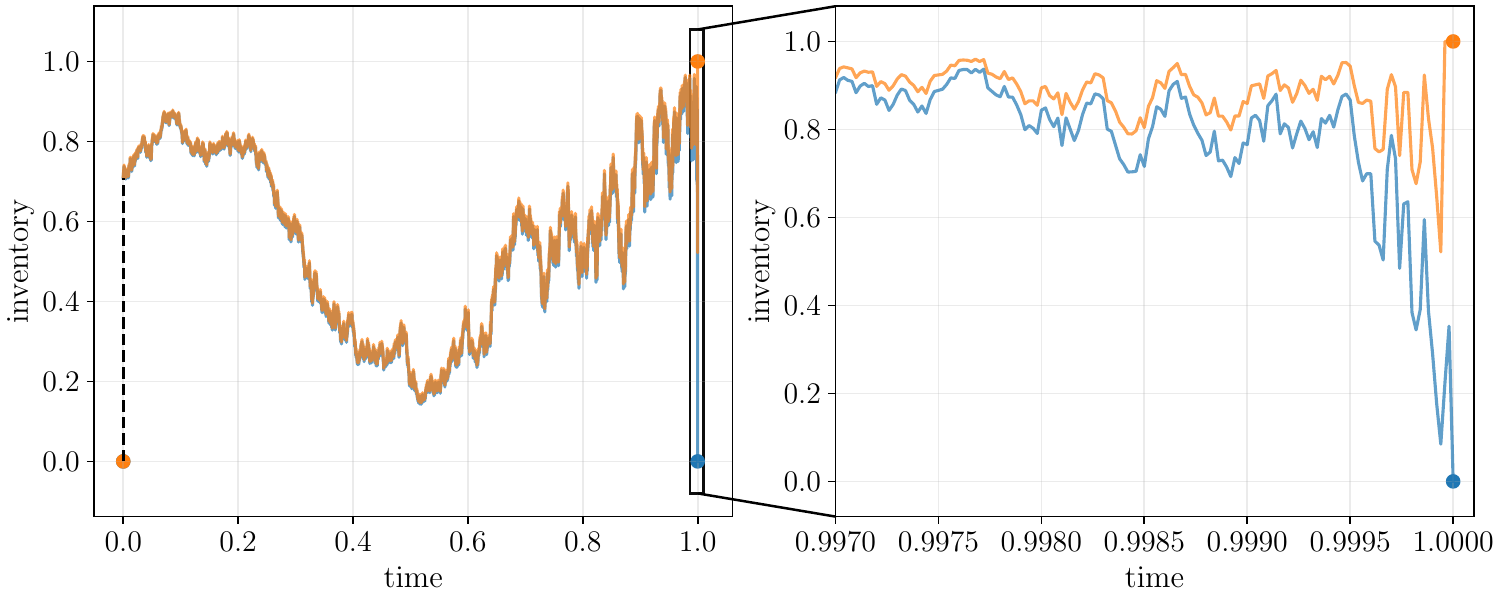}    
        \includegraphics[width=0.7\linewidth]{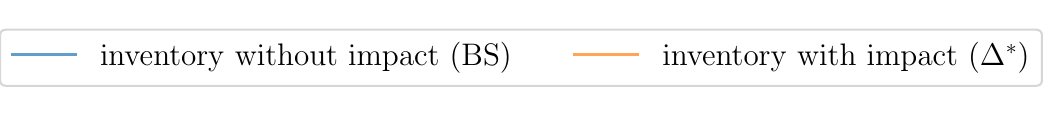}    

    \caption{Top panels: price process with and without the trading activity of the hedger. Bottom panels: inventory in the Black-Scholes model (no impact) against the inventory $\pos^*$ from our model.}
    \label{fig:S<K<P}
\end{figure}

\section{Conclusion} \label{sec:conclusion}

In this paper we studied the replication and pricing of European claims in the presence of price impact and execution costs. We covered the one-period and multi-period binomial models as well as the  continuous-time problem. In continuous time, the replication problem gives rise to a nonlinear pricing PDE together with an implicit terminal condition that captures the effect of the hedger's terminal position on the settlement price (illustrating the moving-target nature of the problem).  The hedger proceeds as follows: for a given payoff and number of contracts, they solve the implicit ODE \eqref{eq:implicit_ODE} for a modified payoff $\widetilde V$. They then solve the nonlinear PDE \eqref{eq:PDE} to obtain the modified liquidation value function $u$, with the replicating strategy given in feedback form by $\partial_x u$. With this in hand,  the hedger initiates their position through a bulk trade of $\partial_x u(0,S_0)$ shares and subsequently trades continuously according to $\pos^*_t = \partial_x u(t,S_t)$ until the maturity time $T$. At maturity, the terminal liquidation value of the hedging portfolio, accounting for both permanent price impact and execution costs, matches the gross payoff. The hedger then liquidates their terminal position $\pos_T^* = \partial_x u(T,S_T) = \partial_x \widetilde V(S_T)$ before settlement at time $T+$, and settles the contract by paying the resulting cash amount to the buyer.

In the midpoint regime $2\varphi=\lambda$, we established the well-posedness of the pricing problem and obtained the corresponding replicating strategy.
Our numerical examples illustrate that price impact introduces a nonlinear dependence on the number of contracts and substantially alters both option prices and hedging strategies. For the quadratic payoff case, we obtained all formulae in closed form, which served as a useful sanity check for the calculations in the paper. We envisage a variety of related directions for future work, ranging from more general models of price impact (e.g., propagator models) to multidimensional extensions and non-European claims.

\appendix
\section{Proofs}
\subsection{Proofs for Section~\ref{sec:one_period}}
For the proofs in this section we note that if $V$ is monotone then, any solution $\pos^*$ to \eqref{eq:pos_fixed_point} must have the same monotonicity as $V$.

\begin{proof}[Proof of Proposition \ref{thm: Lipschitz implies existence}] 
Define 
    \[
    f(x) = \frac{V(S_T(u) + I(x)) - V(S_T(d) + I(x))}{S_T(u)-S_T(d)} -x\,,
    \]
and note that $\pos^*$ satisfies \eqref{eq:pos_fixed_point} if and only if $f(\pos^*) = 0$. 
Using that $V$ is Lipschitz, we obtain the estimates 
\begin{align*}
f(x) & \leq -x + L_V \frac{|S_T(u)+I(x)-S_T(d)-I(x)|}{S_T(u)-S_T(d)}=-x+L_V\,.\\
f(x) & \geq -x - L_V \frac{|S_T(u)+I(x)-S_T(d)-I(x)|}{S_T(u)-S_T(d)}=-x-L_V\,.
\end{align*}
 These inequalities imply that $f(x) \ne 0$ if $|x| > L_V$, so that any solution must lie in $[-L_V,L_V]$, and that
$\lim_{x\to \pm \infty}f(x) = \mp \infty.$ Since $f$ is continuous it follows that there exists $\pos^* \in \R$ such that $f(\pos^*) =0$, which establishes the existence claim, as well as the property $|\pos^*| \le L_V$.

We now turn to the uniqueness claim. Without loss of generality, we assume that $V$ is nondecreasing, as the nonincreasing case is analogous. To establish uniqueness it suffices to show that $f$ is strictly decreasing. To this end, we let $x > y$ be given and we estimate that
\begin{align*}
f(x) - f(y)
& =   \frac{V(S_T(u) + I(x)) - V(S_T(u) + I(y))}{S_T(u) - S_T(d)}  - \frac{V(S_T(d) + I(x)) - V(S_T(d) + I(y))}{S_T(u) - S_T(d)} -\!(x-y) \\ 
& \le L_V\frac{S_T(u) + I(x) -(S_T(u) + I(y))}{S_T(u) - S_T(d)} - (x-y) \\
& \le (x-y)\bigg(\frac{L_VL_I}{S_T(u)-S_T(d)}-1\bigg) < 0.
\end{align*}
In the first inequality we applied the Lipschitz property of $V$ to the first term and we dropped the second term, which has a negative contribution because $V$ and $I$ are both nondecreasing. The final two inequalities followed from the Lipschitz property of $I$ and the condition \eqref{eq:uniqueness_condition}, respectively. This completes the proof.
\end{proof}

    \begin{proof}[Proof of Proposition~\ref{prop:replicating_value}]
    To obtain \eqref{eq: general val_0 1 period case}, we take a solution $\pos^*$ and evaluate \eqref{eq:value1+} separately at state $u$ and state $d$. Multiplying the state $u$ equation by $\alpha$ and the state $d$ equation by $1-\alpha$, for some $\alpha \in \R$, we obtain 
\begin{align*}
    \val_0^* = \tfrac{1}{1+r}&\Big(\alpha\,V\big(P^{\pos^*}_T(u)\big) + (1-\alpha)V\big(P^{\pos^*}_T(d)\big)\\
    &\quad - \alpha\,\pos^*\,P_T^{\pos^*}(u) - (1-\alpha) \,\pos^*\,P_T^{\pos^*}(d) - \pos^*\,\phi(-\pos^*) + (1+r)\,\pos^*\big(S_0 + \phi(\pos^*)\big)\Big).
\end{align*}
The value of $\alpha$ that makes the second line equal zero is
\begin{align*}
    \alpha &= \frac{- \phi(-\pos^*) - P_T^{\pos^*}(d) + (1+r)(S_0 + \phi(\pos^*)) }{P_T^{\pos^*}(u) - P_T^{\pos^*}(d)} 
    = q + \frac{\manip(\pos^*)}{S_T(u)-S_T(d)}.
\end{align*}
Substituting back in for $\val^*_0$ and recalling that $\pos^*$ satisfies 
\eqref{eq:pos_fixed_point} establishes \eqref{eq: general val_0 1 period case}.

Next, we note that the no price manipulation condition \eqref{eq:no_price_manipulation} ensures $\pos^*\manip(\pos^*) \geq 0$, so that  $\val_0^* \ge \tfrac{1}{1+r}\mathbb{E}^{\mathbb{Q}} [V(P_T^{\pos^*})]$. Moreover, 
it is clear from \eqref{eq:pos_fixed_point} that $\pos^* \geq 0$ if $V$ is nondecreasing and $\pos^* \leq 0$ if $V$ is nonincreasing.  Therefore, $V(P_T^{\pos^*}(\omega)) = V(S_T(\omega) + I(\pos^*)) \geq V(S_T(\omega))$, for $\omega \in \{u,d\}$, from which we deduce that
$\val_0^* \ge  \frac{1}{1+r} \mathbb{E}^{\mathbb{Q}}[V(S_T)]$,
  completing the proof.
\end{proof}
\begin{proof}[Proof of Proposition~\ref{prop:call_linear}] The payoff is Lipschitz with $L_V = |N|$, so by Proposition~\ref{thm: Lipschitz implies existence} there exists at least one solution $\pos^*$, and any solution satisfies $|\pos^*| \le |N|$. If $K 
\ge S_T(u)$ or $N=0$, then $\pos^* = 0$ is a solution, which is clearly minimal. For $N \ne 0$ and $K < S_T(u)$, the right-hand side of \eqref{eq:pos_fixed_point} is nonzero when $\pos = 0$, so zero is not a solution. Additionally, if both states are out of the money then the right-hand side of \eqref{eq:pos_fixed_point} is zero, so no solution can lead to both states being out of the money. If the call is in the money in both states then we have
\[\pos =  N \frac{S_T(u) + \lambda \pos - K - (S_T(d) + \lambda \pos -K)}{S_T(u) - S_T(d)} = N,\]
so that the replicating strategy is $\pos^* = N$, which is consistent with both states being in the money when $K < S_T(d) + \lambda N$. It just remains to study the case when state $u$ is in the money and state $d$ is out of the money.
This leads to the linear equation
$\pos = N \frac{S_T(u) + \lambda \pos - K }{S_T(u) - S_T(d)}$, which has unique solution 
\begin{equation} \label{eq:pos*_proof} 
\pos^* =  \frac{N(S_T(u) -K)}{S_T(u)-S_T(d) - \lambda N}.
\end{equation}
 A direct computation gives
\[S_T(\omega) + \lambda \pos^* - K = \frac{(S_T(u) - S_T(d))}{S_T(u) - S_T(d) -\lambda N}\Big(\big(S_T(u)-K\big)1_{\{\omega = u\}} - \big(K-S_T(d)-\lambda N\big)1_{\{\omega= d\}}\Big)
\]
for $\omega \in \{u,d\}$. 
Hence, the solution \eqref{eq:pos*_proof} is consistent with state $u$ being in the money and state $d$ being out of the money if and only if  $S_T(u) > K  \ge S_T(d) + \lambda N$, in which case the denominator in \eqref{eq:pos*_proof} is positive. This shows that there is actually a unique solution for $K < S_T(u)$, from which the formula for $\pos^*_{\min}$ follows. 
\end{proof}

\begin{proof}[Proof of Proposition~\ref{prop:call_sqrt}]
    If $K \ge S_T(u)$ or $N = 0$, then $\pos = 0$ satisfies \eqref{eq:pos_fixed_point} and is clearly the minimal solution in this case.

   Next, we assume $N \ne 0$ and $K < S_T(u)$. If the call
were out of the money in both states, the right-hand side of
\eqref{eq:pos_fixed_point} would vanish, forcing $\pos=0$ and hence
$S_T(u)<K$, so this cannot occur here. If   
the call is in the money in both outcomes then \eqref{eq:pos_fixed_point} becomes $\pos = N$, which is a valid, consistent solution when $ K < S_T(d) +\lambda\, \mathrm{sign}(N) \sqrt{|N|}$. The remaining case is when the call is in the money in state $u$, but out of the money in state $d$. In this case, \eqref{eq:pos_fixed_point} becomes
    \begin{equation} \label{eq:square_root_case2} 
    \pos =  \frac{N(S_T(u) + \lambda\, \mathrm{sign}(\pos)\sqrt{|\pos|}-K)}{S_T(u)-S_T(d)},
    \end{equation}
    from which it is clear that any solution $\pos^*$ must satisfy $\mathrm{sign}(\pos^*) = \mathrm{sign}(N)$. Multiplying \eqref{eq:square_root_case2} by $\mathrm{sign}(N)$ and setting $x = \sqrt{|\pos|} \ge 0$, we see that the fixed point equation becomes the quadratic equation $\Phi(x) = 0$, where
    \[\Phi(x) = \big(S_T(u) - S_T(d)\big)x^2 - \lambda Nx - |N|\big(S_T(u)-K\big).\]
    Since $\Phi$ is upward-pointing and $\Phi(0) < 0$, we have that $\Phi$ has one positive and one negative root, with the former being equal to $c_*$. Since $x \ge 0$, the unique solution is $\pos^* = \mathrm{sign}(N)c_*^2$.
  
    Using that $\pos^*$ and $N$
have the same sign gives 
\[S_T(u)+\lambda\,\mathrm{sign}(N)c_*-K=\frac{(S_T(u)-S_T(d))c_*^{2}}{|N|}>0,\]
which confirms that $\pos^*$ is consistent with the assumption we made that the call is in the money in state $u$. Next, subtracting $S_T(u)-S_T(d)$ from both sides in the above display gives 
\[
  S_T(d)+\lambda\,\mathrm{sign}(N)c_*-K=\frac{(S_T(u)-S_T(d))(c_*^{2}-|N|)}{|N|},\]
 which shows that the call is out of the money in state
$d$ if and only if $c_*^{2}\le|N|$.

The proof will be complete once we show that 
\begin{equation} \label{eq:c*_equivalence} 
c_*^2 \le |N| \iff 
    K \ge S_T(d) +\lambda\, \mathrm{sign}(N) \sqrt{|N|}, 
\end{equation}
which guarantees that the ranges appearing in \eqref{eq:pos_call_square_root} are comprehensive and pick out the correct solution in each case. Since $c_*$ is a positive root of the quadratic $\Phi$ and $\Phi(0)  <0$, we have for $x \ge 0$, that $\Phi(x) \ge 0$ if and only if $x \ge c_*$. Taking $x = \sqrt{|N|}$ gives $\Phi(\sqrt{|N|}) = |N|(K-S_T(d) - \lambda\, \mathrm{sign}(N)\sqrt{|N|})$ from which \eqref{eq:c*_equivalence} follows.  Since for any $K < S_T(u)$ we found a unique solution, 
this establishes the formula for $\pos^*_{\min}$. 
\end{proof}

 \begin{proof}[Proof of Proposition~\ref{prop:digital_call}]
     If $K \leq S_T(d)$ then, at $\pos = 0$, the digital call is in the money in both states, while if $K >S_T(u)$ then it is out of the money in both states. In both of these cases it is easy to see that $\pos = 0$ solves \eqref{eq:pos_fixed_point} so that it is the minimal-in-absolute-value solution.

     Therefore, we restrict our attention to $S_T(d) < K \le S_T(u)$. Then we only need to consider the case when the call is in the money in state $u$, but out of the money in state $d$. Indeed, for all other cases the right-hand side of \eqref{eq:pos_fixed_point} collapses to zero, forcing $\pos = 0$, which clearly does not replicate the payoff since here $P_T^\pos = S_T$ and $S_T(d) < K \le S_T(u)$. Now, under the assumption that the digital call is in the money in state $u$ and out of the money in state $d$, \eqref{eq:pos_fixed_point} becomes
     \[\pos = \frac{N}{S_T(u)-S_T(d)}.\]
     With this unique candidate solution $\pos$, we see that the assumptions we made for the call to be in the money in state $u$, and out of the money for state $d$ are satisfied if and only if 
     \[S_T(d) + \frac{\lambda N}{S_T(u) - S_T(d)} < K \le S_T(u) + \frac{\lambda N}{S_T(u) - S_T(d)}.\]
     From here we readily deduce that \eqref{eq:pos_digital} is a valid solution whenever the strike $K$ is not in $J$. Since this exhausts all possibilities we see that no solution to \eqref{eq:pos_fixed_point} exists for the strike values in \eqref{eq:bad_strikes}. 
 \end{proof}

\subsection{Proofs and derivations for Section~\ref{sec:multi_period}} \label{sec:multi_period_proofs}

We start this section by using \eqref{eq:value_process_defn} to compute for any $m = 1,\dots,M$,
\begin{align} \val_{t_m} & = \cash_{t_m} + \pos_{t_m}(P_{t_m}^\pos - \varphi \pos_{t_m})  = \cash_{t_{m}-}(1+rh) + \pos_{t_m}(P_{t_{m}}^\pos - \varphi \pos_{t_{m}}) \nonumber\\
& = \cash_{t_{m-1}+}(1+rh) + \pos_{t_{m}}(P_{t_{m}}^\pos - \varphi \pos_{t_{m}}) \nonumber \\
& = \Big(\cash_{t_{m-1}} - (\pos_{t_{m}}- \pos_{t_{m-1}})\big(P_{t_{m-1}}^\pos + \varphi (\pos_{t_{m}} - \pos_{t_{m-1}})\big)\Big)(1+rh) + \pos_{t_{m}}(P_{t_{m}}^\pos - \varphi \pos_{t_{m}}) \nonumber  \\
& = \Big(\val_{t_{m-1}} - \pos_{t_{m}}\big(P_{t_{m-1}}^\pos + \varphi (\pos_{t_{m}} - \pos_{t_{m-1}})\big) +\varphi \pos_{t_{m-1}}\pos_{t_{m}}\Big)(1+rh)  + \pos_{t_{m}}(P_{t_{m}}^\pos - \varphi \pos_{t_{m}}) \label{eq:value_one_step_alt} \\
& = \val_{t_{m-1}}(1+rh) +\pos_{t_{m}}\big(S_{t_{m}}-S_{t_{m-1}}(1+rh)\big) -(2\varphi-\lambda)(1+rh) \pos_{t_{m}}(\pos_{t_{m}}-\pos_{t_{m-1}})  \nonumber \\
& \hspace{10.74cm}- rh(\lambda-\varphi)\pos_{t_{m}}^2. \label{eq:value_one_step}
\end{align}

\begin{proof}[Proof of Proposition~\ref{prop:linear_no_manip}]
Throughout we write $\beta := (1+rh)^{-1} \in (0,1]$ for the one-period discount factor and let $\pos$ be an admissible inventory process. Set $t_{M+1} = T+$, and note that admissibility gives $\pos_{t_0} = \pos_{t_{M+1}} = 0$.
Taking conditional expectation $\mathbb{E}^{\mathbb{Q}}_{t_{m-1}}[\cdot]$ in \eqref{eq:value_one_step}, and using that $\pos_{t_{m}}$ is $\mathcal{F}_{t_{m-1}}$-measurable by the left-continuity convention, while the discounted fundamental price is a $\mathbb{Q}$-martingale (so that $\mathbb{E}^{\mathbb{Q}}_{t_{m-1}}[S_{t_{m}}] = (1+rh)S_{t_{m-1}}$), we obtain the one-step relation
\begin{equation}\label{eq:one_step_cond}
  \val_{t_{m-1}}
  = \beta\mathbb{E}^{\mathbb{Q}}_{t_{m-1}}[\val_{t_{m}}]
    + (2\varphi-\lambda)\pos_{t_{m}}(\pos_{t_{m}}-\pos_{t_{m-1}})
    + rh(\lambda-\varphi)\beta\pos_{t_{m}}^2 .
\end{equation}
Multiplying \eqref{eq:one_step_cond} by $\beta^{m-1}$, taking full expectations and using the tower property gives, after rearranging,
\[
  \beta^{m}\mathbb{E}^{\mathbb{Q}}[\val_{t_{m}}] - \beta^{m-1}\mathbb{E}^{\mathbb{Q}}[\val_{t_{m-1}}]
  = -(2\varphi-\lambda)\mathbb{E}^{\mathbb{Q}}[\beta^{m-1}\pos_{t_{m}}(\pos_{t_{m}}-\pos_{t_{m-1}})]
    - rh(\lambda-\varphi)\mathbb{E}^{\mathbb{Q}}[\beta^{m}\pos_{t_{m}}^2].
\]
Summing over $m=1,\dots,M+1$, and using that no interest accrues between $T$ and $T+$, the left-hand side telescopes leading to
\begin{equation}\label{eq:disc_master_id}
  \beta^{M}\mathbb{E}^{\mathbb{Q}}[\val_T]
  = \val_0 -(2\varphi-\lambda)\mathbb{E}^{\mathbb{Q}}\bigg[\sum_{m=1}^{M+1}\beta^{m-1}\pos_{t_{m}}(\pos_{t_{m}}-\pos_{t_{m-1}})\bigg]
    - rh(\lambda-\varphi)\mathbb{E}^{\mathbb{Q}}\bigg[\sum_{m=1}^M\beta^m\pos_{t_m}^2\bigg].
\end{equation}
Next, we use the discrete-time integration by parts identity
\begin{equation}\label{eq:disc_ibp}
  \pos_{t_{m}}(\pos_{t_{m}}-\pos_{t_{m-1}})
  = \tfrac12(\pos_{t_{m}}^2 - \pos_{t_{m-1}}^2)
    + \tfrac12(\pos_{t_{m}}-\pos_{t_{m-1}})^2.
\end{equation}
Multiplying the difference of squares term by $\beta^{m-1}$ and summing leads to 
\begin{align*}
  \sum_{m=1}^{M+1}\beta^{m-1}(\pos_{t_{m}}^2-\pos_{t_{m-1}}^2) & = \sum_{m=1}^{M+1}(\beta^m\pos^2_{t_{m}} - \beta^{m-1}\pos^2_{t_{m-1}}) + \sum_{m=1}^{M+1}(\beta^{m-1} - \beta^{m})\pos^2_{t_{m}} \\
  & 
  =  rh\sum_{m=1}^{M}\beta^m\pos^2_{t_m},
\end{align*}
where in the last step we used the fact that $\beta^{m-1} - \beta^m = rh\beta^m$. 
Substituting this and \eqref{eq:disc_ibp} into \eqref{eq:disc_master_id} yields
\begin{align} \beta^{M}\,\mathbb{E}^{\mathbb{Q}}[\val_T]
  & = \val_0 -\tfrac{2\varphi-\lambda}{2}\mathbb{E}^{\mathbb{Q}}\bigg[ rh\sum_{m=1}^{M}\beta^m\pos^2_{t_m} + \sum_{m=1}^{M+1}\beta^{m-1}(\pos_{t_{m}}-\pos_{t_{m-1}})^2\bigg] \nonumber \\
  & \qquad \,\, - rh(\lambda-\varphi)\,\mathbb{E}^{\mathbb{Q}}\bigg[\sum_{m=1}^{M}\beta^m\pos^2_{t_m}\bigg] \nonumber \\
  & = \val_0 -\tfrac{2\varphi-\lambda}{2}\,\mathbb{E}^{\mathbb{Q}}\bigg[ \sum_{m=1}^{M+1}\beta^{m-1}(\pos_{t_{m}}-\pos_{t_{m-1}})^2\bigg]  - \tfrac{rh\lambda }{2}\E^\Q\bigg[\sum_{m=1}^{M} \beta^m \pos^2_{t_m}\bigg], \label{eq:discounted_liquidation_value}
\end{align} 
where we used the fact that $-\frac{1}{2}(2\varphi - \lambda)rh - rh(\lambda-\varphi) = -\frac{1}{2}rh\lambda$ 
to combine like terms. 
If $\val_0 = 0$, the condition \eqref{eq:linear_price_manip} ensures that the right-hand side is nonpositive completing the proof.
\end{proof}
\begin{proof}[Proof of Corollary~\ref{cor:BS_bound}]
 The expression \eqref{eq:multiperiod_value_decomposition} follows from \eqref{eq:discounted_liquidation_value} and the replication condition \eqref{eq: fixed point equation}, which holds courtesy of Theorem~\ref{thm:binomial_replication} proved below. The estimate $\val_0^* \ge \E^\Q[V(P_T^{\pos^*})]/(1+rh)^M$ follows from the fact that the final two terms in \eqref{eq:multiperiod_value_decomposition} are nonnegative under the absence of price manipulation condition \eqref{eq:linear_price_manip}. For monotone $V$, the estimate $\val_0^* \ge \E^\Q[V(S_T)]/(1+rh)^M$ now follows in the same way as in the proof of Proposition~\ref{prop:replicating_value}.
\end{proof}
\begin{proof}[Proof of Theorem~\ref{thm:binomial_replication}]
We start by deriving \eqref{eq:cash_value_discrete} before introducing the notation in the theorem statement.
To this end, we take $m \in \{1,\dots,M\}$ and suppose positions $\pos^*_{t_m}$ and liquidation values $\val^*_{t_m}$ have previously been solved for. We fix an outcome $\omega^{m-1}$ observed up to time $t_{m-1}$. To obtain a concise expression for the liquidation value at the previous trading time $\val_{t_{m-1}}^*(\omega^{m-1})$, we proceed in a similar fashion as in the proof of Proposition~\ref{prop:replicating_value}. That is, we multiply \eqref{eq:value_one_step_alt} when the next outcome of chance is $\omega_{m} = u$ by $\alpha \in \R$, and we add it to $1-\alpha$ multiplied by \eqref{eq:value_one_step_alt} evaluated at the outcome $\omega_{m} = d$, to obtain 
\begin{align*} 
\val_{t_{m-1}}^*(\omega^{m-1}) =&  \tfrac{1}{1+rh}\bigg(\alpha  \val_{t_{m}}^*(\omega^{m-1}u) + (1-\alpha)\val_{t_{m}}^*(\omega^{m-1} d) \\
 & \qquad \qquad  -\pos_{t_{m}}^*(\omega^{m-1})\Big(\alpha P_{t_{m}}^{\pos^*}(\omega^{m-1}u) + (1-\alpha) P_{t_{m}}^{\pos^*} (\omega^{m-1} d) - \varphi 
\pos^*_{t_{m}}(\omega^{m-1}) \\
&\qquad \qquad  + \big(2\varphi \pos^*_{t_{m-1}}(\omega^{m-2}) -\varphi\pos_{t_{m}}^*(\omega^{m-1}) - P_{t_{m-1}}^{\pos^*}(\omega^{m-1})\big)(1+rh)\Big)\bigg).
\end{align*}
The choice of $\alpha$ that makes the final two lines vanish is 
\begin{align*} \alpha & = \frac{S_{t_{m-1}}(\omega^{m-1})(1+rh) - S_{t_{m}}(\omega^{m-1} d)}{S_{t_{m}}(\omega^{m-1}u) - S_{t_{m}}(\omega^{m-1}d)} \\
& \qquad + \frac{ (2\varphi-\lambda +\varphi rh)\pos^*_{t_{m}}(\omega^{m-1})+(\lambda - 2\varphi)(1+rh)\pos^*_{t_{m-1}}(\omega^{m-2})}{S_{t_{m}}(\omega^{m-1}u) - S_{t_{m}}(\omega^{m-1}d)}. 
\end{align*}
 As such, substituting back in yields
\begin{align*}
&\val_{t_{m-1}}^*(\omega^{m-1})\\
& \quad = \tfrac{1}{1+rh}\E^\Q\big[\val_{t_{m}}^*\mid \omega^{m-1}\big]  \\
 & \qquad + \frac{\val_{t_{m}}^*(\omega^{m-1}u) - \val_{t_{m}}^*(\omega^{m-1}d)}{S_{t_{m}}(\omega^{m-1}u) - S_{t_{m}}(\omega^{m-1}d)}\bigg(\frac{(2\varphi-\lambda +\varphi rh)}{1+rh}\pos^*_{t_{m}}(\omega^{m-1}) +(\lambda- 2\varphi)\pos^*_{t_{m-1}}(\omega^{m-2})\bigg)  \\
 & \quad  = \frac{\E^\Q\big[\val_{t_{m}}^*\mid \omega^{m-1}\big]}{1+rh} +\pos^*_{t_{m}}(\omega^{m-1})\bigg(\frac{(2\varphi-\lambda +\varphi rh)}{1+rh}\pos^*_{t_{m}}(\omega^{m-1}) -( 2\varphi-\lambda)\pos^*_{t_{m-1}}(\omega^{m-2})\bigg),
\end{align*}
where in the final equality we used that $\pos^*_{t_{m}}(\omega^{m-1})$ satisfies \eqref{eq:multiperiod_position_equation}. This establishes \eqref{eq:cash_value_discrete}. We now substitute \eqref{eq:cash_value_discrete} into \eqref{eq:multiperiod_position_equation} at the previous time-index $t_{m-1}$, which gives
\begin{align*}
\pos^*_{t_{m-1}}(\omega^{m-2}) & = \frac{1}{1+rh}\frac{\E^\Q[\val^*_{t_{m}}\mid \omega^{m-2}u] - \E^\Q[\val^*_{t_{m}}\mid \omega^{m-2}d]}{S_{t_{m-1}}(\omega^{m-2}u)  - S_{t_{m-1}}(\omega^{m-2}d)} \\
& \qquad + \frac{2\varphi-\lambda +\varphi rh}{1+rh}\frac{(\pos^*_{t_{m}})^2(\omega^{m-2}u) - (\pos^*_{t_{m}})^2(\omega^{m-2}d)}{S_{t_{m-1}}(\omega^{m-2}u)  - S_{t_{m-1}}(\omega^{m-2}d)} \\
& \qquad - (2\varphi-\lambda) \pos^*_{t_{m-1}}(\omega^{m-2})\frac{\pos^*_{t_{m}}(\omega^{m-2}u) -\pos^*_{t_{m}}(\omega^{m-2}d)}{S_{t_{m-1}}(\omega^{m-2}u)  - S_{t_{m-1}}(\omega^{m-2}d)},
\end{align*}
with the understanding that $\omega^{m-2}$ for $m=1$ is the empty outcome before any randomness has been observed.
This is a linear equation for $\pos^*_{t_{m-1}}(\omega^{m-2})$ and has a unique solution if and only if \eqref{eq:nonzero_denom} holds, in which case it is given by 
\begin{equation} \label{eq:strategy_discrete}
\begin{split} 
\pos^*_{t_{m-1}}(\omega^{m-2}) = &  \frac{1}{1+rh}\bigg(\frac{\E^\Q[\val^*_{t_{m}}\mid \omega^{m-2}u] - \E^\Q[\val^*_{t_{m}}\mid \omega^{m-2}d]}{S_{t_{m-1}}(\omega^{m-2}u)  - S_{t_{m-1}}(\omega^{m-2}d) + (2\varphi-\lambda)(\pos^*_{t_{m}}(\omega^{m-2}u) - \pos^*_{t_{m}}(\omega^{m-2}d))}\\
&  + \frac{(2\varphi-\lambda +\varphi rh)((\pos^*_{t_{m}})^2(\omega^{m-2}u) - (\pos^*_{t_{m}})^2(\omega^{m-2}d))}{S_{t_{m-1}}(\omega^{m-2}u)  - S_{t_{m-1}}(\omega^{m-2}d) + (2\varphi-\lambda)(\pos^*_{t_{m}}(\omega^{m-2}u) - \pos^*_{t_{m}}(\omega^{m-2}d))}\bigg).  
\end{split}
\end{equation}

We now establish Theorem~\ref{thm:binomial_replication}, arguing by backward induction on $m$. For the base case,  Proposition~\ref{thm: Lipschitz implies existence} guarantees a solution $f_M(s)$ for every $s \in \mathcal{S}_{M-1}$. The replication condition \eqref{eq: fixed point equation} then gives $\val^*_{t_M} = V(S_{t_M} + \lambda f_M(S_{t_{M-1}})) = g_M(S_{t_{M-1}},S_{t_M})$ for $\pos^*_{t_M} = f_M(S_{t_{M-1}})$.
For the inductive step, under the condition \eqref{eq:denominator_finite}, the functions $f_m$ and $g_m$ are well-defined for every $m=0,\dots,M$. The induction hypothesis and definition of $f_m,g_m$ show that $(\pos^*,\val^*)$ given by \eqref{eq:cash_value_discrete} and \eqref{eq:strategy_discrete} correspond precisely to $(\pos^*,\val^*)$ given in the statement of the theorem. This establishes the result.
\end{proof}

\subsection{Proofs for Section~\ref{sec:cont_time}}

 \begin{proof}[Proof of Proposition~\ref{prop:limit}]
 Throughout we write $\beta := (1+rh)^{-1} \in (0,1]$ for the one-period discount factor and perform similar manipulations as in the proof of Proposition~\ref{prop:linear_no_manip}.
 Indeed, recalling the expression \eqref{eq:value_one_step} for the process $\val^M$ and rearranging allows us to write
\[
  \beta\val^M_{t_{m}} -\val^M_{t_{m-1}}= \pos_{t_{m}}(\beta S_{t_{m}}-S_{t_{m-1}})
    - (2\varphi-\lambda)\pos_{t_{m}}(\pos_{t_{m}}-\pos_{t_{m-1}})
    - rh(\lambda-\varphi)\beta\pos_{t_{m}}^2 .\]
Now fix $t \in [0,T]$ and let $\overline m^M = \lfloor tM/T\rfloor$, which satisfies $t_{\overline m^M} \to t$ as $M \to \infty$.
We now multiply by $\beta^{m-1}$ on both sides and sum from $m = 1$ to $\overline m^M$ to obtain
\begin{equation} \label{eq:discrete_disc_val}
\begin{aligned} 
\beta^{\overline m^M} \val^M_{t_{\overline m^M}} - \val^M_0 & = \sum_{m=1}^{\overline m^M} \pos_{t_{m}}(\beta^{m}S_{t_{m}} - \beta^{m-1} S_{t_{m-1}}) - (2\varphi- \lambda)\sum_{m=1}^{\overline m^M}\beta^{m-1} (\pos_{t_{m}} - \pos_{t_{m-1}})^2 \\
& \qquad - (2\varphi- \lambda) \sum_{m=1}^{\overline m^M}\beta^{m-1} \pos_{t_{m-1}}(\pos_{t_{m}} - \pos_{t_{m-1}}) - r(\lambda - \varphi)\sum_{m=1}^{\overline m^M} \beta^{m}\pos_{t_{m}}^2 h,
 \end{aligned}
 \end{equation}
 where we used the identity 
 \[\pos_{t_{m}}(\pos_{t_{m}} - \pos_{t_{m-1}}) = (\pos_{t_{m}} - \pos_{t_{m-1}})^2 + \pos_{t_{m-1}}(\pos_{t_{m}} - \pos_{t_{m-1}}).\]
 Recalling that $h = t_{m+1} - t_m $ we see that $\beta^m = (1+\frac{rT}{M})^{-m} = (1+\frac{rt_m}{m})^{-m}$. As such, if $t_m \to u$ as $M \to \infty$ then $\beta^m \to e^{-ru}$. Standard convergence results for discretized Riemann sums and quadratic variations imply, for each fixed $t$, that the right-hand side of \eqref{eq:discrete_disc_val} converges to  
 \begin{equation} \int_0^t \pos_u \d(e^{-ru}S_u) -r(\lambda-\varphi)\int_0^te^{-ru}\pos_u^2 \d u - (2\varphi-\lambda)\int_{(0,t]} e^{-ru}(\pos_u \d\pos^+_u + \d[\pos^+]_u) \label{eq:disc_val}
 \end{equation}
  as $M \to \infty$ (see, e.g., \cite[Proposition~I.4.44 and Theorem~I.4.47]{jacod2003limit}). As such, the left-hand side of \eqref{eq:discrete_disc_val} converges as well to $e^{-rt}\widetilde \val_t - \widetilde \val_0$, where we  call $\widetilde \val_t$ the limit of $\val^M_t$, which is c\`adl\`ag since the right-hand side of \eqref{eq:disc_val} is. 
 To simplify the expression \eqref{eq:disc_val} note by It\^o's formula and \eqref{eq:local_vol} that 
 \begin{equation} \label{eq:discounted_price}
     \d(e^{-ru}S_u) = e^{-ru}(-rS_u \d u + \d S_u) = e^{-ru}\sigma(u,S_u)S_u \d W_u.
 \end{equation} Next, we compute
 \[\d\big(e^{-ru}(\pos^+_u)^2\big) = e^{-ru}(-r\pos_u^2 \d u + 2\pos_u \d \pos^+_u + \d[\pos^+]_u), \] so that 
     \begin{equation} \label{eq:QV_strategy_identity}
     \int_{(0,t]} e^{-ru}\pos_u \d\pos_u^+ + \frac{1}{2}\int_{(0,t]} e^{-ru}\d[\pos^+]_u = \frac{1}{2}e^{-rt}\pos_{t+}^2 - \frac{1}{2}\pos_{0+}^2 + \frac{r}{2}\int_0^t e^{-ru}\pos_u^2 \d u.
     \end{equation}
Substituting \eqref{eq:discounted_price} and \eqref{eq:QV_strategy_identity} into \eqref{eq:disc_val}, and recalling that \eqref{eq:disc_val} was equal to $e^{-rt} \widetilde \val_t - \widetilde \val_0$, yields 
\begin{align*} 
e^{-rt}\widetilde \val_t &  = \widetilde \val_0 +\int_0^t e^{-ru}\pos_u\sigma(u,S_u)S_u \d W_u - \frac{r\lambda}{2}\int_0^te^{-ru}\pos_u^2 \d u  \\
& \hspace{6cm}-\frac{2\varphi -\lambda}{2}\bigg(e^{-rt}\pos_{t+}^2 - \pos_{0+}^2 +\int_{(0,t]}e^{-rs} \d[\pos^+]_s\bigg).
\end{align*}
We can absorb the term $\frac{2\varphi-\lambda}{2}\pos^2_{0+}$ into the quadratic variation integral by using the convention $\int_{\{0\}}e^{-rs}\d[\pos^+]_s = \pos^2_{0+}$. Passing to the left-continuous modification $\val$ then gives \eqref{eq:cash_value_dynamics}, completing the proof.  \end{proof}

\begin{proof}[Proof of Proposition~\ref{prop:price_manip_cont_time}] First we show sufficiency of the conditions on the parameters. Fixing an admissible strategy $\pos$ and
taking expectation in \eqref{eq:cash_value_dynamics} at time $T$ and with $\val_0 = 0$ leads to
\begin{equation} \label{eq:expected_discounted_value}
\E^\Q[e^{-rT}\val_T] = -\frac{r\lambda}{2}\int_0^T e^{-rt}\E^\Q[\pos_t^2]\d t - \frac{1}{2}(2\varphi-\lambda)\E^\Q\bigg[e^{-rT}\pos_T^2 + \int_{[0,T)} e^{-rt}\d[\pos^+]_t\bigg] \le 0.
\end{equation}
 The stochastic integral vanished in the equality as it is a true martingale, guaranteed by the admissibility condition on $\pos$. The inequality then follows by the fact that $r \ge 0$, handling the time integral term, and the condition $2\varphi \ge \lambda$, which ensures the contribution of the other terms is nonpositive. 

To establish necessity first suppose that $r < 0$. Consider the deterministic, finite variation, continuous and bounded strategy $\pos_t = \min\{t,T-t\}$. This strategy is clearly admissible and from the equality in \eqref{eq:expected_discounted_value} we obtain
\[\E^\Q[e^{-rT}\val_T] = -\frac{r\lambda}{2}\int_0^T e^{-rt}\min\{t^2,(T-t)^2\}\d t > 0,\]
due to negativity of $r$. This is price manipulation.

 Next, suppose that $2\varphi < \lambda$. In this case we take $\pos_t = 1_{(T_0,T_1]}(t)$ for some $0 < T_0 < T_1 < T$. With this choice we have from the equality in \eqref{eq:expected_discounted_value} that
\[\E^\Q[e^{-rT}\val_T] = -\frac{r\lambda}{2}\int_{T_0}^{T_1}e^{-rt}\d t - \frac{1}{2}(2\varphi-\lambda)(e^{-rT_0} + e^{-rT_1}) = (\lambda-\varphi)e^{-rT_1} - \varphi e^{-rT_0}.\]
We see that as $T_0 \to T_1$, the right hand side converges to $(\lambda - 2\varphi)e^{-rT_1} > 0$. Then, by taking $T_0$ and $T_1$ close enough we can generate price manipulation if $2\varphi < \lambda$. This establishes necessity and completes the proof.
\end{proof}

\begin{proof}[Proof of Theorem~\ref{thm:verification}] 
The polynomial bound \eqref{eq:polynomial_bound} and Assumption~\ref{ass:sigma} ensure that we have $\E[\int_0^T (\pos_t^*)^2\sigma(t,S_t)^2S_t^2dt] < \infty$. 
 On every compact rectangle contained in $[0,T)\times(0,\infty)$, uniform continuity of $\partial_t u$ and $\partial_{xx}u$ give
$u_x(t+h,x)-u_x(t,x)=o(\sqrt h)$
uniformly. As such, time increments contribute zero quadratic variation,  giving the usual quadratic variation formula
\begin{equation} \label{eq:QV_pos_star}
    [(\pos^*)^+]_t = \int_0^t \big(\partial_{xx} u(s,S_s)\big)^2 \sigma^2(s,S_s)S_s^2 \,\d s,
    \qquad t \in [0,T),
\end{equation}
despite $(\pos^*)^+$ not necessarily being a semimartingale.
Applying It\^o's formula to $e^{-rt}u(t,S_t)$ leads to the dynamics  \eqref{eq:Ito_cash_value}. Using that $u$ solves the PDE \eqref{eq:PDE} leads to 
\begin{equation}
\begin{aligned} \label{eq:discounted_u}
\d\big(e^{-rt}u(t,S_t)\big) & = -e^{-rt}\bigg(\frac{r\lambda}{2}\big(\partial_x u(t,S_t)\big)^2 + \frac{2\varphi-\lambda}{2}\big(\partial_{xx}u(t,S_t)\big)^2 \sigma^2(t,S_t)S_t^2\bigg)\d t \\
& \quad + e^{-rt}\sigma(t,S_t)S_t\partial_x u(t,S_t)\d W_t.
\end{aligned}
\end{equation}
If $2\varphi > \lambda$, integrating \eqref{eq:discounted_u} and letting $t\uparrow T$, continuity of $u$ and \eqref{eq:polynomial_bound} imply that the nonnegative integral involving $(\partial_{xx}u)^2$ has a finite limit.
From here it follows that $\pos^*$ is an admissible trading strategy. 
From \eqref{eq:modified_cash_value_dynamics} we see for $t \in (0,T)$ that 
\begin{align} 
e^{-rt} \valm_t^* & = e^{-rt}\Big(\val_t^* + \tfrac{2\varphi-\lambda}{2}\big(\partial_x u(t,S_t)\big)^2\Big) \nonumber \\
& = \val^*_{0+} + \tfrac{2\varphi-\lambda}{2}\big(\partial_x u(0,S_0)\big)^2 + \int_0^t e^{-rs}\sigma(s,S_s)S_s\partial_x u(s,S_s)\d W_s \nonumber \\
& \qquad \quad \, -  \int_0^t  e^{-rs}\Big(\tfrac{r\lambda}{2}\big(\partial_x u(s,S_s)\big)^2 + \tfrac{2\varphi-\lambda}{2}\big(\partial_{xx}u(s,S_s)\big)^2 \sigma^2(s,S_s)S_s^2\Big)\d s \nonumber \\
& = \val^*_{0+} + \tfrac{2\varphi-\lambda}{2}\big(\partial_x u(0,S_0)\big)^2 +  e^{-rt}u(t,S_t) - u(0,S_0), \label{eq:matching_values}
\end{align}
where we substituted \eqref{eq:discounted_u} in the final step.
By continuity of $\val^*$ and $\pos^*$ at time $T$ we have that $\valm^*$ is continuous at time $T$ too. As such, since $u$ satisfies property \ref{item:T_continuity}, by sending  $t \uparrow T$ in \eqref{eq:matching_values} we obtain $\Q$-a.s.\ the relationship
\[e^{-rT}\big(\valm_T^* -u(T,S_T)\big) = \val^*_{0+} + \tfrac{2\varphi-\lambda}{2}(\partial_x u(0,S_0))^2  - u(0,S_0) = 0,\]
where the final equality holds since the prescribed initial value $\val_0^* = u(0,S_0) + \frac{2\varphi-\lambda}{2}(\partial_xu(0,S_0))^2$ together with \eqref{eq:cash_value_dynamics} at $t=0$ and $t=0+$ gives $\val^*_{0+} = \val^*_0 - (2\varphi-\lambda)(\partial_xu(0,S_0))^2$. We have shown that $\valm_T^* = u(T,S_T)$, $\Q$-a.s. Using now the definition \eqref{eq:modified_cash_value} of $\valm^*$, the fact that $S_T$ has a density and that $u(T,\cdot)$ satisfies \eqref{eq:terminal_condition} for a.e.\ $x$ we obtain
\begin{align*} 
0 = \valm_T^* - u(T,S_T) & = 
\val_T^* + \tfrac{2\varphi - \lambda}{2}\big(\partial_x u(T,S_T)\big)^2 - V\big(S_T + \lambda \partial_x u(T,S_T)\big) - \tfrac{2\varphi-\lambda}{2}\big(\partial_x u(T,S_T)\big)^2 \\
& = \val_T^* - V(P_T^{\pos^*}),
\end{align*}
where in the last step we recalled that $\pos_T^* = \partial_x u(T,S_T)$.
This completes the proof. \end{proof}

\subsubsection{Terminal ODE} \label{sec:terminal_ODE}

The goal of this section is to prove Theorem~\ref{thm:terminal}. We start with a technical well-posedness lemma, studying an auxiliary ODE from which we will construct the amended payoff $\widetilde V$.

\begin{lemma} \label{lem:ODE}
     Assume that $\varphi > 0$. Let a Lipschitz continuous, monotone and convex payoff $V: [0,\infty) \to \R$ be given. The right-derivative $V'$ is defined on $[0,\infty)$, right-continuous, nondecreasing and uniformly bounded by $L_V$, the Lipschitz constant of $V$. Set $\varepsilon = 1$ if $V$ is nondecreasing or constant, and $\varepsilon = -1$ if $V$ is nonincreasing and nonconstant. Extend $V$ to all of $\R$ via $V(x) = V(0)$ for $x < 0$.  
    
    It follows that there exists a unique absolutely continuous and bounded function $ P: \R \to \R$ satisfying 
    \begin{equation} \label{eq:P_ODE} 2\varphi P'(\xi)P(\xi) = P(\xi) - V'(\xi) \qquad \text{for a.e.\ } \xi \in \R.
    \end{equation} The function $P$ satisfies $ 0 \le \varepsilon P \le L_V$. 
    
    \begin{enumerate}
        \item If $2\varphi - \lambda > 0$ then the map $\Xi(\xi) = \xi - \lambda P(\xi)$ is a strictly increasing continuous bijection of $\R$, and the function
    \begin{equation} \label{eq:tilde_V}
        \widetilde V(x) = V\big(\Xi^{-1}(x)\big)+\tfrac{2\varphi-\lambda}{2}P^2\big(\Xi^{-1}(x)\big), \qquad x \in \R
    \end{equation} belongs to $C^1(\R)$ with $\widetilde V'(x) = P(\Xi^{-1}(x))$.
    \item If $2\varphi = \lambda$ then $\Xi$ is nondecreasing, surjective on $\R$, and $V(\xi)$ gives the same value for any $\xi \in \Xi^{-1}(x)$ and any $x \in \R$. Thus, \eqref{eq:tilde_V} is well-defined. In this case $\widetilde V$ is Lipschitz continuous, and at points of differentiability satisfies $\widetilde V'(x) = P(\Xi^{-1}(x))$, where the right-hand side is well-defined. Moreover, if $\widetilde V$ is differentiable at $x$ then $\Xi^{-1}(x)$ is a singleton.
    \end{enumerate} 
Moreover, $\varepsilon \widetilde V$ is nondecreasing, the Lipschitz constant of $\widetilde V$ is at most $L_V$, and in the case $\varepsilon = -1$, we have $\widetilde V(x) = \widetilde V(0)$ for $x < 0$ and $\widetilde V'(x) \ge -x/\lambda$ for a.e.\ $x > 0$.
\end{lemma}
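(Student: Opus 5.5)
The plan is to linearize the ODE by passing to the squared variable. Set $z = P^2$. Then \eqref{eq:P_ODE} becomes
\[
\varphi z'(\xi) = \varepsilon\sqrt{z(\xi)} - V'(\xi),
\qquad P = \varepsilon \sqrt{z}.
\]
The right-hand side is Carath\'eodory: it is measurable in $\xi$ because $V'$ is right-continuous, and continuous in $z$. The expected main obstacle is existence and uniqueness of a \emph{bounded} solution, since $\sqrt{z}$ is not Lipschitz at $0$ and $V'$ may jump.

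\textbf{Uniqueness.} Treat $\varepsilon = 1$ first. If $z_1, z_2$ are two bounded nonnegative solutions, then
\[
\varphi (z_1 - z_2)' = \sqrt{z_1} - \sqrt{z_2} = \frac{z_1 - z_2}{\sqrt{z_1} + \sqrt{z_2}}.
\]
This has the sign of $z_1 - z_2$. Hence $|z_1 - z_2|$ is nondecreasing, and once it is positive it grows at exponential rate at least $(2\varphi\sqrt{\sup z})^{-1}$. This contradicts boundedness, so $z_1 = z_2$. The case $\varepsilon = -1$ is identical with $V'$ replaced by $-V'$.

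\textbf{Existence.} The same computation shows that the flow contracts when integrated \emph{leftward}. I would therefore solve backward from $\xi = n$ with terminal value $z(n) = V'(n)^2$, using Carath\'eodory--Peano existence on $(-\infty, n]$. The strip $[0, L_V^2]$ is invariant for the backward flow:
\begin{itemize}
\item at $z = L_V^2$ we have $\varphi z' = L_V - \varepsilon V' \ge 0$, so moving left $z$ decreases;
\item at $z = 0$ we have $\varphi z' = -\varepsilon V' \le 0$, so moving left $z$ increases.
\end{itemize}
The leftward contraction (which is exponential once the solutions differ) makes $z_n$ Cauchy locally uniformly. The limit $z$ solves the integral equation on $\R$ and satisfies $0 \le z \le L_V^2$.

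Absolute continuity of $P = \varepsilon\sqrt{z}$ needs a separate check near zeros of $z$. For $\varepsilon = 1$ this should follow from $V' \ge 0$: $z$ can only touch $0$ where $V' = 0$, and there the solution is the linear profile $P = (\xi - c)/(2\varphi)$. For $\varepsilon = -1$, on $\xi < 0$ we have $V' = 0$, so $\varphi z' = -\sqrt{z}$. Any positive value at $\xi = 0$ would then grow quadratically to the left, so boundedness forces $P \equiv 0$ on $(-\infty, 0]$.

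\textbf{Properties of $\Xi$ and $\widetilde V$.} From $V'$ having the sign $\varepsilon$ and $0 \le \varepsilon P$, we get a.e.
\[
\Xi' = 1 - \lambda P' = 1 - \frac{\lambda}{2\varphi}\,\frac{P - V'}{P} \;\ge\; 1 - \frac{\lambda}{2\varphi}
\]
wherever $P \ne 0$. Where $P = 0$ on a set of positive measure we have $P' = 0$ a.e., so $\Xi' = 1$ there. Consequently:
\begin{itemize}
\item if $2\varphi > \lambda$, $\Xi$ is strictly increasing and bi-Lipschitz onto $\R$ (surjective since $P$ is bounded);
\item if $2\varphi = \lambda$, $\Xi$ is merely nondecreasing and surjective.
\end{itemize}
The key identity is the chain rule for the absolutely continuous function $\xi \mapsto V(\xi) + \tfrac{2\varphi - \lambda}{2}P^2(\xi)$:
\[
\frac{d}{d\xi}\Big[V + \tfrac{2\varphi - \lambda}{2}P^2\Big]
= V' + (2\varphi - \lambda)PP'
= V' + (P - V') - \lambda PP'
= P\,\Xi'.
\]
When $2\varphi > \lambda$, composing with the Lipschitz inverse $\Xi^{-1}$ gives $\widetilde V' = P \circ \Xi^{-1}$. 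This is continuous, so $\widetilde V \in C^1$.

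When $2\varphi = \lambda$, $\Xi' = 0$ a.e.\ on any interval where $\Xi$ is constant, so $V$ is constant there by the identity above. Hence \eqref{eq:tilde_V} is well-defined. The formula $\widetilde V' = P \circ \Xi^{-1}$ holds off the countably many jump values of $\Xi^{-1}$. At such a value $\widetilde V$ has distinct one-sided slopes $P$ (at the two ends of the flat interval), which gives the singleton claim.

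\textbf{Remaining claims.} Monotonicity of $\varepsilon\widetilde V$ and the Lipschitz bound $L_V$ follow from $\widetilde V' = P \circ \Xi^{-1}$ and $0 \le \varepsilon P \le L_V$. For $\varepsilon = -1$:
\begin{itemize}
\item $P = 0$ on $\xi \le 0$ gives $\Xi(\xi) = \xi$ there, so $\widetilde V(x) = V(0)$ for $x < 0$.
\item For $x > 0$ write $x = \xi - \lambda P(\xi)$. Since $P \le 0$, points $\xi < 0$ map to $x = \xi < 0$, so $x > 0$ forces $\xi > 0$. Then $\widetilde V'(x) = P(\xi) = (\xi - x)/\lambda \ge -x/\lambda$.
\end{itemize}
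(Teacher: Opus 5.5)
Your substitution $z=P^2$ is a genuinely different route from the paper's, and for $\varepsilon=1$ a cleaner one. Because $\varphi(z_1-z_2)'=(z_1-z_2)/(\sqrt{z_1}+\sqrt{z_2})$ has rate at least $1/(2\varphi L_V)$, uniqueness follows from boundedness alone and the backward approximations are exponentially Cauchy. The paper instead works with $P$ itself: it first proves $V'\le P\le c_+$, builds a monotone sequence from $P_n(n)=c_+$, and gets uniqueness from $D'=\frac{V'}{2\varphi PQ}D$ together with $D(\infty)=0$. Your identity $\frac{\mathrm{d}}{\mathrm{d}\xi}\big[V+\frac{2\varphi-\lambda}{2}P^2\big]=P\,\Xi'$, followed by a change of variables, is also a tidier way to obtain $\widetilde V'=P\circ\Xi^{-1}$ than the pointwise inverse-function computation in Step~9.

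However, the uniqueness argument has a gap. Writing $P=\varepsilon\sqrt z$ presupposes $\varepsilon P\ge 0$ for \emph{every} bounded solution, so you only exclude a second solution of the correct sign. The lemma asserts uniqueness among all bounded absolutely continuous solutions, so you need the paper's Step~1. If $\varepsilon=1$ and $P<0$ on a maximal interval, then \eqref{eq:P_ODE} gives $P'=\frac{1}{2\varphi}(1-V'/P)\ge\frac{1}{2\varphi}$ there; the interval therefore extends to $-\infty$ and $P\to-\infty$. For $\varepsilon=-1$, a positive excursion likewise grows without bound to the right.

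Second, the existence step is wrong as written for $\varepsilon=-1$. There one has $\varphi z'=-(L_V+V')\le 0$ at $z=L_V^2$ and $\varphi z'=-V'\ge0$ at $z=0$, so your displayed signs are reversed. Also $\varphi(z_1-z_2)'=-(\sqrt{z_1}-\sqrt{z_2})$, so the strip $[0,L_V^2]$ is invariant, and the flow contracts, only \emph{forward}. Integrating backward from $\xi=n$ is therefore expansive and non-unique at $z=0$ wherever $V'=0$. It also leaves the strip: any positive value at $\xi=0$ grows quadratically as $\xi\to-\infty$. You must integrate forward instead, e.g.\ from $z\equiv 0$ on $(-\infty,0]$ as in the paper's Step~2, where forward uniqueness follows from the same monotonicity.

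Finally, your absolute-continuity check is incomplete. The linear profile describes $P$ only on intervals where $V'=0$, so it says nothing on a right-neighbourhood of the last zero of $P$ if $V'>0$ there. For $\varepsilon=-1$ you do not treat $\xi=0^+$ at all, where $P\approx -c\sqrt{\xi}$ is not Lipschitz; the paper handles this with an explicit $C\xi^{-1/2}$ bound. A uniform fix is the one-sided bound $P'=\frac{1}{2\varphi}(1-V'/P)\le\frac{1}{2\varphi}$ on $\{P\neq0\}$, valid for both signs since $V'/P\ge0$. Together with boundedness of $P$, this makes $P'$ locally integrable. The same square-root behaviour shows $\Xi$ need not be Lipschitz, so ``bi-Lipschitz'' is false. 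Only $\Xi^{-1}$ is Lipschitz, which is all you actually use.
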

\begin{proof}
The proof is broken up into ten steps.    \paragraph{Step 1: Nonnegativity of $\boldsymbol\varepsilon P$.} First, we show that any bounded solution to \eqref{eq:P_ODE} must satisfy $\varepsilon P \ge 0$. We start with the $\varepsilon = 1$ case and argue by contradiction, assuming that $P(\xi_0) < 0$ for some $\xi_0 \in \R$. Denote by $I$ the maximal interval containing $\xi_0$ on which $P$ is negative. 
    Dividing the equation \eqref{eq:P_ODE} by $2\varphi P(\xi)$ and using nonnegativity of $V'$ we see for any $\xi \in I$ that 
    \[P'(\xi) = \frac{1}{2\varphi}\bigg(1-\frac{V'(\xi)}{P(\xi)}\bigg) \ge \frac{1}{2\varphi}. \]
    It follows that $P$ is strictly increasing on $I$ and since $\xi_0$ is in the interior of $I$, it follows that $\inf I = -\infty$. Now for any $\xi \le \xi_0$ we obtain
    \[P(\xi_0) - P(\xi)  = \int_{\xi}^{\xi_0} P'(z)\d z \ge \frac{1}{2\varphi}(\xi_0 - \xi).\]
    Sending $\xi \to - \infty$ shows that $P(\xi) \to -\infty$, contradicting boundedness of $P$. The case $\varepsilon = -1$ is symmetric; that is,  if $P(\xi_0) > 0$, then $V' \le 0$ gives $P' \ge \frac{1}{2\varphi}$ on the corresponding maximal interval $I$, so that $\sup I = \infty$ and $P(\xi) \to \infty$. This is again a contradiction. 
   \paragraph{Step 2: The nonincreasing case $\boldsymbol \varepsilon = -1$.} This case can be handled directly via the substitution $R = \varphi P^2$. Here, the ODE \eqref{eq:P_ODE} becomes 
\begin{equation} \label{eq:R_ODE}
R'(\xi) = -\sqrt{R(\xi)/\varphi} - V'(\xi).
\end{equation}
    Since $V' = 0$ on $(-\infty,0)$ we have on this interval that \eqref{eq:R_ODE} reads $R' = -\sqrt{R/\varphi} \le 0$. As such, if $R(\xi_1) > 0$ for some $\xi_1 < 0$ then $R \ge R(\xi_1)$ on $(-\infty,\xi_1]$, so that $R' \le - \sqrt{R(\xi_1)/\varphi} < 0$. Since $R$ is strictly decreasing with a strictly negative rate it follows that $\lim_{\xi \to -\infty} R(\xi) = \infty$. This contradicts boundedness of $P$ and shows that $R$ must be identically zero on $(-\infty,0]$.
    
    Next, we note by Carath\'eodory's Existence Theorem \cite[Theorem~I.5.1]{hale2009ordinary} that \eqref{eq:R_ODE} on $[0,\infty)$ with the initial condition $R(0) = 0$, has a local solution. When $R(\xi) = 0$ the right-hand side of \eqref{eq:R_ODE} is $-V'(\xi) \ge 0$, whereas when $R(\xi) = \varphi L_V^2$ we have that the right-hand side is $-L_V - V'(\xi) \le 0$. Since the right-hand side points inwardly at these extreme values we see that the solution is nonexplosive and satisfies $R(\xi) \in [0,\varphi L_V^2]$ for all $\xi \ge 0$. From here, we see that $P$ is globally defined and $-L_V \le P(\xi) \le 0$ for all $\xi \in \R$.

    Since the right-hand side of \eqref{eq:R_ODE} is not Lipschitz continuous at $R = 0$, uniqueness is not immediate from Carath\'eodory's Theorem. Instead we use monotonicity of $R \mapsto - \sqrt{R/\varphi}$ by taking two solutions $R_1,R_2$ and directly observing that $\frac{\d}{\d\xi}(R_1 - R_2)^2 \le 0$, which implies uniqueness. Since $x \mapsto \varphi x^2$ is a bijective map from $(-\infty,0]$ to $[0,\infty)$, this implies uniqueness of \eqref{eq:P_ODE} as well. Finally, it remains to verify that $P$ is absolutely continuous on a right neighbourhood of zero. If $V'(0+) = 0$ then, since $V' \le 0$ and nondecreasing, it must be globally constant, contradicting the definition of $\varepsilon = -1$.
    Hence, we focus on the case $R'(0) = -V'(0+) > 0$. Fixing $\delta > 0$ and using that $|V'|$ is nonincreasing, we see that $R' \ge |V'(\delta)|/2$ whenever $R(\xi) \le \varphi (V'(\delta))^2/4$ from which we deduce that $R(\xi) \ge \frac{|V'(\delta)|}{2}\xi$ on a sufficiently small right-neighbourhood of zero. It then follows that $|P'(\xi)| = \frac{R'(\xi)}{2\sqrt{\varphi R(\xi)}} \le C \xi^{-1/2}$ for some $C > 0$ on this interval establishing integrability of $P'$, and hence absolute continuity of $P$.

    \paragraph{Step 3: Upper bound on $\boldsymbol P$.}  With the construction of $P$ completed in the $\varepsilon = - 1$ case, we restrict to  $\varepsilon = 1$ until step 8 below. 
Since $V$ is convex, $V'$ is nondecreasing and so it has a limiting value $c_+ := \lim_{\xi \to \infty} V'(\xi) \le L_V$. We now argue that any bounded solution to \eqref{eq:P_ODE} must satisfy $0 \le P(\xi) \le c_+$ for all $\xi \in \R$. Indeed, suppose by contradiction that there exists $\xi_0$ such that $P(\xi_0) > c_+$. Then, note that
    \[P'(\xi_0) = \frac{1}{2\varphi} \bigg(1 - \frac{V'(\xi_0)}{P(\xi_0)}\bigg) \ge \frac{1}{2\varphi} \bigg(1 - \frac{c_+}{P(\xi_0)}\bigg) > 0. \]
    Arguing in a similar fashion as the nonnegativity of $P$ case, we can establish that $P$ is increasing on $[\xi_0,\infty)$ and that $P'(\xi) \ge \frac{1}{2\varphi} (1 - c_+/P(\xi_0))$ for $\xi \ge \xi_0$. This shows that $P(\xi) \to \infty$ as $\xi \to \infty$; a contradiction.

     \paragraph{Step 4: $\boldsymbol P$ dominates $\boldsymbol V'$.} Next, we claim that any bounded solution satisfies $P \ge V'$. Suppose by way of contradiction that $P(\xi_0) < V'(\xi_0)$ for some $\xi_0$. By continuity of $P$ and nondecreasingness of $V'$ there exists an interval $J$ containing $\xi_0$ such that $P(\xi) < V'(\xi)$ on $J$ and, as such, $V'$ is positive on this interval. It follows that $P \equiv 0$ is not possible on any neighbourhood contained in $J$, since \eqref{eq:P_ODE} would imply that $V' = 0$ there. As such, we can find a point in $J$, again labelled $\xi_0$, for which the inequality  $0 < P(\xi_0) < V'(\xi_0)  =: v_0$ holds. Since $V'$ is nondecreasing, we have that $V'(\xi) \ge v_0 > 0$ for all $\xi \ge \xi_0$. Let $I = [\xi_0,\xi_1)$ be the maximal interval on which $0 < P < V'$.
     On this interval we have $V'(\xi)/P(\xi) > 1$, so that \eqref{eq:P_ODE} gives $P' < 0$ a.e.\ on $I$ and hence $P(\xi) \le P(\xi_0)$ for all $\xi \in I$. Since $V'$ is nondecreasing, it follows that
\[P(\xi) \le P(\xi_0) < V'(\xi_0) \le V'(\xi), \qquad \text{for all } \xi \in I,\]
from which we deduce that
\[P'(\xi) \le \frac{1}{2\varphi}\bigg(1-\frac{v_0}{P(\xi_0)}\bigg) < 0 \qquad \text{a.e.\ on } I;\]
that is, $P$ decreases at a rate bounded away from zero.
     Consequently, the right end point $\xi_1$ of $I$ cannot correspond to $P(\xi_1) = V'(\xi_1) > 0$. Since $P$ decreases at a rate bounded away from zero, it hits zero in finite time so we must have that $\xi_1 < \infty$ and $P(\xi_1) = 0$.
     
     We claim $P \equiv 0$ on a right-neighbourhood of $\xi_1$. Indeed, for $\xi > \xi_1$ close to $\xi_1$ we have, by continuity of $P$ and right-continuity of $V'$, that $P < V'$. As such, wherever $0 < P < V'$, the ODE \eqref{eq:P_ODE} gives $P'(\xi) = \frac{1}{2\varphi}(1 - V'(\xi)/P(\xi)) \le 0$. Since $P \ge 0$ and $P < V'$ on this neighbourhood, this implies that $P' \le 0$ a.e.\ on it. Hence,
\[
0 \le P(\xi) = P(\xi_1) + \int_{\xi_1}^{\xi} P'(z)\d z = \int_{\xi_1}^{\xi} P'(z)\d z \le 0,
\]
and therefore $P \equiv 0$ on this right-neighbourhood. But then $P' = 0$ a.e.\ there, so \eqref{eq:P_ODE} forces $V' = 0$ a.e.\ on this right-neighbourhood of $\xi_1$ contradicting $V'(\xi) \ge v_0 > 0$ for $\xi \ge \xi_0$. As such, we must have $P \ge V'$.

    \paragraph{Step 5: The case $\boldsymbol V$ is constant.} Next, note that if $V$ is constant so that $V' = 0$ everywhere then \eqref{eq:P_ODE} reduces to $2\varphi P'(\xi)P(\xi) = P(\xi)$. The only bounded solution is $P \equiv 0$. Indeed, on each connected component of $\{P \ne 0\}$ the equation forces $2\varphi P' = 1$, and since $P$ is increasing and vanishes at any finite endpoint of such a component, the only nonzero solutions are $P(\xi) = \frac{1}{2\varphi}(\xi - C)^+$ for some $C \in \R$, which are unbounded.

    \paragraph{Step 6: General $\boldsymbol V$ case; existence.} We may assume that there exists $\xi_0 \in \R$ so that $V'(\xi_0) > 0$. Because $V'$ is nondecreasing this implies that $V'> 0$ on $[\xi_0,\infty)$. Clearly, from the ODE \eqref{eq:P_ODE} we have that $P(\xi_0) \ne 0$, so we can formulate the ODE in the more standard form
    \[P'(\xi) = g(\xi,P(\xi)), \qquad \text{where} \quad  g(\xi,p) = \frac{1}{2\varphi}\bigg(1 - \frac{V'(\xi)}{p}\bigg).\]
    We append this ODE with the initial condition $P(n) = c_+$ for any $n \in \N$.
    Note that on any set of the form $\{p \ge \epsilon\}$ for some $\epsilon > 0$ (and, in particular,  near the point $(n,c_+)$), $g(\xi,p)$ is measurable in $\xi$ for every fixed $p$, is continuous in $p$ for every fixed $\xi$ and we have the estimate
    $|\partial_p g(\xi,p)| \le \frac{L_V}{2\varphi\epsilon^2}$. As such, Carath\'eodory's Theorem \cite[Theorem~I.5.3]{hale2009ordinary} guarantees a unique local solution $P_n$ to \eqref{eq:P_ODE} satisfying $P_n(n) = c_+$. From the ODE \eqref{eq:P_ODE}, it is clear that $P_n$ is nondecreasing on a neighbourhood of $n$. As such, by continuous extension we can extend $P_n$ to $(\widetilde \xi,n]$, where $\widetilde\xi < n$ is the largest point where $P_n(\widetilde \xi) = 0$. Such a point $\widetilde \xi$ exists since $V' = 0$ on $(-\infty,0)$, where any positive solution satisfies $P_n'(\xi) = \frac{1}{2\varphi} > 0$, and hence must hit zero sufficiently far to the left. By setting $P_n(\xi) = 0$ for $\xi \le \widetilde \xi$ we have defined $P_n$ on $(-\infty,n]$.  It is easy to see that $P_n$ is nondecreasing, valued in $[0,c_+]$ and satisfies $P_n \ge V'$. In particular, $V' = 0$ on $(-\infty, \widetilde \xi]$, so that $P_n = 0$ solves \eqref{eq:P_ODE} on this interval as well, and hence is a  solution to \eqref{eq:P_ODE} on $(-\infty,n]$.
    Additionally, by nonnegativity of $V'$ we have on $\{P_n > 0\}$ that $P_n'(\xi) = \frac{1}{2\varphi}(1- V'(\xi)/P_n(\xi)) \le \frac{1}{2\varphi}$ so that $P_n$ is $\frac{1}{2\varphi}$-Lipschitz.
        
    Next, we will show that whenever $m > n$, we have that $P_m \le P_n$ on $(-\infty,n]$. Suppose by way of contradiction, that $P_m(\overline \xi) >P_n(\overline \xi)$ for some $\overline \xi \le n$. Let \[\xi_* = \sup\{\xi \in (-\infty,n]: P_m(\xi) > P_n(\xi)\},\]
    and note that by assumption $\xi_*$ is well-defined. By continuity, we have that $P_m(\xi_*) = P_n(\xi_*)$, and we call this common value $v$. If $v >0$, then $P_m$ and $P_n$ both solve \eqref{eq:P_ODE} with the same initial data $P(\xi_*) = v$. Since $v > 0$,  Carath\'eodory's Theorem applies and local uniqueness guarantees that $P_m = P_n$ on a neighbourhood of $\xi_*$, which would contradict the definition of $\xi_*$. Conversely, if $v = 0$ then since $P_m$ and $P_n$ are nonnegative and nondecreasing we have that $P_m(\xi) = P_n(\xi) = 0$ for $\xi \le \xi_*$, again contradicting the definition of $\xi_*$. As such, it follows that $P_m \le P_n$ on $(-\infty,n]$ for all $m \ge n$. 
    Now, fix $\xi \in \R$ and note that because $P_n(\xi) \ge P_m(\xi)$ for $m \ge n \ge \xi$, by monotonicity it follows that $P(\xi) := \lim_{n \to \infty} P_n(\xi)$ exists. 
    Because the $P_n$ are uniformly $\frac{1}{2\varphi}$-Lipschitz, the limit $P$ inherits this property together with the bounds $V' \le P \le c_+$. 
    Now on any interval $[\xi_1,\xi_2] \subset \{P > 0\}$ we have that 
    \[P_n(\xi_2) - P_n(\xi_1) = \int_{\xi_1}^{\xi_2} \frac{1}{2\varphi}\bigg(1 - \frac{V'(s)}{P_n(s)}\bigg)\d s,\]
    where we note that $P_n \ge P \ge \min_{\xi \in [\xi_1,\xi_2]}P(\xi) >  0$. Hence by dominated convergence we deduce that 
    \[P(\xi_2) - P(\xi_1) = \int_{\xi_1}^{\xi_2} \frac{1}{2\varphi}\bigg(1 - \frac{V'(s)}{P(s)}\bigg)\d s.\]
    We see that $P$ is absolutely continuous on this interval, and differentiating the above identity shows it satisfies \eqref{eq:P_ODE} a.e.\ Additionally, on the interval $\{P = 0\}$ we have $V' = 0$, due to the bound $V' \le P_n$, from which we deduce that \eqref{eq:P_ODE} also holds on the interval's interior. Since $P \le P_n \le c_+$ is bounded we have constructed a bounded, absolutely continuous (even Lipschitz) solution to \eqref{eq:P_ODE}.

\paragraph{Step 7: General $\boldsymbol V$ case; uniqueness.} Suppose $P$ and $Q$ are two bounded and absolutely continuous solutions. If $P \equiv  Q \equiv 0$ there is nothing to prove. Moreover, the zero function is a solution if and only if $V$ is constant so we can assume without loss of generality that neither $P$ nor $Q$ is identically zero. Since $c_+ > 0$ there exists a smallest $\xi_1$ such that $P$ and $Q$ are both nondecreasing and strictly positive on $(\xi_1,\infty)$; in particular, by continuity $P(\xi_1) = 0$ or $Q(\xi_1) = 0$. Defining $D = P-Q$ it is easy to see that $D$ satisfies
\[D' = \frac{1}{2\varphi}\bigg(\frac{V'}{Q} - \frac{V'}{P}\bigg) = \frac{V'}{2\varphi PQ} D \qquad \text{on }(\xi_1,\infty).\] Note that
\[\lim_{\xi \to \infty} D(\xi) = \lim_{\xi \to \infty} P(\xi) - \lim_{\xi \to \infty} Q(\xi) = c_+ - c_+ = 0. \]
Moreover, away from the set $\{D = 0\}$, the function $|D|$ is differentiable and satisfies
\[\frac{\d}{\d \xi} |D(\xi)| = \frac{V'(\xi)}{2\varphi  P(\xi)Q(\xi)}|D(\xi)| \ge 0\] so that $|D|$ is nondecreasing. But since $|D(\xi)| \to 0$ as $\xi \to \infty$ this shows that actually $D = 0$ on $[\xi_1,\infty)$. For values $\xi < \xi_1$ we have that either $P(\xi) = 0$ or $Q(\xi) = 0$. Suppose, without loss of generality that $P(\xi) = 0$. Then sending $\xi \uparrow \xi_1$ and invoking continuity we see that both $P(\xi_1) = 0$ and $D(\xi_1) = 0$, which implies that $P(\xi_1) = Q(\xi_1) = 0$. Hence $P(\xi) = Q(\xi) = 0$ for $\xi \le \xi_1$ establishing that $P = Q$ everywhere; that is, uniqueness holds.
   \paragraph{Step 8: Monotonicity of $\boldsymbol \Xi$.} We now return to the general $\varepsilon$ case.
 On $\{P = 0\}$, we have that $\Xi(\xi) = \xi$ is clearly strictly increasing. Otherwise, on $\{P \ne 0\}$ we use \eqref{eq:P_ODE} to compute that 
    \[\Xi'(\xi) = 1 - \lambda P'(\xi) = 1- \frac{\lambda}{2\varphi}\bigg(1-\frac{V'(\xi)}{P(\xi)}\bigg) \ge \frac{2\varphi - \lambda}{2\varphi},\]
    where we used that $V'/P \ge 0$, which is immediate from the previously established bound $\varepsilon P \ge 0$. 
    Hence, when $2\varphi > \lambda$, we see that $\Xi$ is strictly increasing, while for $2\varphi = \lambda$ it is nondecreasing. In the latter case, on $\{P \ne 0\}$ we see that $\Xi'(\xi) = 0 \iff V'(\xi) = 0$. As such, $V$ is flat on these regions, so that $V(\Xi^{-1}(x))$ is well-defined for any $x \in \R$. Moreover, there are at most countably many $x \in \R$ such that $\Xi^{-1}(x)$ is not a singleton. Since $P$ is bounded, it is clear that $\lim_{\xi \to \pm \infty} \Xi(\xi) = \pm \infty$ so that it is surjective.

    \paragraph{Step 9: $\boldsymbol {\widetilde V}$ is $\boldsymbol C^1$.}
    Note, by the chain rule and derivative of inverse function formula we have for a.e.\ $x$ such that $\Xi^{-1}(x)$ is a singleton,
    \begin{align}
        \widetilde V'(x) & = \frac{V'(\Xi^{-1}(x))}{\Xi'(\Xi^{-1}(x))} + \frac{2\varphi- \lambda}{2}\frac{(P^2)'(\Xi^{-1}(x))}{\Xi'(\Xi^{-1}(x))} = \frac{V'(\Xi^{-1}(x)) + (2\varphi-\lambda)P'(\Xi^{-1}(x))P(\Xi^{-1}(x))}{1-\lambda P'(\Xi^{-1}(x))} \nonumber \\
        & = \frac{V'(\Xi^{-1}(x)) + P(\Xi^{-1}(x)) - V'(\Xi^{-1}(x)) -\lambda P'(\Xi^{-1}(x))P(\Xi^{-1}(x))}{1-\lambda P'(\Xi^{-1}(x))} \nonumber \\
        & = P\big(\Xi^{-1}(x)\big), \label{eq:tildeV'}
    \end{align}
    where we used \eqref{eq:P_ODE} in the penultimate equality. If $2\varphi > \lambda$ then $\Xi^{-1}(x)$ is a singleton for all $x$, and since $V$ and $P$ are absolutely continuous, and $\Xi^{-1}$ is Lipschitz, we see from \eqref{eq:tilde_V} and \eqref{eq:tildeV'} that $\widetilde V$ is $C^1$ as required. If $2\varphi = \lambda$ then from \eqref{eq:P_ODE} we see that $V'(\xi) = P(\xi)\Xi'(\xi)$ a.e.\ Since $V$ and $\Xi$ are absolutely continuous, $\Xi$ is nondecreasing and $|P|$ is bounded by $L_V$, it follows that
    \[|V(\xi_1) - V(\xi_2)| \le L_V (\Xi(\xi_1) - \Xi(\xi_2)), \qquad \forall \xi_1 \ge \xi_2.\]
    Taking $x_i\in \R$  and $\xi_i \in \Xi^{-1}(x_i)$ for $i=1,2$ with $x_1 \ge x_2$ we obtain
    \[|\widetilde V(x_1) - \widetilde V(x_2)| = |V(\xi_1) - V(\xi_2)| \le L_V(\Xi(\xi_1) - \Xi(\xi_2)) =  L_V(x_1-x_2), \]
    establishing Lipschitz continuity of $\widetilde V$. Finally, it remains to establish that if $\widetilde V$ is differentiable at $x$ then $\Xi^{-1}(x)$ is a singleton.  Suppose $\Xi^{-1}(x)$ is an interval $[a,b]$ and note by definition of $\Xi$ that we have 
    \[P(\xi) = \frac{\xi - \Xi(\xi)}{\lambda} = \frac{\xi - x}{\lambda}, \qquad \text{for } \xi \in [a,b],\]
    so that $P(a) < P(b)$. However, from calculations similar to those in \eqref{eq:tildeV'} we see that $\widetilde V'_-(x) = P(a) < P(b) = \widetilde V'_+(x)$, which contradicts differentiability of $\widetilde V$. 
    
    \paragraph{Step 10: The remaining claims.} From the identity \eqref{eq:tildeV'}, and the bound $0 \le \varepsilon P \le L_V$, we see that $\varepsilon \widetilde V$ is nondecreasing and Lipschitz with Lipschitz constant at most $L_V$. 
    
    Now we consider the case $\varepsilon = -1$. Since $P$ is zero on the negative half-line, we have $\Xi(\xi) = \xi$, whereas for $\xi > 0$ we have $\Xi(\xi) = \xi - \lambda P(\xi) \ge \xi > 0$, so that $\Xi^{-1}(x) = \{x\}$ for $x \le 0$. As such, we have from \eqref{eq:tilde_V} that
    \[\widetilde V(x) = V(x) + \tfrac{2\varphi-\lambda}{2}P(x)^2 = V(0) = \widetilde V(0), \qquad x \le 0, \]
    establishing constancy of $\widetilde V$ on the negative half-line. Finally for a.e.\ $x > 0$ and $\xi \in \Xi^{-1}(x)$, we have $\xi  >0$ by the previous discussion, from which we deduce the bound
    \[ x  = \xi - \lambda P(\xi) \ge -\lambda P(\xi) = - \lambda \widetilde V'(x),\]
    where we again used \eqref{eq:tildeV'}. This completes the proof. 
\end{proof}

We now prove Theorem~\ref{thm:terminal} with the function $\widetilde V$ being the one constructed in Lemma~\ref{lem:ODE}.

\begin{proof}[Proof of Theorem~\ref{thm:terminal}] The case that $V$ is constant is trivial so we focus on the nonconstant situation, in which case $L_V > 0$.
    First, let $x$ be a point of differentiability of $\widetilde V$ and set $\xi = \Xi^{-1}(x)$, which is a singleton by Lemma~\ref{lem:ODE}. Then, again by Lemma~\ref{lem:ODE}, $\widetilde V'(x) = P(\xi)$ and from the definition of $\Xi$ we see that 
    \begin{equation} \label{eq:xi} 
    x = \xi - \lambda P(\xi)  \implies \xi = x+ \lambda\widetilde V'(x).
    \end{equation} As such,
    \begin{equation} \label{eq:tildeV_solves_ODE}
    \widetilde V(x) = V(\xi) + \tfrac{2\varphi-\lambda}{2}P^2(\xi) = V\big(x+\lambda \widetilde V'(x)\big) + \tfrac{2\varphi-\lambda}{2}\big(\widetilde V'(x)\big)^2,
    \end{equation}
    which shows that $\widetilde V$ solves \eqref{eq:implicit_ODE}.

    From here we can deduce that $\widetilde V$  dominates $V$. Indeed, since $\widetilde V'(x) = P(\xi)$ for a.e.\ $x$ and $\varepsilon P \ge 0$, it follows that $\varepsilon \widetilde V' \ge 0$. Hence, by monotonicity of $V$ we have that $V(x + \lambda \widetilde V'(x)) \ge V(x)$, which together with \eqref{eq:tildeV_solves_ODE} and continuity of both functions yields $\widetilde V(x) \ge V(x)$ for all $x \in \R$.

    Next, we let a Carath\'eodory solution $f$ to \eqref{eq:implicit_ODE} be given.
    To start, we will establish that in the $2\varphi - \lambda > 0$ case, $f$ satisfies
    \begin{equation} \label{eq:f_lower_bound}
    f(x) \ge - L_V |x| - C, \qquad \forall x \in \R,
    \end{equation}
    where $L_V$ is the Lipschitz constant of $V$ and $C \ge 0$ is some constant. To see this, note that $V(x) \ge V(0) - L_V|x|$ so that 
    \[f(x) = V(x+\lambda f'(x)) + \tfrac{2\varphi-\lambda}{2} (f'(x))^2 \ge V(0) - L_V|x| - \lambda L_V |f'(x)| + \tfrac{2\varphi-\lambda}{2} (f'(x))^2. \]
    The quadratic term dominates and we see that the lower bound holds with $C = |V(0)| + \frac{L_V^2\lambda^2}{2(2\varphi-\lambda)}$.
    This establishes \eqref{eq:f_lower_bound} a.e., and hence everywhere by continuity.
    
    Now set $w = \widetilde V - f$. In the case $V$ is nonincreasing we have $w = 0$ on $(-\infty,0]$, since $\widetilde V(x) = f(x) = V(0)$ here, so it suffices to consider $x > 0$. Moreover, in this case, Lemma~\ref{lem:ODE} and the hypotheses on $f$ imply that $\widetilde V'(x),f'(x) \ge -x/\lambda$ for a.e.\ $x > 0$. Returning to the general case, we let $x$ be a common point of differentiability for $\widetilde V$ and $f$.   Set \[\Phi_x(q) = V(x+\lambda q) + \tfrac{2\varphi - \lambda}{2}q^2, \qquad q \in D_x,\] where $D_x = \R$ if $V$ is nondecreasing and $D_x =  [-\frac{x}{\lambda},\infty)$, when $V$ is nonincreasing.
    Then, $\Phi_x$ is convex and has subdifferential $\partial \Phi_x(q) = \lambda \partial V(x+\lambda q) + (2\varphi-\lambda)q$.  Letting $\xi$ be given by the final expression in \eqref{eq:xi}, and noting that $V'(\xi) \in \partial V(\xi)$ (where we recall the notation $V'$ denotes the right-derivative of $V$), we see that 
    \[s:= \lambda V'(\xi) + (2\varphi-\lambda) \widetilde V'(x) \in \partial \Phi_x(\widetilde V'(x)). \]
    Since $\varepsilon V'(\xi) \in [0,L_V]$ and,  by Lemma~\ref{lem:ODE}, $\varepsilon \widetilde V'(x) = \varepsilon P(\xi) \in [0,L_V]$, we see that $\varepsilon s \in [0,2\varphi L_V]$. Using that $f$ solves \eqref{eq:implicit_ODE}, that $f$ is differentiable at $x$, and $\Phi_x$ is convex, we have that
    \[f(x) = \Phi_x(f'(x)) \ge \Phi_x(\widetilde V'(x)) + s(f'(x) - \widetilde V'(x)) = \widetilde V(x) + s(f'(x) - \widetilde V'(x)).\] Bringing $f$ to the other side establishes the bound
    \begin{equation} \label{eq:w_bound} 
    w(x) \le s w'(x).
    \end{equation}

    We now assume by way of contradiction that $w(x_0) > 0$ for some $x_0 \in \R$. First we consider the $\varepsilon = 1$ case. Then, by continuity, $w > 0$ on some maximal neighbourhood $(a,b)$ of $x_0$ so the bound \eqref{eq:w_bound} implies that $s  > 0$, so that $w'(x) \ge w(x)/s \ge w(x)/(2\varphi L_V) >0$ a.e.\ on this neighbourhood. This implies that the function $x \mapsto w(x)\exp(-\frac{x-x_0}{2\varphi L_V})$ is increasing on $[x_0,b)$ so that
    \begin{equation} \label{eq:w_lower_bound} 
    w(x) \ge w(x_0)\exp\bigg(\frac{x-x_0}{2\varphi L_V}\bigg), \qquad x_0 \le x < b.
    \end{equation}
    We see that we must have $b = \infty$; indeed, otherwise by continuity we would have $w(b) = 0$, which would contradict the bound \eqref{eq:w_lower_bound}. On the other hand, boundedness of $\widetilde V'$ guarantees that $\widetilde V$ has at most affine growth, so together with \eqref{eq:w_lower_bound} this implies that $f(x) \to -\infty$ as $x \to \infty$.  When $2\varphi - \lambda > 0$ this contradicts \eqref{eq:f_lower_bound}, while if $2\varphi = \lambda$ then this contradicts $f(x) = V(x + \lambda f'(x)) \ge V(0)$, which follows by nondecreasingness of $V$. 
    
    The case  $\varepsilon = -1$ is similar, with the preceding argument leading to an exponential lower bound 
    $w(x) \ge w(x_0) \exp(\frac{x_0-x}{2\varphi L_V})$ for $ a <  x \le x_0$. Sending $x \downarrow a \ge 0$ contradicts $w(a) = 0$. As such, we must have $w = \widetilde V - f \le 0$.
    
    It just remains to prove that $|\widetilde V'(x)|\le |f'(x)|$ for any common point of differentiability $x$. If $\widetilde V'(x) = 0$ there is nothing to prove so we assume by way of contradiction that $0 \le |f'(x)| < |\widetilde V'(x)|$. If $2\varphi - \lambda > 0$ then using \eqref{eq:implicit_ODE} and monotonicity of $V$ we have that
    \begin{align*} 
    0 \le f(x) - \widetilde V(x) & = V(x+\lambda f'(x)) - V(x + \lambda \widetilde V'(x)) + \tfrac{2\varphi-\lambda}{2}\Big(\big(f'(x)\big)^2 -  \big(\widetilde V'(x)\big)^2\Big)   \\
    & \le \frac{2\varphi-\lambda}{2}\Big(\big(f'(x)\big)^2 -  \big(\widetilde V'(x)\big)^2\Big),
    \end{align*}
    which is a contradiction. 
    
    Next we consider the case $2\varphi = \lambda$.  Using the bound $f(x) \ge \widetilde V(x)$ together with \eqref{eq:tildeV_solves_ODE} gives $V(x+\lambda \widetilde V'(x)) \le V(x+\lambda f'(x))$. However, since $ \varepsilon f'(x) \le |f'(x)| < |\widetilde V'(x)| = \varepsilon \widetilde V'(x)$,  we have that $V$ is constant on $[x + \lambda f'(x), \xi]$ (resp., $[\xi,x+\lambda f'(x)]$) if $\varepsilon = 1$ (resp., $\varepsilon = -1$), where $\xi = x + \lambda \widetilde V'(x) = \Xi^{-1}(x)$. Note that $\widetilde V'(x) \ne 0$, so that $P(\xi) \ne 0$ and hence, in the case $\varepsilon = -1$ we have $\xi > 0$ by Lemma~\ref{lem:ODE}. As $\varepsilon f'(x) < \varepsilon \widetilde V'(x)$, this interval is nondegenerate, so $V' = 0$ on a one-sided neighbourhood of $\xi$. Since $P(\xi) = \widetilde V'(x) \ne 0$ and $P$ is continuous, after possibly shrinking the neighbourhood we can assume that $P \ne 0$ on it. The ODE \eqref{eq:P_ODE} reads $2\varphi P' P = P$, from which we deduce that $P' = \frac{1}{2\varphi}$ at its points of differentiability on this interval. As $2\varphi = \lambda$ this gives $\Xi' = 1 - \lambda P' = 0$, so $\Xi$ is constant on a one-sided neighbourhood of $\xi$. But then $\Xi^{-1}(x)$ is not a singleton, contradicting Lemma~\ref{lem:ODE}, which guarantees that $\Xi^{-1}(x)$ is a single point at every point of differentiability of $\widetilde V$. This establishes $|f'(x)| \ge |\widetilde V'(x)|$ and completes the proof.
    \end{proof}

Next we verify that \eqref{eq:tildeVcall} is indeed the function $\widetilde V$ of Theorem~\ref{thm:terminal} in the case of a call option. To this end, it is easy to see that the function $P$ of Lemma~\ref{lem:ODE} is 
\[P(\xi) = \begin{cases}
0, & \xi < K- 2\varphi N, \\
N + \frac{\xi-K}{2\varphi}, & K - 2 \varphi N  \le \xi < K, \\
N, & \xi \ge K.
\end{cases}
\]
Indeed, this follows from the uniqueness statement of the lemma since this function is Lipschitz continuous, bounded and satisfies \eqref{eq:P_ODE}. A direct computation shows that $\Xi$ is strictly increasing everywhere for $2\varphi - \lambda > 0$ and strictly increasing outside of the region $(K-2\varphi N, K)$. on which it is constant for $2\varphi = \lambda$. Hence,
\[\Xi^{-1}(x) = \begin{cases}
    x, & x \le  K - 2\varphi N, \\
    \frac{2\varphi}{2\varphi-\lambda} x - \frac{\lambda}{2\varphi-\lambda}(K-2\varphi N), & K - 2\varphi N < x < K - \lambda N, \\
    x+\lambda N, & x \ge K - \lambda N.
\end{cases}\]
Note that the middle region is empty when $2\varphi = \lambda$. Finally substituting into \eqref{eq:tilde_V} and simplifying leads to
\eqref{eq:tildeVcall}.

\subsubsection{Proof of Theorem~\ref{thm:PDE}}

    	It will be convenient to define $\Sigma(t,x) = x\sigma(t,x)1_{\{x > 0\}}$ for $(t,x) \in [0,T] \times \R$ and $\widetilde \sigma(t,y) = \sigma(t,e^y)$ for all $(t,y) \in [0,T] \times \R$. We use the convention throughout that $V(x) = V(0)$ for $x < 0$. Assumption~\ref{ass:sigma} guarantees that $\Sigma(t,\cdot)$ is globally Lipschitz continuous uniformly in $t$ with Lipschitz  constant $L_\Sigma = \overline \sigma + C_\sigma$, since $\partial_x \Sigma(t,x) = \sigma(t,x) + \partial_y \widetilde \sigma(t,\log x)$.

	\begin{proof}  For $r = 0$ the PDE \eqref{eq:PDE} is linear and the solution explicit with $u = \widetilde u^{\mathrm{BS}}$, so the claims are verified with classical arguments. As such, we assume $r >0$ going forward. The proof first establishes uniqueness in step 1 and then existence in step 2. As part of the existence proof we also establish the stochastic representation \eqref{eq:u_stoch_rep} and the bounds \eqref{eq:u_bounds}. Throughout, we set $M = r\lambda L_V$.
		
		\paragraph{\underline{Step 1: Uniqueness.}}
		
		\paragraph{Step 1a: Comparison principle.} 
		For a feedback-form control characterized by a measurable function $a:[0,T] \times (0,\infty) \to \R$, consider the operator
		\begin{equation} \label{eq:Lcal_operator} \mathcal{L}^a w = - rw + \partial_t w + (rx + a)\partial_x w + \tfrac{1}{2}\sigma^2 x^2 \partial_{xx}w. 
		\end{equation}
		We assume $\varepsilon a(t,x) \in [0,M]$ and that $a^{-}(t,x) \le \Lambda' x$ for some $\Lambda' > 0$ and all $(t,x) \in [0,T] \times (0,\infty)$. We now establish the following comparison principle, which will be crucial in the proof of uniqueness: if $w \in C^{1,2}([0,T) \times (0,\infty)) \cap C([0,T] \times (0,\infty))$ has linear growth $|w(t,x)| \le C(1+x)$, and satisfies
		\begin{equation} \label{eq:w_comparison}
			\mathcal{L}^a w  \ge 0 \quad \text{on } [0,T) \times (0,\infty) \quad \text{and} \quad w(T,\cdot) \le 0, 
		\end{equation} 
		then $w \le 0$. In particular, by applying this to $w$ and $-w$ we see that if $\mathcal{L}^a w = 0$ and $w(T,\cdot) = 0$ then $w= 0$.
		
		To establish this, we consider the explicit function $\psi(t,x) = \eta e^{K(T-t)}(x^p + x^{-1})$ for some $\eta,K >0$ and $p > 1$. A direct computation shows that 
		\[\frac{\mathcal{L}^a\psi}{\eta e^{K(T-t)}} =  
		\big(-K + r(p-1) + \tfrac{1}{2}\sigma^2 p(p-1)\big)x^p + (\sigma^2 - K - 2r)x^{-1} + a(px^{p-1}-x^{-2}).\]
		We now write $a = a^+  - a^-$ and note that the four terms this produces can be bounded above as follows:
		\[a^+ px^{p-1} \le Mp x^{p-1}, \quad -a^- px^{p-1} \le 0, \quad -a^+x^{-2} \le 0, \quad a^- x^{-2} \le \Lambda' x^{-1}.\]
		This leads to the estimate \[\frac{\mathcal{L}^a \psi}{\eta e^{K(T-t)}} \le c_1 x^p + Mp x^{p-1}  + c_2 x^{-1},\]
		where $c_1 =  -K + r(p-1) + \tfrac{1}{2}\overline \sigma^2 p(p-1)$ and $c_2  =  \overline \sigma^2 - K - 2r + \Lambda'$. For $K > Mp + \Lambda' + r(p-1) + \frac{1}{2}\overline \sigma^2 p(p-1) + \overline \sigma^2$, we have $\mathcal{L}^a \psi \le 0$. Indeed, for $x \ge 1$, $c_1x^p + Mpx^{p-1}  \le x^{p-1}(c_1 + Mp) < 0$ and $c_2x^{-1} \le 0$, whereas for $x < 1$, we have $c_1x^p \le 0$ and $Mpx^{p-1} + c_2x^{-1} \le (Mp + c_2)x^{-1} < 0$. As such, for $w$ satisfying \eqref{eq:w_comparison} we have that $\mathcal{L}^a(w - \psi) \ge 0$.
		
		Consider the compact domain  $[0,T] \times [1/m,m]$  for any $m > 1$. Suppose by way of contradiction that $w-\psi$ attains a positive maximum. If it happens on the parabolic interior of the domain; i.e., at some point $(t_0,x_0) \in [0,T) \times  (1/m,m)$, then at that point we have $\partial_x (w-\psi) = 0$, $\partial_{xx}(w-\psi) \le 0$ and $\partial_t(w-\psi) \le 0$, which leads to $\mathcal{L}^a (w-\psi)(t_0,x_0) \le - r(w(t_0,x_0) - \psi(t_0,x_0))  < 0$, contradicting the previously established nonnegativity of $\mathcal{L}^a (w-\psi)$. The maximum also cannot be achieved at $t_0 = T$ because $w \le 0$ there, so that $w-\psi < 0$. Hence, it must occur at either $x = 1/m$ or $x = m$. For the spatial boundary points we have that $\psi(\cdot, m)  \ge \eta m^p$ and $\psi(\cdot,1/m) \ge \eta m$. Since $w$ satisfies $|w(t,x)| \le C(1+x)$, and consequently $|w(t,x)| \le 2C$ on $\{x \le 1\}$, we see that for $m$ large enough we have  $w-\psi < 0$ on the spatial boundary; a contradiction. This establishes that $w \le \psi$ on $[0,T] \times [1/m,m]$. Sending $m\to \infty$ and  then $\eta \downarrow 0$ we establish $w \le 0$ on $[0,T] \times (0,\infty)$.

		\paragraph{Step 1b: Uniqueness in the stated class.} We now establish that the class of functions
		\begin{equation}\label{eq:Ucal}
			\mathcal{U} = \left \{\begin{aligned} u \in C^{1,2}([0,T) \times (0,\infty)) \cap C([0,T] \times (0,\infty)): u \text{ solves } \eqref{eq:PDE}, \,
				u(T,\cdot) = \widetilde V,\\  0 \le \varepsilon\partial_x u \le L_V \text{ and } \text{there exists } \Lambda > 0 \text{ s.t.\ }   (\partial_x u)^{-} \le \Lambda x \end{aligned} \right\}
		\end{equation}
		has at most one element. To this end, suppose $u_1,u_2 \in \mathcal{U}$ are given and define $w = u_1-u_2$. Then set $a(t,x) = \frac{r\lambda}{2}(\partial_x u_1(t,x) + \partial_x u_2(t,x))$ and note that $\varepsilon a(t,x) \in [0,M]$ and $a^{-}(t,x) \le \frac{r\lambda}{2} (\Lambda_1 + \Lambda_2) x$ by the derivative bounds, where $\Lambda_i$ is the constant in the definition of \eqref{eq:Ucal} associated with $u_i$.  A direct calculation shows that $\mathcal{L}^a w = 0$ and $w(T,\cdot) = 0$. Since $a$ satisfies the hypothesis of the comparison bound of the previous paragraph, and $w$ has linear growth (because it is continuous and $|\partial_x w| \le 2L_V$), we obtain $u_1 = u_2$, which shows uniqueness.

			\paragraph{\underline{Step 2: Existence.} }  We now work on establishing that the function $u$ of \eqref{eq:u_stoch_rep} solves \eqref{eq:PDE}, and satisfies the bounds \eqref{eq:dxu_bounds} and \eqref{eq:u_bounds}.

			\paragraph{Step 2a: The bounds (\ref{eq:u_bounds}).} 
			We start by establishing \eqref{eq:u_bounds}. The lower bounds $u(t,x) \ge \widetilde u^{\mathrm{BS}}(t,x) \ge u^{\mathrm{BS}}(t,x)$ are immediate by taking $\alpha = 0$ and recalling $\widetilde V \ge V$. For the upper bound, let $\alpha \in \mathcal{A}$ be given and note that because $\varepsilon \alpha \ge 0$, we have by comparison that $\varepsilon (S_T^\alpha - S_T) \ge 0$. It then follows from the dynamics of  $e^{-r(s-t)}(S_s^\alpha - S_s)$ that we have $e^{-r(T-t)}\E_{t,x}[S^\alpha_T- S_T] = \E_{t,x}[\int_t^T e^{-r(s-t)}\alpha_s \d s]$. As such, the Lipschitz property of $\widetilde V$ yields 
			\[\widetilde V(S_T^\alpha) - \widetilde V(S_T) \le \varepsilon L_V(S_T^\alpha - S_T). \] It then follows that
			\begin{align*}
				J(t,x,\alpha) - \widetilde u^{\mathrm{BS}}(t,x) \le \E_{t,x}\bigg[\int_t^T e^{-r(s-t)}\bigg(\varepsilon L_V \alpha_s - \frac{\alpha_s^2}{2r\lambda}\bigg)\d s\bigg] & \le \frac{r\lambda L_V^2}{2}\int_t^T e^{-r(s-t)}\d s \\
				& = \frac{\lambda L_V^2}{2}(1-e^{-r(T-t)}), 
			\end{align*}
			where we maximized the quadratic function $z \mapsto \varepsilon L_V z - \frac{z^2}{2r\lambda}$ in the penultimate step. Now taking supremum over $\alpha \in \mathcal{A}$ yields the upper bound in \eqref{eq:u_bounds}. In particular $u$ is finite and has linear growth in $x$, since $\widetilde u^{\mathrm{BS}}$ does as a consequence of $\widetilde V$ having linear growth and the identity $\E_{t,x}[S_T] = xe^{r(T-t)}$.

\paragraph{Step 2b: A constrained control problem.}  Define
    \[ \Lambda_u = \tfrac{1}{\lambda} e^{\frac{1}{2}(2r + \overline \sigma^2 + L^2_\Sigma)T}, \qquad \rho = r\lambda \Lambda_u, \qquad A(x) = \{a \in \R: \varepsilon a \in [0,M] \text{ and } a \ge -\rho x\},\]
set $\mathcal{A}_\rho(t,x)$ to be all controls $\alpha \in \mathcal{A}$ with $\alpha_s \in A(S^\alpha_s)$ for every $s$, and define the restricted value function
\[u_\rho(t,x) = \sup_{\alpha \in \mathcal{A}_\rho(t,x)} J(t,x,\alpha) \le u(t,x).\]
For $\varepsilon = 1$, it is clear that $\mathcal{A}_\rho(t,x) = \mathcal{A}$, while for $\varepsilon = -1$, there is a genuine restriction.
The reason for introducing $\mathcal{A}_\rho(t,x)$ is that for any control $\alpha$ in that set, we have that $rS^\alpha + \alpha \ge (r-\rho)S^\alpha$ so that by comparison, $S^\alpha$ is a strictly positive process. The final step in the proof will establish that actually $u_\rho = u$, so that this constraint is not restrictive.

\paragraph{Step 2c: Showing $u_\rho$ is Lipschitz and monotone.}
Next, we establish that $u_\rho$ is monotone and uniformly Lipschitz in $x$. For clarity we will write $S^{t,x,\alpha}$ to emphasize the initial conditions. Now fix $x' > x> 0$, $\alpha \in \mathcal{A}_\rho(t,x)$ and define $D = S^{t,x',\alpha} - S^{t,x,\alpha}$. We then see that $D$ has dynamics
\[\d D_s = rD_s \d s + \gamma_s D_s \d W_s, \qquad D_t = x'-x > 0,\]
where $\gamma_s =( \Sigma(s,S^{t,x',\alpha}_s)  - \Sigma(s,S^{t,x,\alpha}_s))/D_s$. By Assumption~\ref{ass:sigma} we have that  $|\gamma_s| \le L_\Sigma$. As such, we get the representation
\[D_s = (x'-x)e^{r(s-t)}\mathcal{E}_s, \quad \text{where} \quad \mathcal{E}_s = \exp\bigg( \int_t^s \gamma_u \d W_u - \tfrac 12 \int_t^s \gamma_u^2 \d u\bigg)\]
is a martingale, guaranteed by boundedness of $\gamma$. As such, $D$ is positive and because $\alpha \in \mathcal{A}_\rho(t,x)$ we see that $\alpha_s \ge -\rho S_s^{t,x,\alpha} \ge -\rho S_s^{t,x',\alpha}$ which shows that $\mathcal{A}_\rho(t,x) \subset \mathcal{A}_\rho(t,x')$.

   Next, we compute that 
\begin{equation} \label{eq:J_diff} 
	\varepsilon\big(J(t,x',\alpha) - J(t,x,\alpha)\big) = \varepsilon e^{-r(T-t)}\E[\widetilde V(S_T^{t,x',\alpha}) - \widetilde V(S_T^{t,x,\alpha})].
\end{equation}  By the Lipschitz property we obtain
\[\varepsilon \big(J(t,x',\alpha) - J(t,x,\alpha)\big) \le  e^{-r(T-t)}  L_V \E[D_T]  = L_V(x'-x),\]
where in the final step we used that $\E[D_T] = (x'-x)e^{r(T-t)}$, which follows by martingality of $\mathcal{E}$.  We now take supremum over $\alpha \in \mathcal{A}_\rho(t,x)$ to obtain
\[\varepsilon \big(u_\rho(t,x') - u_\rho(t,x)\big) \le L_V(x'-x),\]
which is immediate for $\varepsilon = 1$ because $\mathcal{A}_\rho(t,x) = \mathcal{A}_\rho(t,x')$, while for $\varepsilon = -1$ we use the inclusion $\mathcal{A}_\rho(t,x) \subset \mathcal{A}_\rho(t,x')$ to conclude.  
Next, from \eqref{eq:J_diff} when $\varepsilon  = 1$ we obtain $\varepsilon (u_\rho(t,x') - u_\rho(t,x))  \ge 0$ by nondecreasingness of $\widetilde V$ and the fact that $D_T > 0$. For $\varepsilon= -1$, we use a comparison argument. Given $\alpha' \in \mathcal{A}_\rho(t,x')$ set $\beta_s = \alpha_s'/S_s^{t,x',\alpha'} \in [-\rho,0]$ and let  $S^{t,x,\alpha}$ be the solution of the SDE $dZ_s = (r+ \beta_s)Z_sds + \Sigma(s,Z_s)dW_s$ with $Z_t = x$ and $\alpha_s = \beta_s S_s^{t,x,\alpha}$. Note that $S^{t,x',\alpha'}$ also solves this SDE, but with initial condition $x'$.  As such, by comparison we have that $S^{t,x',\alpha'} \ge S^{t,x,\alpha}$. From this observation and from the bounds on $\beta_s$ it follows that $\alpha \in \mathcal{A}_\rho(t,x)$. Since $\widetilde  V$ is nonincreasing  and $\alpha^2 \le (\alpha')^2$ we have that $J(t,x,\alpha) \ge J(t,x',\alpha')$, which shows that $u_\rho(t,x) \ge u_\rho(t,x')$ in this case. In summary, we have shown that
\begin{equation} \label{eq:u_rho_bounds}
	0 \le \varepsilon \big(u_\rho(t,x') - u_\rho(t,x)\big) \le L_V(x'-x).
\end{equation}
By \eqref{eq:u_rho_bounds} and \eqref{eq:u_bounds}, $u_\rho(t,\cdot)$ is monotone and bounded near zero, so we can extend it via $u_\rho(t,0) = \lim_{x \downarrow 0} u_\rho(t,x)$.

\paragraph{Step 2d: Establishing (\ref{eq:dxu_bounds}) for $u_\rho$.}
At every point of differentiability of $u_\rho(t,\cdot)$, dividing the estimate \eqref{eq:u_rho_bounds} by $x'-x$ and sending $x' \downarrow x$ shows  $0 \le \varepsilon \partial_x u_\rho(t,x) \le L_V$.  The second estimate in \eqref{eq:dxu_bounds} is automatic for $\varepsilon = 1$ so we focus on $\varepsilon = -1$. Here, for $x' > x > 0$ and for a control $\alpha \in \mathcal{A}_\rho(t,x)$ we see that  
\begin{equation} \label{eq:V_diff}
	\widetilde V(S_T^{t,x',\alpha}) - \widetilde V(S_T^{t,x,\alpha})  = \int_{S_T^{t,x,\alpha}}^{S_T^{t,x',\alpha}}\widetilde V'(z)\d z \ge - \tfrac{1}{\lambda}(S_T^{t,x',\alpha})^+D_T.
\end{equation}
Using that $\alpha \le 0$, SDE comparison and standard $L^2$ estimates yield
\[\|(S_T^{t,x',\alpha})^+\|_2 \le  
\|S_T^{t,x',0}\|_2  \le x'\exp\big(\tfrac{1}{2}(2r + \overline \sigma^2)(T-t)\big), \qquad \|D_T\|_2 \le (x'-x)\exp\big(\tfrac{1}{2}(2r + L_\Sigma^2)(T-t)\big).
\]
Hence, from \eqref{eq:J_diff}, \eqref{eq:V_diff} and Cauchy--Schwarz we obtain
\[J(t,x,\alpha) \le J(t,x',\alpha) + x'(x'-x) \Lambda_u.\]
Taking supremum over $\alpha \in \mathcal{A}_\rho(t,x)$ and using the inclusion $\mathcal{A}_\rho(t,x) \subset \mathcal{A}_\rho(t,x')$ again yields $u_\rho(t,x) \le u_\rho(t,x') + \Lambda_u x'(x'-x)$. Dividing by $x'-x$ and sending $x' \downarrow x$ then establishes the second inequality in \eqref{eq:dxu_bounds} at every point of differentiability of $u_\rho(t,\cdot)$.

 \paragraph{Step 2e: Policy iteration.} We start by defining the Hamiltonian $H:(0,\infty) \times \R \to \R$ via
 \begin{equation} \label{eq:Hamiltonian}
 	H(x,p) = \sup_{a \in A(x)}\{ap - \tfrac{a^2}{2r\lambda}\}.
 \end{equation}
 It is easy to see that $\partial_p H(x,p)$ is the projection of $r\lambda p$ onto $A(x)$; in particular,  $H(x,p) = \frac{r\lambda}{2}p^2$ and $\partial_p H(x,p) = r\lambda p$ whenever $r\lambda p \in A(x)$. More generally, we have that 
 \begin{equation} \label{eq:projection} \partial_p H(x,p) = \Pi_{A(x)}(r\lambda p)=
 \begin{cases}
 	\ \ \big(r\lambda p^{+}\big)\wedge M, & \text{if }\varepsilon=+1, \\
 	-\big((r\lambda p^{-})\wedge M\wedge \rho x\big), & \text{if }\varepsilon=-1.
 \end{cases}
 \end{equation}
 We now iteratively define the feedback-form controls $a_n$ and functions $u_{n+1}$ via $a_0 = 0$,
 \begin{equation} \label{eq:u_recursion} 
 	\mathcal{L}^{a_n} u_{n+1} =  \frac{a_n^2}{2r\lambda}, \quad u_{n+1}(T,\cdot) = \widetilde V, \qquad a_{n+1}(t,x) = \Pi_{A(x)}\big(r\lambda \partial_x u_{n+1}(t,x)\big),
 \end{equation} 
 where $\mathcal{L}^{a_n}$ is given by \eqref{eq:Lcal_operator}.
 Formally, the function $u_{n+1}$ solves a linear PDE resembling \eqref{eq:PDE}, except that the nonlinear term $(\partial_x u)^2$ is replaced by the previously defined control $a_n^2$. 
 
 We inductively show that this is well-defined. When $a_0 = 0$, the PDE for $u_1$ is simply the Black--Scholes PDE with terminal condition $\widetilde V$, which is well known to have unique solution $u_1 = \widetilde u^{\mathrm{BS}} \in C^{1,2}([0,T) \times (0,\infty))$. Next, fix $n$ for the inductive step  and note that by construction $a_{n}(t,x) \in A(x)$.
We define 
\[u_{n+1}(t,x) = J(t,x,\alpha^n), \quad \text{where} \quad \alpha_s^n = a_n(s,S_s^{\alpha^n}). \]
By standard interior regularity, $\partial_x u_n$ is locally H\"older continuous, and hence so is $a_n$, since $(x,p) \mapsto \Pi_{A(x)}(r\lambda p)$ is Lipschitz  
by \eqref{eq:projection}.
Making the change of variables $y = \log x$ and localizing to compact sets, the equation for $u_{n+1}$ transforms into a uniformly parabolic Cauchy problem with H\"older continuous coefficients. Approximating the Lipschitz terminal condition $\widetilde V$ by smooth functions, the classical result \cite[Theorem~IV.5.1]{ladyzhenskaya1968linear}, together with the Feynman--Kac formula and standard interior estimates, yields
 a unique linear-growth solution $u_{n+1} \in C^{1,2}([0,T) \times (0,\infty))$ to the PDE \eqref{eq:u_recursion}. 
 
Additionally $u_{n+1}$ has linear growth and $u_{n+1}(t,x) \le u_\rho(t,x)$ for all $n$, since $\alpha^n \in \mathcal{A}_\rho(t,x)$. 
Next, note that 
\[\mathcal{L}^{a_n}u_n - \frac{a_n^2}{2r\lambda} = (a_n-a_{n-1})\partial_x u_n +  \mathcal{L}^{a_{n-1}}u_n - \frac{a_n^2}{2r\lambda}  = a_n \partial_x u_n - \frac{a_n^2}{2r\lambda}  - \bigg(a_{n-1}\partial_x u_n  - \frac{a_{n-1}^2}{2r\lambda}\bigg) \ge 0,
 \] 
 where the last inequality is due to the fact that $a_n$ given by \eqref{eq:projection} maximizes the Hamiltonian \eqref{eq:Hamiltonian}. Since $\frac{a_n^2}{2r\lambda} = \mathcal{L}^{a_n}u_{n+1}$ we see that $\mathcal{L}^{a_n}(u_n - u_{n+1}) \ge 0$. Since $u_n(T,\cdot) - u_{n+1}(T,\cdot) = 0$ we see from the comparison principle that $u_n \le u_{n+1}$. As such, $u_\infty = \lim_{n \to \infty} u_n$ exists, has linear growth and satisfies $\widetilde u^{\mathrm{BS}} \le u_\infty \le u_\rho$.
 
 To establish that $u_\infty$ solves a limiting PDE, we need to obtain uniform in $n$ estimates on $a_n$, $u_n$ and its derivatives.  Since $\|a_n\|_\infty \le M$ and the bounds $\widetilde u^{\mathrm{BS}} \le u_n \le u_\rho$ hold uniformly in $n$, standard interior estimates for linear parabolic equations (see \cite[Ch~IV, Theorem~9.1 and Corollary~9.2]{ladyzhenskaya1968linear}) yield for each compact $K \subset [0,T) \times (0,\infty)$ and each $q > 3$, uniform in $n$ Sobolev bounds for $\|u_n\|_{W^{1,2}_q(K)}$ and uniform in $n$ H\"older bounds on $\|\partial_x u_n\|_{C^{\delta/2,\delta}(K)}$ for $\delta = 1 - 3/q\in (0,1)$ (here $W^{1,2}_q$ denotes the Sobolev space corresponding to one derivative in time, two in space and $L^q$ integrability of the function and its derivatives). As such, by taking the limit we see that $a_n \to a_\infty =  \Pi_{A(x)}(r\lambda \partial_x u_\infty)$ locally uniformly, and $u_\infty \in W^{1,2}_{q,\mathrm{loc}}([0,T) \times (0,\infty))$ solves the limiting version of \eqref{eq:u_recursion} given by $\mathcal{L}^{a_\infty}u_\infty = \frac{a^2_\infty}{2r\lambda}$. This is a linear parabolic equation with H\"older coefficients and right-hand side, so standard interior Schauder estimates guarantee that $u_\infty \in C^{1,2}([0,T) \times (0,\infty))$.
  Moreover, the bounds $\widetilde u^{\mathrm{BS}} \le u_\infty \le u_\rho \le u$ and the estimates \eqref{eq:u_bounds} imply that $u_\infty(t,x) \to \widetilde u^{\mathrm{BS}}(T,x) = \widetilde V(x)$ locally uniformly as $t \uparrow T$, so $u_\infty \in C([0,T] \times (0,\infty))$ with the correct terminal condition.

 \paragraph{Step 2f: Identification of the limit.}
 
 We now demonstrate that actually $u_\infty = u_\rho$. We have already shown that $u_\infty \le u_\rho$, so it just remains to show the reverse inequality. To this end, we fix $\alpha \in \mathcal{A}_\rho(t,x)$ and apply It\^o's formula to $e^{-r(s-t)}u_\infty(s,S_s^\alpha)$ leading to
 \begin{align*}
 	\d\big(e^{-r(s-t)}u_\infty(s,S_s^\alpha)\big) & =  e^{-r(s-t)}\big(-ru_\infty + \partial_t u_\infty + (rS_s^\alpha + \alpha_s)\partial_xu_\infty + \frac{1}{2}\sigma^2(S_s^\alpha)^2\partial_{xx}u_\infty\big)(s,S_s^\alpha)\d s  \\
 	& \qquad + e^{-r(s-t)}\partial_x u_\infty(s,S_s^\alpha)\Sigma(s,S_s^\alpha)\d W_s   \\
 	& =e^{-r(s-t)} \Big(\mathcal{L}^{a_\infty} u_\infty(s,S_s^\alpha) + \big(\alpha_s - a_\infty(s,S_s^\alpha)\big) \partial_x u_\infty(s,S_s^\alpha)\Big)\d s \\
 	& \qquad + e^{-r(s-t)}\partial_x u_\infty(s,S_s^\alpha)\Sigma(s,S_s^\alpha)\d W_s  \\
 	& = e^{-r(s-t)}\Big(\alpha_s\partial_x u_\infty(s,S_s^\alpha) - H\big(S_s^\alpha,\partial_x u_\infty(s,S_s^\alpha)\big)\Big)\d s \\
 	& \qquad+ e^{-r(s-t)}\partial_xu_\infty(s,S_s^\alpha)\Sigma(s,S_s^\alpha)\d W_s ,
 	\end{align*}  
 	where we used that $\mathcal{L}^{a_\infty}u_\infty = \frac{a_\infty^2}{2r\lambda}$ and that $a_\infty(s,x)$ achieves the maximum in the Hamiltonian at $p = \partial_x u_\infty(s,x)$. Since $\alpha_s\in A(S_s^\alpha)$, we can bound the drift from above by $e^{-r(s-t)}\frac{\alpha_s^2}{2r\lambda} $ using the definition of the Hamiltonian in \eqref{eq:Hamiltonian}. Next, defining the localizing sequence of stopping times  $\tau_m = \inf\{s \ge t: S_s^\alpha \not \in (1/m,m)\} \land T-\frac{1}{m}$ we  have
 	\[\E[e^{-r(\tau_m - t)}u_\infty( \tau_m,S_{\tau_m}^\alpha)] \le u_\infty(t,x) +\E\bigg[ \int_t^{\tau_m} e^{-r(s-t)}\frac{\alpha_s^2}{2r\lambda}\d s\bigg].\] 
 	Since $S^\alpha$ is nonexplosive, does not hit zero, has finite second moments, and $u_\infty$ has linear growth we can send $m \to \infty$ and use the terminal condition $u_\infty(T,\cdot) = \widetilde V$ to obtain $u_\infty(t,x) \ge J(t,x,\alpha)$. Taking supremum over $\alpha$ now yields $u_\infty(t,x) \ge u_\rho(t,x)$ and shows that $u_\infty = u_\rho$. Since $u_\rho$ satisfies \eqref{eq:dxu_bounds}, we have that $0 \le \varepsilon \partial_x u_\infty = \varepsilon \partial_x u_\rho \le L_V$ and $(\partial_x u_\infty)^{-} \le \Lambda_u x$, from which we deduce that $ r\lambda \partial_x u_\infty(t,x) \in A(x)$, so that $a_\infty(t,x) = r\lambda \partial_x u_\infty(t,x) $, and $H(x,\partial_x u_\infty) =\frac{r\lambda}{2}(\partial_x u_\infty)^2 $. As such, the PDE that $u_\infty$ solves is \eqref{eq:PDE}. 
 	
 	\paragraph{Step 2g: Removing the constraint.} 
 	The next step is to remove the constrained set  $\mathcal{A}_\rho$; that is, to show $u_\rho = u$. Clearly $u_\rho \le u$, so we again just need to show the reverse inequality. Taking $\alpha \in \mathcal{A}$, repeating the It\^o formula computations for $u_\infty = u_\rho$ of the previous paragraph, and using the expression $H(S_s^\alpha,\partial_x u_\rho) = \frac{r\lambda}{2}(\partial_x u_\rho(s,S_s^\alpha))^2$ leads to
 	\begin{align*}
 		d\big(e^{-r(s-t)}u_\rho(s,S_s^\alpha)\big)   & = e^{-r(s-t)}\Big(\alpha_s\partial_x u_\rho(s,S_s^\alpha) - \tfrac{r\lambda}{2}\big(\partial_x u_\rho(s,S_s^\alpha)\big)^2\Big)\d s \\
 		& \qquad + e^{-r(s-t)}\partial_xu_\rho(s,S_s^\alpha)\Sigma(s,S_s^\alpha)\d W_s ,\\
 		& \le e^{-r(s-t)}\tfrac{\alpha_s^2}{2r\lambda}\d s   + e^{-r(s-t)}\partial_xu_\rho(s,S_s^\alpha)\Sigma(s,S_s^\alpha)\d W_s,
 	\end{align*}
 	where we used the fact that $ap - \frac{r\lambda}{2}p^2 \le \frac{a^2}{2r\lambda}$ for every $a,p \in \R$.
 	For $\varepsilon = 1$, the state process $S^\alpha$ remains positive so the previous localization argument applies here verbatim and establishes $u \le u_\rho$. For $\varepsilon = -1$, we localize to the time $\tau_m$ of step 2f, take expectation and send $m \to \infty$ to obtain
 	\begin{equation} \label{eq:u_rho_lower_bound}
 		u_\rho(t,x) \ge \E\bigg[e^{-r(T\land \tau_0 - t)}u_\rho(T \land \tau_0,S^\alpha_{T \land \tau_0}) - \int_t^{T \land \tau_0}e^{-r(s-t)}\frac{\alpha_s^2}{2r\lambda}\d s\bigg], 
 		\end{equation}
    where $\tau_0 = \inf\{s \ge t: S_s^\alpha = 0\}$.
 	On the set $\{\tau_0 < T\}$ we see  that \[u_\rho(T\land \tau_0, S^\alpha_{T \land \tau_0}) = u_\rho(\tau_0,0+) \ge \widetilde u^{\mathrm{BS}}(\tau_0,0+) = e^{-r(T-\tau_0)}\widetilde V(0).\]
 	Moreover, the drift $rS_s^\alpha + \alpha_s$ is nonpositive once $S_s^\alpha \le 0$, and since $\Sigma(s,S_s^\alpha) = 0$ for $S_s^\alpha \le 0$, we have that $S_T^\alpha \le 0$; i.e., once $S^\alpha$ becomes nonpositive it stays nonpositive. By constancy of $\widetilde V$ on the negative half-line we deduce that $\widetilde V(S_T^\alpha) = \widetilde V(0)$.
 	Additionally the integrand appearing in the running cost is nonnegative, so from these observations and the estimate \eqref{eq:u_rho_lower_bound} we obtain $u_\rho(t,x) \ge  J(t,x,\alpha)$ for every $\alpha \in \mathcal{A}$. Taking supremum now shows that $u_\rho = u$.

    \paragraph{Step 2h: convergence of the strategy at maturity.}
    Finally, we  establish that $\partial_xu(t,S_t)\to\widetilde V'(S_T)$ almost surely as $t\uparrow T$. Set $z=u-\widetilde u^{BS}$ and define the operator $\mathcal L_t=rx\partial_x+\frac12\sigma^2(t,x)x^2\partial_{xx}$. From \eqref{eq:dxu_bounds}–\eqref{eq:u_bounds}, we have
\[
(\partial_t+\mathcal L_t-r)z=-\tfrac{r\lambda}{2}(\partial_xu)^2,
\qquad
0\le z(t,x)\le \tfrac{r\lambda L_V^2}{2}(T-t).
\]
Since the right-hand side is uniformly bounded, the local parabolic $W^{1,2}_p$ estimate \cite[Theorem~7.22]{lieberman1996second}, followed by parabolic Sobolev embedding with $p>3$, yields on every compact interval $K\subset(0,\infty)$,
\begin{equation} \label{eq:partial_z_estimate}
\sup_{x\in K}|\partial_xz(t,x)|\le C_K\sqrt{T-t}
\end{equation}
for a constant $C_K  > 0$ and for all $t$ sufficiently close to $T$.

By Lemma~\ref{lem:ODE}, $\widetilde V'$ has a bounded representative $P\circ\Xi^{-1}$ that is continuous at every point where $\Xi^{-1}$ is a singleton. The discontinuity set $D$ is at most countable, since $\Xi$ is continuous and nondecreasing. Fix a continuity point $x_0$ of the bounded representative of $\widetilde V'$. Given $\epsilon>0$, choose an interval $I$ around $x_0$ on which $|\widetilde V'(x)-\widetilde V'(x_0)|\le\epsilon$ for a.e.\ $x \in I$. For $x$ sufficiently close to $x_0$ and sufficiently small $h>0$, absolute continuity of $\widetilde V$ and Cauchy–Schwarz give for some $C_I > 0$ the bound
$$
\left|
\frac{\widetilde u^{BS}(t,x+h)-\widetilde u^{BS}(t,x)}{h}
-\widetilde V'(x_0)
\right|
\le \epsilon+C_I\sqrt{T-t}.
$$
Indeed, the Feynman--Kac representation gives $\widetilde u^{\mathrm{BS}}(t,x) = \E[e^{-r(T-t)}\widetilde V(S_T)|S_t=x]$, so we can start the fundamental price process $S$ at the initial values $S_t^x = x$ and $S_t^{x+h} = x+h$ driven by the same Brownian motion. The quotient $(S_T^{x+h}-S_T^x)/h$ is nonnegative, has expectation
$e^{r(T-t)}$, and has uniformly bounded second moment by the
Lipschitz property of $\Sigma$. Moreover, for $x$ and $x+h$ in a
fixed smaller interval compactly contained in $I$, the usual
second-moment estimate for diffusion increments gives
$\mathbb Q(S_T^x\notin I\text{ or }S_T^{x+h}\notin I)
\le C_I(T-t)$.
Absolute continuity of $\widetilde V$ and Cauchy--Schwarz then
establish the displayed bound.
Sending $h\downarrow0$, then $(t,x)\to(T,x_0)$  and finally $\epsilon\downarrow0$, proves the claimed joint limit
\[
\lim_{\substack{t\uparrow T\\y\to x}}
\partial_x\widetilde u^{\mathrm{BS}}(t,y)=\widetilde V'(x),
\qquad x\notin D.
\]
Combining this with the estimate \eqref{eq:partial_z_estimate} for $\partial_xz$, the continuity of $S$, and the fact that $S_T$ has a density proves the asserted almost-sure convergence.
  \end{proof}

  \begin{proof}[Proof of Corollary~\ref{cor:replication}] We just need to check that the assumptions of Theorem~\ref{thm:verification} are satisfied.
The function $u$ of Theorem~\ref{thm:PDE} satisfies item \ref{item:C12} by construction and item \ref{item:absolute_continuity} by Theorem~\ref{thm:terminal}, since $u(T,\cdot) = \widetilde V$. Item \ref{item:polynomial} follows from the bound
$|\partial_x u| \le L_V$ of \eqref{eq:dxu_bounds}, and item \ref{item:T_continuity} follows from the convergence conclusion of Theorem~\ref{thm:PDE}\ref{item:stochastic_rep}.
Theorem~\ref{thm:verification} then gives the claim, with
$\val_0^* = u(0,S_0)$ as the quadratic term in \eqref{eq:initial_cash} vanishes.
\end{proof}

\bibliographystyle{abbrvnat}
\bibliography{references}
\end{document}